\pdfoutput=1
\RequirePackage[T1]{fontenc}
\RequirePackage{fix-cm}
\RequirePackage{mlmodern}

\documentclass[11pt,a4paper,english]{amsart}
\usepackage[svgnames]{xcolor}
\usepackage{babel}
\usepackage[latin1]{inputenc}
\usepackage{amsmath}
\usepackage{amsthm}
\usepackage{amsfonts}
\usepackage{indentfirst}
\usepackage{graphicx}
\usepackage{amssymb}
\usepackage[mathscr]{eucal}
\usepackage{tikz}
\usepackage{caption}
\usepackage{subcaption} 

\usepackage{booktabs}
\usepackage{array}
\usepackage[normalem]{ulem}

\usepackage{enumerate}
\usepackage{xspace}
\usepackage{xfrac}
\usepackage{faktor}
\usepackage{relsize}
\usepackage{float}
\usepackage{multirow}
\usepackage{ifsym}
\usepackage{changepage}

\usepackage[colorlinks=true,citecolor=black,linkcolor=black,urlcolor=blue]{hyperref}
\usepackage{fullpage,epsf}
\usepackage{amscd}

\usepackage{rotating}

\usepackage{adjustbox}

\usepackage{pdflscape}

\newtheorem{teo}{Theorem}

\newtheorem{pro}[teo]{Proposition}
\newtheorem{cor}[teo]{Corollary}
\newtheorem{lem}[teo]{Lemma}

\theoremstyle{definition}
\newtheorem{rem}[teo]{Remark}
\newtheorem{de}[teo]{Definition}

\newcommand{\F}{{\mathbb{F}}}

\newcommand{\Z}{{\mathbb Z}}

\newcommand{\ket}[1]{|#1 \rangle}
\newcommand{\bra}[1]{\langle #1 |}

\DeclareMathOperator{\dis}{d}
\DeclareMathOperator{\ev}{ev}

\newcommand{\Co}{\mathcal{C}}

\newcommand{\Su}{\mathcal{S}}

\title[Quantum $(r,\delta)$-Locally Recoverable Codes]{Quantum $(r,\delta)$-Locally Recoverable BCH and Homothetic-BCH Codes}
\author[C. Galindo, F. Hernando and R. Matsumoto]{Carlos Galindo, Fernando Hernando and Ryutaroh Matsumoto}
\curraddr{\texttt{Carlos Galindo and Fernando Hernando:} Instituto
Universitario de Matem\'aticas y Aplicaciones de Castell\'on and
Departamento de Matem\'aticas, Universitat Jaume I, Campus de Riu
Sec. 12071 Castell\'{o} (Spain)\\
\texttt{Ryutaroh Matsumoto:} Department of Information and Communications Engineering, Institute of Science Tokyo, Japan.}

\email{{\rm Galindo: galindo@uji.es; {\rm Hernando:} carrillf@uji.es; {\rm Matsumoto:} ryutaroh@ict.e.titech.ac.jp}}

\urladdr{{\rm Galindo: 0000-0002-3908-4462; {\rm Hernando:} 0000-0002-9758-2152; {\rm Matsumoto:} 0000-0002-5085-8879}\color{black}}
\subjclass[2020]{81P45; 94B65; 11T71}
\keywords{Quantum local recovery,  stabilizer quantum codes, quantum BCH codes}

\thanks{The first two authors were partially funded by MCICIU/AEI/10.13039/501100011033 and by ``ERDF, UE'' (grant PID2022-138906NB-C22). The third author was partially supported by ``Japan Society of the Promotion of the Science'' (grant 23K10980). }

\date{26 July 2026 (version 2); 30 January 2026 (version 1)}

\begin{document}

\begin{abstract}
Quantum $(r,\delta)$-locally recoverable codes ($(r,\delta)$-LRCs) are the quantum version of classical $(r,\delta)$-LRCs designed to recover multiple failures in large-scale distributed and cloud storage systems. A quantum $(r,\delta)$-LRC, $Q(C)$, can be constructed from an $(r,\delta)$-LRC, $C$, which is Euclidean or Hermitian dual-containing.

This article is devoted to studying how to get quantum $(r,\delta)$-LRCs from BCH and homothetic-BCH codes. As a consequence, we give pure quantum $(r,\delta)$-LRCs which are optimal for the Singleton-like bound.
\end{abstract}

\maketitle

\section{Introduction}\label{se:uno}
Established between 1959 and 1960, Bose-Chaudhuri-Hocquenghem (BCH) codes represent a fundamental class of cyclic codes within classical coding theory, widely featured in standard information theory literature. We regard BCH codes  as subfield-subcodes of suitable $\{1\}$-affine variety codes, see Definition \ref{AA} and \cite{Cas, HBCH}. $\{1\}$-affine variety codes are evaluation codes based on univariate polynomials \cite{QINP} and our perspective allows for the generalization of BCH codes into subfield-subcodes of $J$-affine variety codes, which utilize multivariate polynomial evaluation \cite{galindo-hernando, QINP, QINP2}. Evaluation codes are a large class of codes that extend and facilitate the study of important families of codes, like algebraic geometry codes \cite{H-VL-P} or the aforementioned $J$-affine variety codes. Homothetic-BCH (H-BCH) codes were introduced in \cite{HBCH}, see Definition \ref{definit}. They offer block lengths unattainable by $\{1\}$-affine variety codes and a BCH code $\mathcal{B}$ can be derived by appropriately puncturing a homothetic-BCH code $\mathcal{C}$.

In this article, we study how to get {\it quantum} $(r,\delta)$-{\it locally recoverable codes} ($(r,\delta)$-LRCs) with the help of BCH and H-BCH codes. From our study we derive some families of {\it optimal pure quantum} $(r,\delta)$-LRCs. In the next paragraphs, we recall some very basic information about the codes we are interested in. It will be expanded in Section \ref{preli}.

The literature provides significant evidence for quantum supremacy \cite{Aru, BallP, Zhong, LiuY}, highlighting the transformative potential of quantum information processing across various fields. To utilize quantum hardware effectively, protecting information from decoherence and environmental noise is vital. Quantum error-correcting codes (QECCs) are essential for this purpose, as they enable reliable quantum correction \cite{ShorS, 95kkk} while adhering to the fundamental constraints of the no-cloning theorem \cite{Woot}. While foundational research in QECCs focused exclusively on binary systems \cite{20kkk, Gottesman, Calder2, Calderbank}, subsequent studies have extensively explored $q$-ary codes, where $q$ represents a prime power. These have been developed through a broad range of methodologies, as documented in works such as \cite{BE, AK, Ketkar, Aly, XingC, Lag2, gahe, Anderson}.

Let $\mathbb{F}_q$ denote the finite field of size $q$. A QECC of length $n$ is defined as a $K$-dimensional linear subspace within the complex Hilbert space $\mathbb{C}^{q^n}$. We are interested in the so-called {\it stabilizer quantum codes} because the existence of a stabilizer code is tied to the existence of an additive classical code $C \subseteq \mathbb{F}_q^{2n}$ that is self-orthogonal under the trace-symplectic form, allowing researchers to utilize classical results. By employing more accessible inner products, such as the Hermitian product on $\mathbb{F}_{q^2}^{n}$ or the Euclidean product on $\mathbb{F}_{q}^{n}$, one can derive stabilizer quantum codes from self-orthogonal classical structures, see Theorem \ref{resto}.  Aly et al. \cite{Aly} established distance bounds for narrow-sense BCH codes that ensure Euclidean or Hermitian dual containment. Building on this, several works \cite{LMFL2013, KZT2013, Lag2, HZC2015, XLGM2016, LLLM2017, YZKL2017, QZ2017, LLLG2017, ZSL2018} have introduced diverse  families of QECCs based on classical BCH, negacyclic, or constacyclic codes, providing enhanced parameters for specific block lengths. Several recent studies have focused on Hermitian dual-containing BCH codes as a source for quantum stabilizer codes. Among these are \cite{SYW2019, LS2021, ZZ2021, L2022}. Returning to the case of the above mentioned BCH codes $\mathcal{B}$ and H-BCH codes $\mathcal{C}$, \cite[Theorem 2.6]{HBCH} proves that the Hermitian self-orthogonality of $\mathcal{B}$ implies that of $\mathcal{C}$. The dimension and minimum distance of H-BCH codes can be bounded similarly to standard BCH codes. Thus, \cite[Theorem A]{HBCH} uses these properties, and a significant refinement given in the same theorem of the bounds presented in \cite[Theorem 13]{Aly}, for obtaining an extensive new class of Hermitian self-orthogonal codes suitable for quantum applications.

The preservation of data across large-scale distributed and cloud storage systems is a critical concern for modern enterprises. In \cite{GHSY2012}, the application of classical error-correcting codes was proposed  to address the recovery of data from a failed node using information from surviving nodes; a code $C$ of this type is characterized as having $r$-locality if any specific coordinate of a codeword in $C$ can be reconstructed by accessing a maximum of $r$ other symbols. Such codes are designated as $r$-locally recoverable codes ($r$-LRCs). Some advancements in this field are documented in \cite{TPD2016, BTV2017, LMC2018, Mi2018, LXY2019, J2019, LMT2020, SVV2021, edgar}.

While standard $r$-locality accounts for single node failures, the reality of simultaneous multiple device outages necessitates a more robust approach. To mitigate this, $(r, \delta)$-{\it locally recoverable codes} were proposed in \cite{PKLK2012}. Fixing a coordinate $c$, these codes can rectify up to $\delta-1$ erasures within a recovery set containing $c$ and at most $r + \delta - 2$ additional coordinates. Parameters of such codes are governed by a Singleton-like bound; those codes attaining it are classified as {\it optimal} $(r, \delta)$-{\it LRCs}. Recent research has focused heavily on constructing these optimal structures \cite{SDYL2014, CXHF2018, LMX2019, LXY2019, SZW2019, J2019, CFXF2019, Z2020, FF2020, QZF2021, KWG2021, Cai, GFMC}.

{\it Quantum locally recoverable codes} (quantum LRCs) represent the extension of the above principles to quantum information. Initial developments by Golowich et al. \cite{Golowich, Golowich2} defined quantum $r$-LRCs, which, analogous to their classical counterparts, are QECCs capable of correcting an erasure of a single input qudit. Recently, and more broadly, the concept of {\it quantum} $(r, \delta)$-{\it LRC}, $Q \subseteq \mathbb{C}^{q^n}$, has been coined \cite{QLRC}. A code of this type must satisfy that, for every index $i \in \{1, \dots, n\}$, there exists a recovery set $J$ containing $i$ with cardinality less than or equal to $r + \delta - 1$ such that $Q$ is $(I, J)$-locally recoverable for every subset $I \subseteq J$ of size $\delta - 1$, see Section \ref{sect25}.

Let $C$ be a Hermitian or Euclidean dual-containing classical linear code giving rise to a quantum stabilizer code $Q'(C)$. Theorem 28 in \cite{QLRC} establishes that, when the condition $\delta \leq \dis(C^{\perp_h})$ for the Hermitian case (or $\delta \leq \dis(C^{\perp_e})$ for the Euclidean case) is satisfied, $\dis$ means minimum distance, $Q'(C)$ is a quantum $(r, \delta)$-LRC if and only if $C$ is a classical $(r,\delta)$-LRC; note that $\perp_h$ (respectively, $\perp_e$) stands for Hermitian (respectively, Euclidean) dual. Additionally, if $C$ is a Hermitian or Euclidean dual-containing $(r, \delta)$-LRC of length $n$ and dimension $\frac{n+k}{2}$, then $Q'(C)$ is a quantum stabilizer $(r,\delta)$-LRC with parameters $[[n,k, \geq \dis(C)]]_q$, which satisfies the following Singleton-like bound:
\begin{equation}\label{AAAA} k + 2 \dis(C) + 2\left(\left\lceil\frac{n+k}{2r}\right\rceil-1\right)(\delta-1) \leq n+2.\end{equation}
{\it Pure quantum codes} $Q'(C)$ attaining equality in (\ref{AAAA}) are named {\it optimal}.

We are mainly interested in subfield-subcodes over $\mathbb{F}_q$ of codes supported in larger fields $\mathbb{F}_{q^s}$. Let $n$ and $N$ be two positive integers such that $n$ divides $N$ which divides $q^s -1$. Denote by $P^{\mathfrak{A}}$ the set of $N$-roots of unity (under a suitable ordering, see Subsection \ref{Subs31}) and by $R$
a subset of $P^{\mathfrak{A}}$,  which depends on a positive integer $\lambda \leq \frac{N}{n}$, defined as the set $P$ in (\ref{eq:pointsetHBCH}) in Subsection \ref{22} but without any restriction on $\lambda$; when $\lambda = \frac{N}{n}$, $R= P^{\mathfrak{A}}$. In this article we consider evaluation codes $\mathcal{H}_\Delta^R$ which evaluate univariate polynomials in the $\mathbb{F}_{q^s}$-linear space generated by $X^e$, $e \in \Delta$, at the points in $R$. Subfield-subcodes over $\mathbb{F}_{q}$ of these codes are denoted $\mathcal{S}_\Delta^{R,q}$. When $\lambda = \frac{N}{n}$ and $\Delta$ is a suitable set, we call these subfield-subcodes BCH codes of length $N$, see Definition \ref{AA}; for any $\Delta$ and each $\lambda$ such that $\lambda n$ divides $N$, one gets a subfield-subcode of a $\{1\}$-affine variety code of length $\lambda n$. In case $\lambda n$ does not divide $N$, we set $P$ instead of $R$ and the code $\mathcal{S}_\Delta^{P,q}$ is named H-BCH.

Our goal is to get {\it pure quantum} $(r,\delta)$-{\it LRCs}, as good as possible, from Euclidean and Hermitian dual codes of  codes $\mathcal{S}_\Delta^{R,\mathfrak{q}}$.  Note that, for using Hermitian duality, we need $s$ to be even, which we write $s=2\varsigma$, and, to unify notation, we set $\mathfrak{q}=q^2$ in this case and $\mathfrak{q}=q$ in the Euclidean one.

To achieve our goal, we need to provide sets $\Delta$ for which one can regard the Euclidean (respectively, Hermitian) dual $\left( \mathcal{S}_\Delta^{R,q} \right)^{\perp_e}$ (respectively, $\left( \mathcal{S}_\Delta^{R,q^2} \right)^{\perp_h}$) as an $(r,\delta)$-LRC. This is done in Theorem \ref{eval} when evaluating at the set $P^{\mathfrak{A}}$ and, supported in this result, for H-BCH codes in Theorem \ref{mats-2}.

We look for {\it quantum} $(r,\delta)$-LRCs and, thus, we desire that our supporting classical codes be not only $(r,\delta)$-LRC but also Euclidean or Hermitian dual-containing. Theorem \ref{qlrcbch} restricts our previous conditions on $\Delta$ to reach the objective.

As mentioned, there is a definition of optimal quantum $(r,\delta)$-LRCs when the quantum codes are pure. By considering specific families of sets $\Delta$ as introduced in the before results, in Theorems \ref{4-5-japo} and \ref{jap-s-3}, and Corollary  \ref{BCHER}, we show parameters of optimal pure quantum $(r,\delta)$-LRCs. Even more, evaluating at multivariate polynomials, optimal codes with larger lengths are obtained, see Theorem \ref{teomonomial}. Note that our codes have different parameters of the specific families of optimal codes supplied in the literature, \cite{QMP-1, QMP-3, QMP-2, QMP-4, QMP-5}. In particular we provide binary and ternary optimal pure quantum $(r,\delta)$-LRCs with unbounded length. Table \ref{tab:quantum-lrcs1} in page \pageref{tab:quantum-lrcs1} lists parameters of quantum $(r,\delta)$-LRCs brought by our proposed constructions. Moreover, the sets of realizable parameters of QLRCs by the previous researches are summarized in Tables \ref{tab:quantum-lrcs2} to  \ref{tab:quantum-lrcs6}. From the tables, one can see that many
new parameters can be realized, particularly over small quantum alphabets.
Since quantum memory in the current quantum computers are mostly binary or ternary, our new constructions significantly extend available parameters of quantum $(r,\delta)$-LRCs applicable to them.

We conclude with a brief outline of the paper. Section \ref{preli} reviews important concepts for our study related to classical error-correcting codes ($\{1\}$-affine variety, BCH and H-BCH codes, and local recoverability) and quantum codes (stabilizer codes, relation between quantum and classical LRCs, and optimality of pure quantum LRCs). The main results of the paper appear in Section \ref{LA3}. Subsection \ref{Subs31} introduces sets $\Delta$ for which we are able to determine parameters and $(r,\delta)$-locality of the Euclidean and Hermitian dual codes of the subfield-subcodes $\mathcal{S}_\Delta^{R,\mathfrak{q}}$. Subsection \ref{sect32} states conditions for the above dual codes to be Euclidean or Hermitian dual-containing giving rise to quantum codes. To finish, Section \ref{LA4}, supported in the results in Section \ref{LA3}, provides several families of new optimal pure quantum $(r,\delta)$-LRCs.

\section{Preliminaries}
\label{preli}
Let $q$ be a prime power. In this paper, as supporting linear codes we will consider $q$-ary subfield-subcodes of certain linear codes over finite fields $\mathbb{F}_{q^{s}}$, $s \geq 2$. We will also use $q^2$-ary subfield-subcodes in case $s= 2 \varsigma$, $\varsigma$ being a positive integer. Let us start with introducing $\{1\}$-affine variety codes and BCH codes regarded as evaluation codes.
\subsection{\texorpdfstring{$\{1\}$}{af}-affine variety and BCH codes}
\label{BCH-1}

Let $N$ be a divisor of $q^{s} - 1 $ and denote by $a$ a primitive $N$-th root of unity.  Let $\langle X^{N} -1 \rangle$ be the ideal of the polynomial ring $\F_{q^{s}}[X]$  generated by $X^{N} -1$, and set $U(N) = \{1,a, \ldots, a^{N-1}\}$ as its set of zeroes. Consider the following evaluation map:
$$\ev^{U(N)}: \faktor{\F_{q^{s}}[X]}{\langle X^{N} -1 \rangle} \to \F_{q^{s}}^{N} \textrm{, } \quad \ev^{U(N)}(h)=\left(h(1), h(a),\dots,h(a^{N-1})\right),$$
$h$ being a class in $\faktor{\F_{q^s}[X]}{\langle X^{N} -1 \rangle}$, which is usually specified  by the representative of minimum degree in its residue class.

Consider the ring $\Z / N\Z$ of congruences modulo $N$ and its set of representatives $\mathbb{Z}_N := \{0, 1, \ldots, N-1\}$. For any subset $\Delta \subseteq \mathbb{Z}_N$, the $\{1\}$-{\it affine variety code} of length $N$ given by $\Delta$, $\Co_{\Delta}^{N}$, is the linear code defined as follows:
    $$\Co_{\Delta}^{N}:=\langle \ev^{U(N)}(X^e) \; : \; e \in \Delta \rangle \subseteq \F_{q^{s}}^{N},$$
here $\langle \cdot\rangle$ means generation as an $\F_{q^{s}}$-linear space. Let $\Lambda^q_e$ stand for the cyclotomic coset in $\mathbb{Z}/N \mathbb{Z}$ given by $e \in \mathbb{Z}_N$ with respect to $q$. That is, $\Lambda^q_e=\{q^{i}e \; : \; i\geq 0\}\subseteq \mathbb{Z}_N$. For simplicity, our element $e$ representing $\Lambda^q_e$ is the minimum element in $\Lambda^q_e$ with respect to the usual order. Denote by
$$\mathcal{A}_q(N)=\{\alpha_0(N)=0 < \alpha_1(N)< \cdots <\alpha_{\nu}(N)\}\subseteq \mathbb{Z}_N$$
the set of representatives for the above cyclotomic cosets.

For each pair $(t,t')$, $0 \leq t \leq t' < \nu$, set $\Delta_q^{N} (t,t') = \cup_{\ell =t}^{t'} \Lambda^q_{\alpha_\ell}$, where we have deleted $(N)$ for simplicity. Then (see, \cite{Bier, Cas}) a $(t,t')$-BCH code can be defined as follows:
\begin{de}
\label{AA}
The $q$-ary BCH code of length $N$ given by the pair $(t,t')$, $\mathrm{BCH}_q^N(t,t')$, is defined as the subfield-subcode $\mathcal{C}^{N}_{\Delta_q^{N} (t,t')} \cap \mathbb{F}_{q}^N$.
\end{de}

Denote by $C^{\perp_e}$ the Euclidean dual of a code $C \subset \mathbb{F}_q^N$. For convenience of the reader, we recall that if one writes $\mathbf{x}=(x_{1}, \ldots, x_N)$ a vector in  $\mathbb{F}_{q^2}^{N}$, the Hermitian product of two vectors $\mathbf{x}$ and $\mathbf{y}$ is defined as $\mathbf{x} \cdot_h \mathbf{y} := \sum_{i=1}^{N} x_i y_i^q$, and the Hermitian dual of a code $C \subset \mathbb{F}_{q^2}^{N}$ as $C^{\perp_h} := \{\mathbf{x} \in \mathbb{F}_{q^2}^{N} \; : \; \mathbf{x} \cdot_h \mathbf{y}=0 \mbox{ for all } \mathbf{y} \in C\}$. Let $\#$ stand for cardinality, from \cite{galindo-hernando,QINP2}, one can deduce some important properties of the previously introduced codes:

\begin{pro}
\label{BCH}
The following statements hold.
\begin{enumerate}
\item The code $\mathrm{BCH}_q^N(t,t')$ has length $N$ and dimension $\sum_{\ell=t}^{t'} \# \Lambda^q_{\alpha_\ell}$.
\item The minimum distance $d$ of the Euclidean dual $[\mathrm{BCH}_q^N(1,t)]^{\perp_e}$ (respectively,\\ $[\mathrm{BCH}_q^N(0,t)]^{\perp_e}$) of the code $\mathrm{BCH}_q^N(1,t)$ (respectively, $\mathrm{BCH}_q^N(0,t)$) satisfies $d \geq \alpha_{t+1}$ (respectively, $d \geq \alpha_{t+1} +1$).
\item Suppose that $s=2\varsigma$. Then, the BCH code $$\mathrm{BCH}_{q^2}^N(t,t'): = \mathcal{C}^{N}_{\Delta_{q^2}^{N} (t,t')} \cap \mathbb{F}_{q^2}^N$$ has length $N$ and dimension $\sum_{\ell=t}^{t'} \# \Lambda^{q^2}_{\alpha_\ell}$. Moreover, the minimum distance $d$ of the Hermitian dual $[\mathrm{BCH}_{q^2}^N(1,t)]^{\perp_h}$ (respectively, $[\mathrm{BCH}_{q^2}^N(0,t)]^{\perp_h}$) of the code $\mathrm{BCH}_{q^2}^N(1,t)$ (respectively, $\mathrm{BCH}_{q^2}^N(0,t)$) satisfies $d \geq \alpha_{t+1}$ (respectively, $d \geq \alpha_{t+1} +1$).
\end{enumerate}
\end{pro}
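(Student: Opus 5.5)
The plan is to combine the standard description of subfield-subcodes of evaluation codes with the BCH bound applied to a parity-check description of the dual. We take the cases in the order of the statement. Parts (2) and (3) are argued in the Euclidean setting and then transferred to the Hermitian setting by a Frobenius twist.

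\emph{Dimension.} Since $\Delta=\Delta_q^N(t,t')$ is a union of cyclotomic cosets with respect to $q$, it is closed under multiplication by $q$ modulo $N$. For each coset $\Lambda^q_{\alpha_\ell}$ of size $m_\ell$, take a basis $\{\xi_1,\dots,\xi_{m_\ell}\}$ of $\F_{q^{m_\ell}}$ over $\F_q$ and form the trace polynomials
\[
\mathrm{T}_\ell(\xi X^{\alpha_\ell})=\sum_{i=0}^{m_\ell-1}\xi^{q^i}X^{q^i\alpha_\ell}.
\]
Their evaluations lie in $\F_q^N$ and span the subfield-subcode. The evaluation map is injective on $\langle X^e: e\in\mathbb{Z}_N\rangle$, because $U(N)$ has $N$ distinct points. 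A Galois-invariance (Delsarte-type) argument then gives:
\begin{itemize}
\item $\mathcal{C}^N_\Delta\cap\F_q^N$ has an $\F_q$-basis of cardinality $\sum_{\ell}\#\Lambda^q_{\alpha_\ell}$;
\item equivalently, $\mathcal{C}^N_\Delta$ is stable under coordinatewise Frobenius, so $\dim_{\F_q}(\mathcal{C}^N_\Delta\cap\F_q^N)=\dim_{\F_{q^s}}\mathcal{C}^N_\Delta=\#\Delta$.
\end{itemize}
This settles the dimension claims in (1) and (3), with $q^2$ in place of $q$ for (3).

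\emph{Distance of the Euclidean dual.} We use $\sum_{j=0}^{N-1}a^{jk}=0$ unless $k\equiv0 \pmod N$. This shows that the Euclidean dual of $\mathcal{C}^N_\Delta$ is $\mathcal{C}^N_{\Delta^{\perp}}$ with $\Delta^\perp=\mathbb{Z}_N\setminus(-\Delta)$. It also shows that $(\mathcal{C}^N_\Delta\cap\F_q^N)^{\perp_e}$ consists of the vectors $\mathbf{c}\in\F_q^N$ with $\sum_j c_j a^{je}=0$ for all $e\in\Delta$. The latter uses that $\Delta$ is Frobenius closed, so the trace of the dual lands in the right place; alternatively one can simply cite Delsarte's theorem. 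Hence $(\mathrm{BCH}_q^N(1,t))^{\perp_e}$ is the cyclic code whose defining set of zeros contains $\Delta_q^N(1,t)$.

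Every integer $1,2,\dots,\alpha_{t+1}-1$ lies in some coset with representative $\alpha_\ell$, $\ell\le t$. Indeed, the minimum of its coset is at most the integer itself, and hence is smaller than $\alpha_{t+1}$. So the zero set contains the $\alpha_{t+1}-1$ consecutive exponents $1,\dots,\alpha_{t+1}-1$. The BCH bound, proved via the Vandermonde determinant, then gives $d\ge\alpha_{t+1}$. Including coset $0$ adds one more consecutive exponent, which gives $\alpha_{t+1}+1$.

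\emph{Hermitian case.} This step needs the most care. For $\mathbf{x},\mathbf{y}\in\F_{q^2}^N$ we have $\mathbf{x}\cdot_h\mathbf{y}=\mathbf{x}\cdot_e\mathbf{y}^q$, so $C^{\perp_h}=(C^{q})^{\perp_e}$, where $C^q$ denotes coordinatewise $q$-th powers. Applying this to $C=\mathrm{BCH}_{q^2}^N(1,t)$, the Hermitian dual is the Euclidean dual of the code whose zero set contains $q\Delta_{q^2}^N(1,t)$. Multiplication by $q$ permutes $q^2$-cosets, but it need not preserve consecutiveness. The plan is to avoid the issue by noting that the minimum distance is invariant under this coordinatewise Frobenius twist. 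The map $\mathbf{c}\mapsto\mathbf{c}^q$ is a weight-preserving bijection on $\F_{q^2}^N$, and it sends $C^{\perp_h}=(C^q)^{\perp_e}$ onto $C^{\perp_e}$. Therefore $\dis(C^{\perp_h})=\dis(C^{\perp_e})$. The Euclidean argument above, run with $q^2$-cyclotomic cosets, then yields the same bounds $\alpha_{t+1}$ and $\alpha_{t+1}+1$.
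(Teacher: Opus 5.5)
Your proof is correct. The paper gives no argument of its own and simply deduces the proposition from \cite{galindo-hernando,QINP2}; your reconstruction is the standard argument behind those citations, so it follows essentially the same route. For the dimension you use the trace basis and Frobenius-invariance of $\mathcal{C}^N_\Delta$. For the distance you describe the dual of the subfield-subcode as the cyclic code with zeros $a^e$, $e\in\Delta$, and apply the Vandermonde/BCH bound. For the Hermitian case you use the weight-preserving isometry $\mathbf{c}\mapsto\mathbf{c}^q$ between Hermitian and Euclidean duals, which the paper itself invokes in the proof of Proposition \ref{era1-B}.
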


\begin{rem}
\label{BCHq}
{\rm
Notice that  when $s = 2 \varsigma$, one can consider cyclotomic cosets $\Lambda_e^q$ and $\Lambda_e^{q^2}$. Along the paper, whenever we write $s=2\varsigma$, it means that we are going to consider cyclotomic cosets with respect to $q^2$. Moreover, in this case (for instance in Item (3) of Proposition \ref{BCH}), for simplicity, we also write $\mathcal{A}_{q^2}(N) = \{\alpha_0=0 < \alpha_1 < \cdots < \alpha_\nu \}$. Generally speaking $\mathcal{A}_{q}(N) \neq \mathcal{A}_{q^2}(N)$.
}
\end{rem}

\subsection{H-BCH codes}
\label{22}
In \cite{HBCH}, homothetic-BCH codes (H-BCH codes) were introduced. These codes behave in a similar way to subfield-subcodes of $\{1\}$-affine variety codes, but they reach lengths  that a subfield-subcode as before cannot. Let us recall the definition. With the above notation, fix a positive natural number $n$ which divides $N$ and take another positive integer $\lambda < \frac{N}{n}$ such that $\lambda n$ does not divide $N$. Consider the element $\zeta_n = a^{\frac{N}{n}}$ and the set
\begin{equation}
P:=\left\{1,\zeta_{n},\dots,\zeta_{n}^{n-1},a,a\zeta_{n},
\dots,a\zeta_{n}^{n-1},\dots,a^{\lambda-1},a^{\lambda-1}\zeta_{n},\dots,a^{\lambda-1}\zeta_{n}^{n-1}\right\}, \label{eq:pointsetHBCH}
\end{equation}
which is the vanishing set of the ideal of the polynomial ring $\F_{q^{s}}[X]$ $$I=\left\langle(X^{n}-1)(X^{n}-a^{n})\cdots (X^{n}-a^{(\lambda-1)n})\right\rangle.$$
For simplicity, set $P=\{\gamma_1, \ldots, \gamma_{\lambda n}\}$ and, as before, consider the
linear evaluation map
    $$
    \ev^P: \faktor{\F_{q^{s}}[X]}{I} \to \F_{q^{s}}^{n \lambda} \textrm{, } \quad \ev^P(f)=\left(f(\gamma_1),\dots,f(\gamma_{\lambda n})\right).
    $$
It allows us to state the following definition:

\begin{de}
\label{definit}
Let $\Delta \subseteq \mathbb{Z}_{N}$. The $q^s$-ary code of length $\lambda n$
$$\mathcal{H}_\Delta^P:=\left\langle \ev^P(X^e) \; : \; e\in\Delta\right\rangle,$$
where $X^e$ stands also for $X^e + I$, is called the {\it homothetic evaluation code} given by $\Delta$. Its subfield-subcode over $\F_{q}$
\[
\Su_\Delta^{P,q} := \mathcal{H}_\Delta^P \bigcap \F_{q}^{\lambda n}
\]
is named the $q$-ary {\it homothetic-BCH (H-BCH)} code given by $\Delta$.
\end{de}

\begin{rem}
{\rm
When one considers a value $\lambda$ such that $\lambda n$ divides $q^s -1$, one could set $N= \lambda n$ and $\mathcal{H}_\Delta^P$ would be a subfield-subcode of a $\{1\}$-affine variety code. Thus, for getting lengths not provided by $\{1\}$-affine variety codes, we should take values $\lambda$ such that $\lambda n$ does not divide $q^s -1$.
}
\end{rem}

As exposed in \cite[Subsection 2.2]{HBCH}, the following result gives information about H-BCH codes.

\begin{teo}\label{bounds} Keep the above notation and take a nonnegative integer $0 \leq \tau < \nu$.  Denote dimension by $\dim$ and minimum distance by $\dis$, then the following bounds hold:
\begin{enumerate}
    \item $\dim \left(\Su_{\Delta_q^{N} (1,\tau)}^{P,q}\right) \leq \sum_{\ell=1}^\tau \# \Lambda^q_{\alpha_\ell}$;
    $\; \;\dim \left(\Su_{\Delta_q^{N} (0,\tau)}^{P,q}\right) \leq \sum_{\ell=1}^\tau \# \Lambda^q_{\alpha_\ell} + 1$.
    \item $\dis \left( \Su_{\Delta_q^{N} (1,\tau)}^{P,q} \right)^{\perp_e} \geq \alpha_{\tau+1}$;
    $\; \; \dis \left(\Su_{\Delta_q^{N} (0,\tau)}^{P,q}\right)^{\perp_e} \geq \alpha_{\tau+1} + 1$.
    \item Suppose $s=2\varsigma$ and, as explained in Remark \ref{BCHq}, consider sets $\Delta^{N}_{q^2} (t,t')$ obtained as a union of cyclotomic cosets $\Lambda^{q^2}_e$. Then, $\dim \left(\Su_{\Delta_{q^2}^{N} (1,\tau)}^{P,q^2}\right) \leq \sum_{\ell=1}^\tau \# \Lambda^{q^2}_{\alpha_\ell}$,
    $$\; \;\dim \left(\Su_{\Delta_{q^2}^{N} (0,\tau)}^{P,q^2}\right) \leq \sum_{\ell=1}^\tau \# \Lambda^{q^2}_{\alpha_\ell} + 1,$$ and $\dis \left( \Su_{\Delta_{q^2}^{N} (1,\tau)}^{P,q^2} \right)^{\perp_h} \geq \alpha_{\tau+1}$,
    $\; \; \dis \left(\Su_{\Delta_{q^2}^{N} (0,\tau)}^{P,q^2}\right)^{\perp_h} \geq \alpha_{\tau+1} + 1$.
\end{enumerate}
\end{teo}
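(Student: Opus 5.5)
The plan is to prove each bound by comparing the H-BCH code with an ordinary BCH-type code. Two facts drive everything: the monomials $X^e$, $e\in\mathbb{Z}_N$, evaluated at $P$, and the description of subfield-subcodes through cyclotomic cosets.

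\emph{Dimension bounds (items (1) and (3)).} I would argue as follows.
\begin{itemize}
\item Since $\Delta=\Delta_q^N(1,\tau)$ is a union of cyclotomic cosets, the trace-type polynomials $\sum_{j}(\beta X^{e})^{q^j}$ over $e\in\Lambda^q_{\alpha_\ell}$, $\beta\in\mathbb{F}_{q^s}$, evaluate into $\mathbb{F}_q^{\lambda n}$. The points of $P$ are $N$-th roots of unity, so exponents are read modulo $N$, and these traces lie in $\mathcal{H}^P_\Delta$.
\item Conversely, every codeword $\ev^P(f)$ with $f=\sum_{e\in\Delta}c_eX^e$ and entries in $\mathbb{F}_q$ is unchanged by the Frobenius map $x\mapsto x^q$. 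This map permutes $\Delta$ and fixes $P$ setwise, since $P$ need not be closed under $q$-th powers but the values are.
\item The cleanest route avoids any structure of $P$: $\mathcal{S}^{P,q}_\Delta$ is the restriction to $P$ of a code on $U(N)$. Indeed $\mathcal{H}^P_\Delta$ is the puncturing of $\mathcal{C}^N_\Delta$ to the coordinates of $P\subseteq U(N)$.
\item Then $\dim_{\mathbb{F}_q}\mathcal{S}^{P,q}_\Delta\le \dim_{\mathbb{F}_q}\mathrm{Tr}(\mathcal{H}^P_\Delta)$, by Delsarte's theorem (subfield-subcode dual equals trace of dual) or by the direct observation that fixed vectors of $\mathcal{H}^P_\Delta$ are traces of its elements. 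This trace code is the image of the trace code of $\mathcal{C}^N_\Delta$ under puncturing.
\item The trace code of $\mathcal{C}^N_\Delta$ equals $\mathrm{BCH}^N_q(1,\tau)$, whose dimension is $\sum_{\ell=1}^\tau\#\Lambda^q_{\alpha_\ell}$ by Proposition~\ref{BCH}(1). Puncturing does not increase dimension, which gives the bound. Adding the coset $\{0\}$ adds $1$.
\end{itemize}
The $q^2$ case is identical with $q^2$-cosets.

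\emph{Distance bounds (items (2) and (3)).} I would work directly with parity checks.
\begin{itemize}
\item A vector $\mathbf{x}\in\mathbb{F}_q^{\lambda n}$ lies in $(\mathcal{S}^{P,q}_\Delta)^{\perp_e}$ if $\sum_i x_i\gamma_i^e=0$ for all $e\in\Delta$. This is a sufficient condition; equality of the two descriptions follows by Delsarte together with Galois-closure of $\Delta$.
\item I expect to establish the reverse inclusion: $(\mathcal{S}^{P,q}_\Delta)^{\perp_e}=(\mathcal{H}^P_\Delta)^{\perp_e}\cap\mathbb{F}_q^{\lambda n}$. This holds because $\mathcal{H}^P_\Delta$ is generated by Galois-stable data, so $\mathcal{S}^{P,q}_\Delta\otimes\mathbb{F}_{q^s}=\mathcal{H}^P_\Delta$.
\item So a nonzero dual word $\mathbf{x}$ of weight $w$ satisfies $\sum_i x_i\gamma_i^e=0$ for the consecutive exponents $e=1,\dots,\alpha_{\tau+1}-1$, all of which lie in $\Delta$ since every integer below $\alpha_{\tau+1}$ belongs to some coset with representative $\le\alpha_\tau$.
\item The $\gamma_i$ are pairwise distinct elements of $\mathbb{F}_{q^s}^*$. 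A Vandermonde argument on the columns $(\gamma_i^e)_{e=1}^{\alpha_{\tau+1}-1}$ of the support (nonzero $\gamma_i$, distinct) forces $w\ge\alpha_{\tau+1}$. This is the BCH bound.
\item With $0\in\Delta$ the consecutive run is $0,\dots,\alpha_{\tau+1}-1$, giving $w\ge\alpha_{\tau+1}+1$.
\end{itemize}

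\emph{Hermitian case.} Here $\mathbf{x}\cdot_h\mathbf{y}=\mathbf{x}\cdot_e\mathbf{y}^q$, so $(\mathcal{S})^{\perp_h}=((\mathcal{S})^{q})^{\perp_e}$, and $\mathcal{S}^q$ is the code given by $q\Delta$. This set is again Galois-stable under $q^2$ and contains the run $q\cdot\{1,\dots,\alpha_{\tau+1}-1\}$. Applying the Vandermonde argument to the points $\gamma_i^q$, which are still distinct because $\gamma\mapsto\gamma^q$ is injective, gives the same bounds.

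The main obstacle is the step identifying the extension of the subfield-subcode with $\mathcal{H}^P_\Delta$, used to pass to the $\mathbb{F}_{q^s}$-parity checks. It needs care because $P$ is not Frobenius-stable. I would handle it via Delsarte's theorem: $(\mathcal{S}^{P,q}_\Delta)^{\perp_e}=\mathrm{Tr}((\mathcal{H}^P_\Delta)^{\perp_e})$. A trace of a word orthogonal to all $\ev^P(X^e)$, $e\in\Delta$, remains orthogonal to them, because applying Frobenius to $\sum x_i\gamma_i^{e}=0$ gives $\sum x_i^q\gamma_i^{qe}=0$ with $qe\in\Delta$. This yields the needed inclusion directly.
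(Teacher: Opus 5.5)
Your proof is correct. The paper does not prove this statement itself; it defers to \cite[Subsection 2.2]{HBCH}. The reasoning it does use for H-BCH codes, in the paragraph before Theorem~\ref{mats-2}, is a reduction to the code on $U(N)$: $\Su^{P,q}_\Delta$ is a puncturing of the length-$N$ subfield-subcode, and by (\ref{dualpunct}) its dual is a shortening of the dual of the latter, so both bounds are inherited.

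Your dimension argument is exactly this reduction, with the nontrivial inclusion $\Su^{P,q}_\Delta\subseteq \mathrm{Tr}(\mathcal{H}^P_\Delta)=\pi_P(\mathrm{BCH}^N_q(1,\tau))$ made explicit.

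For the distance you take a different, more hands-on route. You use Delsarte and the Frobenius-stability of $\Delta$ to obtain the $\F_{q^s}$-parity checks on the points of $P$, then apply the BCH/Vandermonde bound at the $\gamma_i$ (at the $\gamma_i^q$ in the Hermitian case). This is self-contained and also gives the Remark following the theorem directly. However, the step you single out as the main obstacle can be bypassed entirely. The trivial inclusion $\pi_P(\mathrm{BCH}^N_q(1,\tau))\subseteq \Su^{P,q}_\Delta$ together with (\ref{dualpunct}) gives $(\Su^{P,q}_\Delta)^{\perp_e}\subseteq \sigma_P\bigl((\mathrm{BCH}^N_q(1,\tau))^{\perp_e}\bigr)$. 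A shortened code has minimum distance at least that of the original, so Proposition~\ref{BCH}(2) finishes with no descent argument. The Hermitian case is identical, using (\ref{dualpuncth}) and Proposition~\ref{BCH}(3).

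One sentence in your dimension part should be corrected: the Frobenius map does not fix $P$ setwise in general. For $q=2$, $s=4$, $N=15$, $n=3$, $\lambda=2$ one has $a\in P$ but $a^2\notin P$. Nothing you use depends on this. What matters is that the $\gamma_i$ are $N$-th roots of unity, so $\ev^P(X^e)^q=\ev^P(X^{qe \bmod N})$ and $\mathcal{H}^P_\Delta$ is stable under coordinatewise Frobenius for every $P\subseteq U(N)$. For the same reason, your caveat that $P$ is not Frobenius-stable is beside the point. Galois descent gives at once $\dim_{\F_q}\Su^{P,q}_\Delta=\dim_{\F_{q^s}}\mathcal{H}^P_\Delta\le\#\Delta$.
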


\begin{rem}
{\rm
Replacing $\Delta_q^{N} (1,\tau)$ (respectively, $\Delta_{q^2}^{N} (1,\tau)$) with a set $\Delta$ which does not contain $0$, it is a union of cyclotomic cosets $\Lambda_e^q$ (respectively, $\Lambda_e^{q^2}$) and contains $\mathfrak{i} -1$ consecutive positive integers, one gets close results to those in Theorem \ref{bounds}. In fact, $\dim \left(\Su_{\Delta}^{P,q}\right) \leq \# \Delta$ and $\dis \left( \Su_{\Delta}^{P,q} \right)^{\perp_e} \geq \mathfrak{i}$ (respectively, $\dim \left(\Su_{\Delta}^{P,q^2}\right) \leq \# \Delta$ and $\dis \left( \Su_{\Delta}^{P,q^2} \right)^{\perp_h} \geq \mathfrak{i}$).
}
\end{rem}

\subsection{Classical \texorpdfstring{$(r,\delta)$}{lr2}-locally recoverable codes}
\label{LRC}
Let $C \subseteq \mathbb{F}_q^m$ be an error-correcting $q$-linear code. Given a subset $T \subseteq \{1, \ldots, m\}$ of cardinality $t$, the {\it puncturing of $C$ at $T$} is the set
$$\pi_T (C) = \{ \mathbf{c}_T :=(c_j)_{j \in T} \; :  \; \mathbf{c}=(c_1, \ldots, c_m) \in C\},$$
where $\pi_T: \mathbb{F}_q^m \rightarrow \mathbb{F}_q^t$ is
the projection map on the coordinates of $T$. The set $\pi_T (C)$ is also named the {\it punctured code} of $C$ at $T$.

In addition,  the set
$$\sigma_T (C) = \{ \mathbf{c}_T\; : \; \mathbf{c}=(c_1, \ldots, c_m) \in C \mbox{ satisfying supp}(\mathbf{c}) \subseteq T \},$$
where supp$(\mathbf{c}) := \{j \in \{1, \ldots, m\} : c_j \neq 0\}$, is called the {\it shortening of $C$ at $T$}. As before, $\sigma_T (C)$ is also named the {\it shortened code} of $C$ at $T$.

By \cite{QINP3}, it holds that
\begin{equation}
  \label{dualpunct}
  \pi_T \left( C^{\perp_e} \right) = \left[ \sigma_T \left( C \right) \right]^{\perp_e}, \mbox{ and also that}
\end{equation}
\begin{equation}
  \label{dualpuncth}
  \pi_T \left( C^{\perp_h} \right) = \left[ \sigma_T \left( C \right) \right]^{\perp_h},
\end{equation}
whenever $C \subseteq \mathbb{F}_{q^2}^m$.

In this paper we are interested in quantum $(r,\delta)$-locally recoverable codes. Thus, it makes sense to start with the definition of classical $(r,\delta)$-locally recoverable code.
\begin{de}
An {\it $(r,\delta)$-locally recoverable code} --$(r,\delta)$-LRC--, also named an LRC with  locality $(r,\delta)$, is
an error-correcting code $C \subseteq \mathbb{F}_q^m$ such that, for each index $i \in \{1, \ldots, m\}$, there exists a set of indices $J \subseteq \{1, \ldots, m\}$, $i \in J$, satisfying:
\begin{enumerate}
\item $\#(J) \leq r + \delta -1$.
\item $\dis[\pi_{J} (C)] \geq \delta$, where $\#$ means cardinality and $\dis$ minimum distance.
\end{enumerate}
\end{de}
The set $J$ is named an $(r,\delta)$-{\it recovery set} for the $i$-th coordinate. From the previous definition, one can deduce that, given any set $I \subsetneq J$ of cardinality $\delta -1$, the coordinates corresponding to $I$ of any word in the code can be recovered from the coordinates corresponding to $J \setminus I$.

In \cite{PKLK2012},  the following Singleton-like bound was introduced for $(r,\delta)$-LRCs.

\begin{pro}
\label{Singleton-2}
Let $[m,k,d]_q$ be the parameters of an $(r,\delta)$-LRC, $C$. Then,
\begin{equation}
\label{Singleform}
k+d+ \left(\left\lceil{\frac{k}{r}} \right\rceil -1 \right)\left( \delta -1 \right) \leq m+1.
\end{equation}
\end{pro}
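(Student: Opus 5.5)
The plan is to reduce the bound to the classical Singleton bound through a shortening argument, following the strategy of \cite{PKLK2012}. Let $C$ be an $[m,k,d]_q$ $(r,\delta)$-LRC. I will build a set $T \subseteq \{1,\dots,m\}$ with two properties: the restriction $\pi_T(C)$ has dimension at most $k-1$, and $\#T \geq k-1+(\lceil k/r\rceil -1)(\delta-1)$. Once such a $T$ exists, pick a nonzero codeword of $C$ that vanishes on $T$; one exists because $\pi_T$ restricted to $C$ has nontrivial kernel. Its weight is at most $m-\#T$, so $d \leq m-\#T$. This gives $k+d+(\lceil k/r\rceil-1)(\delta-1)\leq m+1$, which is (\ref{Singleform}).

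To construct $T$, I will grow it greedily by adding whole recovery sets. Start with $T_0=\emptyset$. At step $j$, as long as $\dim \pi_{T_{j-1}}(C)\leq k-1$, choose a coordinate $i\notin T_{j-1}$ whose column is not in the span of the columns indexed by $T_{j-1}$; here columns refer to a generator matrix $G$ of $C$. Such an $i$ exists because the rank of $G$ is $k$. Take its recovery set $J_i$ and set $T_j=T_{j-1}\cup J_i$.

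The key local fact is that $\dis[\pi_{J_i}(C)]\geq\delta$. This means any $\#J_i-\delta+1$ columns of $J_i$ already span all the columns of $J_i$, so $\dim\pi_{J_i}(C)\leq \#J_i-\delta+1\leq r$. Two consequences follow for each step:
\begin{itemize}
\item the rank increases by at most $r$;
\item the size increases by at least (rank increase)$+\delta-1$, as long as the new set contributes a new column.
\end{itemize}
The second point needs care, and I expect it to be the main obstacle. Some of the columns in $J_i\setminus T_{j-1}$ may already lie in the previous span, and I must show that at least $\delta-1$ of the new coordinates add no rank. To see this, let $J'=J_i\setminus T_{j-1}$. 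The columns of $J_i$ have rank $\rho\leq \#J_i-\delta+1$, and the new rank added is at most $\rho$ minus the rank of the columns in $J_i\cap T_{j-1}$. Using the distance property, any set of at least $\#J_i-\delta+1$ columns of $J_i$ spans everything, and this gives $\#J'\geq(\text{rank added})+\delta-1$ whenever the rank added is positive. I will verify this by a short case analysis on whether $\#(J_i\cap T_{j-1})\geq \#J_i-\delta+1$. If it is, the columns of $J_i\cap T_{j-1}$ already span all of $J_i$, so the rank added is $0$, which contradicts the choice of $i$.

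Finally, I will stop at the last step before the rank reaches $k$. Here I mimic the Gopalan et al.\ argument. If adding a full $J_i$ would overshoot rank $k-1$, I add only part of $J_i$, just enough coordinates to bring the rank to exactly $k-1$, and count only the rank contribution for those coordinates. The number of full steps is at least $\lceil k/r\rceil-1$, since each full step adds rank at most $r$ and the final rank is $k-1$. Each full step contributes at least $\delta-1$ rank-free coordinates, and the partial step contributes its rank increase. Summing, I get $\#T\geq (k-1)+(\lceil k/r\rceil-1)(\delta-1)$ with $\dim\pi_T(C)=k-1$, which completes the argument.
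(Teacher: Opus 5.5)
Your proposal is correct. It is the standard greedy rank argument of Gopalan et al., extended to $(r,\delta)$-locality in \cite{PKLK2012}; the paper gives no proof of its own and simply cites that reference.

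One step you leave implicit closes as you indicate. In the remaining case $\#(J_i\cap T_{j-1})\le \#J_i-\delta$, adjoin any $\#J_i-\delta+1-\#(J_i\cap T_{j-1})$ columns of $J'$ to the columns indexed by $J_i\cap T_{j-1}$; these already span all columns of $J_i$, so the rank added is at most $\#J'-(\delta-1)$.
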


An $(r,\delta)$-LRC is {\it optimal} whenever it  meets the bound in Proposition \ref{Singleton-2}.



\subsection{Stabilizer quantum codes}
Denote by $\mathbb{C}$ the complex numbers. According to \cite{Ketkar}, a stabilizer (quantum) code is a non-zero subspace of the complex linear space $\mathbb{C}^{q^m}$ of the form
\[
\bigcap_{E \in \Gamma} \left\{ \mathbf{v} \in \mathbb{C}^{q^m} \; : \; E \mathbf{v} = \mathbf{v} \right\},
\]
$\Gamma$ being some abelian subgroup of the error group $G_m$ generated by a nice error basis  on $\mathbb{C}^{q^m}$. The parameters of a stabilizer code $Q$ are usually written as $((m, K, d))_q$, where $K$ is the dimension of $Q$ as a linear space and $d$ its minimum distance, which means that $Q$ allows us to detect all errors in $G_m$ of weight less than $d$, but not some error of weight $d$. In the case when $K=q^k$, the parameters of $Q$ are expressed as $[[m,k,d]]_q$.

Considering the so-called symplectic weight $\mathrm{swt}$ on the linear space $\mathbb{F}_q^{2m}$ and the trace-symplectic dual $C^{\perp_s}$ of an additive code $C$, the existence of an $((m, K, d))_q$ stabilizer code is equivalent to that of an additive code $C \subseteq \mathbb{F}_q^{2m}$ of size $\# C = \frac{q^m}{K}$ such that $C \subseteq C^{\perp_s}$ and $\mathrm{swt} (C^{\perp_s} \setminus C) =d$ if $K>1$ ($\mathrm{swt} (C^{\perp_s}) =d$ in case $K=1$), \cite[Theorem 13]{Ketkar}.

We will only consider linear codes and, therefore, it suffices to use Euclidean and Hermitian inner products, and Euclidean $C^{\perp_e}$ (respectively, Hermitian $C^{\perp_h}$) dual spaces of $q$-linear $C \subseteq \mathbb{F}_q^{m}$ (respectively, $q^2$-linear $C \subseteq \mathbb{F}_{q^2}^{m}$) spaces.

In this context, denoting by $\mathrm{wt}$ the Hamming weight, we conclude this subsection by stating the main result for obtaining stabilizer codes from linear codes.
\begin{teo}
\label{resto}
The following statements hold:\\
1. Let $C$ be a Hermitian dual-containing  linear $[m,k,d]_{q^2}$-code (that is, $C$ satisfies $C^{\perp_h} \subseteq C$), then there exists an $[[m, 2k-m, \geq d]]_q$ stabilizer quantum  code.\\
2. Let $C_j$  be two $q$-ary linear codes, $j=1, 2$, with parameters $[m,k_j,d_j]_q$ and assume that $C_2^{\perp_e} \subseteq C_1$.  Then, there exists an $[[m, k_1+k_2-m,d]]_q$ stabilizer quantum  code with $d= \min \left\{\mathrm{wt}(\mathbf{z}) : \mathbf{z} \in (C_1 \setminus C_2^{\perp_e}) \cup (C_2 \setminus C_1^{\perp_e})\right\}$.
\end{teo}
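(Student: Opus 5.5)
Theorem \ref{resto} is a standard result (Ketkar et al., Calderbank--Shor--Steane and its Hermitian analogue), so the plan reduces both items to the stabilizer formalism recalled above, namely \cite[Theorem 13]{Ketkar}. That result guarantees an $((m,K,d))_q$ code once we have an additive code $D\subseteq\mathbb{F}_q^{2m}$ with $\#D=q^m/K$, $D\subseteq D^{\perp_s}$ and $\mathrm{swt}(D^{\perp_s}\setminus D)=d$. In each case we build $D$ from the given linear codes, compute $\#D$, and compare symplectic weights with Hamming weights.

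\textbf{Item 2.} Take
\[
D=C_1^{\perp_e}\times C_2^{\perp_e}\subseteq\mathbb{F}_q^{2m},
\]
with the symplectic form $\langle(\mathbf{a}|\mathbf{b}),(\mathbf{a}'|\mathbf{b}')\rangle_s=\mathrm{tr}(\mathbf{a}\cdot\mathbf{b}'-\mathbf{a}'\cdot\mathbf{b})$.
\begin{enumerate}
\item Since $D$ is $\mathbb{F}_q$-linear, nondegeneracy of the trace form lets us drop the trace. A direct check then gives $D^{\perp_s}=C_2\times C_1$.
\item The hypothesis $C_2^{\perp_e}\subseteq C_1$ is equivalent to $C_1^{\perp_e}\subseteq C_2$. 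Together these give $D\subseteq D^{\perp_s}$.
\item The size is $\#D=q^{(m-k_1)+(m-k_2)}$, so $K=q^{k_1+k_2-m}$.
\end{enumerate}
For the distance, let $(\mathbf{a}|\mathbf{b})\in(C_2\times C_1)\setminus(C_1^{\perp_e}\times C_2^{\perp_e})$. Then either $\mathbf{a}\in C_2\setminus C_1^{\perp_e}$ or $\mathbf{b}\in C_1\setminus C_2^{\perp_e}$. Since $\mathrm{swt}(\mathbf{a}|\mathbf{b})\ge\max\{\mathrm{wt}(\mathbf{a}),\mathrm{wt}(\mathbf{b})\}$, the symplectic weight is at least the stated minimum. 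Equality is attained by vectors of the form $(\mathbf{a}|\mathbf{0})$ or $(\mathbf{0}|\mathbf{b})$.

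\textbf{Item 1.} Set $E=C^{\perp_h}\subseteq\mathbb{F}_{q^2}^m$, which is Hermitian self-orthogonal and has $q^2$-dimension $m-k$.
\begin{enumerate}
\item Fix $\omega$ with $\{1,\omega\}$ a basis of $\mathbb{F}_{q^2}/\mathbb{F}_q$. Use the standard $\mathbb{F}_q$-isomorphism $\phi:\mathbb{F}_q^{2m}\to\mathbb{F}_{q^2}^m$, $(\mathbf{a}|\mathbf{b})\mapsto\omega\mathbf{a}+\omega^q\mathbf{b}$.
\item For this $\phi$, the trace-symplectic form corresponds, up to a nonzero scalar, to the trace-alternating form $\mathrm{tr}_{q^2/q}\big((\mathbf{x}\cdot_h\mathbf{y}-\mathbf{y}\cdot_h\mathbf{x})/(\omega^{2q}-\omega^2)\big)$. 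Moreover, $\mathrm{swt}$ corresponds to the Hamming weight.
\item Let $D=\phi^{-1}(E)$. For $\mathbb{F}_{q^2}$-linear $E$, the trace-alternating dual coincides with $E^{\perp_h}=C$. Hence $D^{\perp_s}=\phi^{-1}(C)$ and $D\subseteq D^{\perp_s}$.
\item The size is $\#D=q^{2(m-k)}$, so $K=q^m/q^{2(m-k)}=q^{2k-m}$.
\item The distance satisfies $\mathrm{swt}(D^{\perp_s}\setminus D)\ge\mathrm{wt}(C)=d$, which gives the bound $\ge d$ in the statement.
\end{enumerate}

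The main obstacle is the identification in Item 1 of the trace-alternating dual of an $\mathbb{F}_{q^2}$-linear code with its Hermitian dual. I would prove it by the standard argument: if $\mathrm{tr}(c\,\mathbf{x}\cdot_h\mathbf{y})$ vanishes for all scalars $c$ (using linearity of $E$), then $\mathbf{x}\cdot_h\mathbf{y}=0$. A dimension count then gives equality of the two duals. Degenerate cases, such as $K=1$, are handled by the same formula in \cite[Theorem 13]{Ketkar}.
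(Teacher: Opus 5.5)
Your reduction to \cite[Theorem 13]{Ketkar} is the standard derivation of both items; the paper itself quotes Theorem~\ref{resto} from that literature without proof. Item~2 is correct as written: $D^{\perp_s}=C_2\times C_1$, the size count, and the distance argument (with $(\mathbf{a}|\mathbf{0})$ and $(\mathbf{0}|\mathbf{b})$ attaining the minimum) are all fine.

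Item~1, however, has two errors in the setup, and between them they break the argument in every characteristic.

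\emph{The basis condition.} Requiring $\{1,\omega\}$ to be a basis of $\mathbb{F}_{q^2}/\mathbb{F}_q$ does not make $\phi$ an isomorphism; you need $\{\omega,\omega^q\}$ to be a basis, i.e.\ a normal basis. Take $q$ odd and $\omega\notin\mathbb{F}_q$ with $\omega^2\in\mathbb{F}_q$ (e.g.\ $\omega^2=-1$ in $\mathbb{F}_9$). Then $\omega^q=-\omega$, so $\phi(\mathbf{a}|\mathbf{b})=\omega(\mathbf{a}-\mathbf{b})$. This map is not injective, $\mathrm{swt}$ no longer matches $\mathrm{wt}$, and your denominator $\omega^{2q}-\omega^2$ equals $0$.

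\emph{The trace.} Since $\mathbf{y}\cdot_h\mathbf{x}=(\mathbf{x}\cdot_h\mathbf{y})^q$, both the numerator $\mathbf{x}\cdot_h\mathbf{y}-\mathbf{y}\cdot_h\mathbf{x}$ and the denominator $\omega^{2q}-\omega^2$ are negated by the Frobenius map. Their quotient therefore already lies in $\mathbb{F}_q$, and $\mathrm{tr}_{q^2/q}$ merely multiplies it by $2$. For $q$ even your form is identically zero. Its dual is then all of $\mathbb{F}_{q^2}^m$, and neither the claimed correspondence with the trace-symplectic form nor your nondegeneracy argument survives.

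\emph{The fix.} Both problems disappear with the normalization used in \cite{Ketkar}.
\begin{enumerate}
\item Take $\{\omega,\omega^q\}$ a normal basis; this forces $\omega^{2q}\neq\omega^2$.
\item Use $\mathrm{tr}_{q/p}$ of the $\mathbb{F}_q$-valued quotient or, since all your codes are $\mathbb{F}_q$-linear, the quotient itself. For $\mathbf{x}=\phi(\mathbf{a}|\mathbf{b})$ and $\mathbf{y}=\phi(\mathbf{a}'|\mathbf{b}')$, a direct expansion shows the quotient equals exactly $\mathbf{b}\cdot\mathbf{a}'-\mathbf{a}\cdot\mathbf{b}'$. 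This is minus the untraced symplectic form of your Item~2, so no ``up to a scalar'' is needed.
\item Run the nondegeneracy step on this actual form, not on $\mathrm{tr}(c\,\mathbf{x}\cdot_h\mathbf{y})$. Replacing $\mathbf{y}$ by $c\mathbf{y}$ with $c\in\mathbb{F}_{q^2}$ turns the quotient into $(t^q-t)/(\omega^{2q}-\omega^2)$, where $t=c\,(\mathbf{x}\cdot_h\mathbf{y})^q$. If $\mathbf{x}\cdot_h\mathbf{y}\neq 0$, this runs over all of $\mathbb{F}_q$ as $c$ varies. Hence neither it nor its $\mathrm{tr}_{q/p}$ can vanish for every $c$.
\end{enumerate}
With these corrections the rest of Item~1 goes through as you wrote it: $D^{\perp_s}=\phi^{-1}(C)$, $\#D=q^{2(m-k)}$, $K=q^{2k-m}$, and $\mathrm{swt}(D^{\perp_s}\setminus D)\geq \dis(C)$.
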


\subsection{Quantum \texorpdfstring{$(r,\delta)$}{lr4}-locally recoverable codes}
\label{sect25}
Generalizing concepts introduced in \cite{Golowich, Golowich2}, quantum $(r,\delta)$-locally recoverable codes --quantum $(r,\delta)$-LRCs-- were introduced in \cite{QLRC}. Before giving the definition, we need to introduce the meaning of local recoverability for pairs of sets of indices.

Assume we have a quantum code $Q \subseteq \mathbb{C}^{q^m}$ and two subsets $\emptyset \neq I \subsetneq J \subseteq \{1, \ldots, m\}$. Set $\Gamma$ the complete depolarizing channel on a simple qudit in $\mathbb{C}^q$ and define $\Gamma^I = \bigotimes_{i=1}^m \Gamma_i$, where $\Gamma_i = \Gamma$ when $i \in I$ and $\Gamma_i$ is the $q \times q$ identity matrix, otherwise. We say that $Q$ is an {\it $(I,J)$-locally recoverable code} whenever there is  a trace-preserving quantum operation $\mathcal{R}_{Q,I}^{J}$, acting only on the qudits corresponding to $J$, that satisfies
\begin{equation*}
  \mathcal{R}_{Q,I}^J \circ \Gamma^I (\ket{\varphi}\bra{\varphi}) = \ket{\varphi}\bra{\varphi} \label{eq2}
\end{equation*}
for all $\ket{\varphi} \in Q$.

Then, the aforementioned definition is the following.

\begin{de}
\label{lcr}
A quantum code  $Q \subseteq \mathbb{C}^{q^m}$ is a {\it quantum $(r,\delta)$-locally recoverable code}  if for each index $i \in \{1, \ldots, m\}$, there exists a set $J \subseteq \{1, \ldots, m\}$ containing $i$ with $\# (J) \leq r + \delta-1$ satisfying that for any subset $I \subseteq J$ of cardinality $\delta -1$, $Q$ is $(I,J)$-locally recoverable.
\end{de}

When the quantum code is a stabilizer one and comes from a Hermitian or Euclidean self-orthogonal linear code $C$ (see, Theorem \ref{resto} where one should change $C$ by its Hermitian or Euclidean dual), we denote it by $Q'(C)$ and the following characterizing result holds (see \cite{QLRC}).

\begin{teo}
\label{quantumH-E}
The following characterizations hold:
\begin{enumerate}
\item  Let $Q'(C) \subseteq \mathbb{C}^{q^m}$ be a stabilizer code given by a $q^2$-ary Hermitian self-orthogonal code $C \subseteq \mathbb{F}_{q^2}^m$. Then $Q'(C)$ is a quantum $(r,\delta)$-LRC if and only if, for every index $i \in \{1, \ldots, m\}$, there exists a set $J \ni i$ of cardinality $\leq r + \delta -1$ satisfying the following equality for every subset  $I \subset J$ with $|I| = \delta-1$:
\begin{equation*}
\sigma_I\left[\pi_J \left(C^{\perp_h}\right)\right] = \sigma_I(C).
\end{equation*}
\item Let $Q'(C) \subseteq \mathbb{C}^{q^m}$ be a stabilizer code given by a $q$-ary Euclidean self-orthogonal code $C \subseteq \mathbb{F}_{q}^m$. Then $Q'(C)$ is a quantum $(r,\delta)$-LRC if and only if, for every index $i \in \{1, \ldots, m\}$, there exists a set $J \ni i$ of cardinality $\leq r + \delta -1$ satisfying the following equality for every subset $I \subset J$ with $|I| = \delta-1$:
\begin{equation*}
\sigma_I\left[\pi_J \left(C^{\perp_e}\right)\right] = \sigma_I(C).
\end{equation*}
\end{enumerate}
\end{teo}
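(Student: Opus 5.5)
Theorem~\ref{quantumH-E} is a characterization taken from \cite{QLRC}. The plan is to reduce quantum local recoverability of the stabilizer code $Q'(C)$ to a statement about the stabilizer group restricted to $J$. I treat the Hermitian case (1); the Euclidean case (2) is word for word the same, with $\perp_e$ replacing $\perp_h$ and the symplectic form induced by the Euclidean product.

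\emph{Step 1 (recoverability as error correction).} For a fixed pair $I\subsetneq J$, $(I,J)$-local recoverability of $Q'(C)$ means that the reduced state on the qudits in $J$ allows correction of the erasure of $I$. By the Knill--Laflamme conditions restricted to $J$, erasure of $I$ is correctable by an operation acting only on $J$ exactly when every Pauli error supported on $I$ that commutes with the stabilizer elements whose support lies in $J$ (after restriction to $J$) is itself, on $J$, an element of that restricted stabilizer. In other words, the local code on $J$ must detect every error supported on $I$. I make this precise by tracing out the complement of $J$. The reduced code is then a stabilizer code on $J$, whose stabilizer corresponds to $\sigma_J(C)$ and whose normalizer corresponds to $\pi_J(C^{\perp_h})$.

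\emph{Step 2 (translation to classical codes).} Under the standard correspondence of Theorem~\ref{resto}, Pauli errors on $J$ correspond to vectors of $\mathbb{F}_{q^2}^{J}$. Detectability of every error supported on $I$ then becomes the requirement that every vector of $\pi_J(C^{\perp_h})$ with support in $I$ is already a vector coming from $C$ with support in $I$. That is,
\[
\sigma_I\left[\pi_J\left(C^{\perp_h}\right)\right]\subseteq \sigma_I(C).
\]
The reverse inclusion is automatic. Indeed, $C\subseteq C^{\perp_h}$, and a codeword of $C$ supported in $I\subseteq J$ stays in $\pi_J(C^{\perp_h})$ with the same support, so $\sigma_I(C)\subseteq\sigma_I[\pi_J(C^{\perp_h})]$. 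Hence recoverability for the pair $(I,J)$ is equivalent to the stated equality.

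\emph{Step 3 (quantifiers).} Quantifying over all $I\subset J$ with $|I|=\delta-1$ and over one such $J\ni i$ for each index $i$ gives Definition~\ref{lcr}, which proves the equivalence.

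The main obstacle is Step 1, namely justifying that an operation acting only on $J$ exists if and only if the local detection condition holds. Both directions need care. For necessity, I would argue that if some nontrivial logical Pauli of the reduced code is supported in $I$, then the depolarizing channel on $I$ leaks information that cannot be undone on $J$: by the no-cloning/complementarity argument, the environment would acquire information about the state. For sufficiency, I would construct the recovery explicitly: measure the syndromes of the stabilizers supported in $J$ and apply the corresponding correction. Here I would rely on the fact that $\pi_J(C^{\perp_h})=[\sigma_J(C)]^{\perp_h}$, by (\ref{dualpuncth}), to identify the normalizer of the local stabilizer.
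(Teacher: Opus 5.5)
The paper gives no proof of this statement; it quotes it from \cite{QLRC}, so your sketch has to stand on its own. The overall route is sound: pass to the local stabilizer code $Q_J$ on the qudits of $J$, whose stabilizer is $\sigma_J(C)$ and whose normalizer is $[\sigma_J(C)]^{\perp_h}=\pi_J(C^{\perp_h})$, then apply the erasure form of the Knill--Laflamme conditions. Steps 2 and 3 are correct, including the automatic inclusion $\sigma_I(C)\subseteq\sigma_I[\pi_J(C^{\perp_h})]$. Sufficiency is also fine once you note one thing. Every code state of $Q'(C)$ lies in $Q_J\otimes\mathbb{C}^{q^{m-|J|}}$, because each stabilizer element supported in $J$ fixes it. A recovery that is the identity on the operators of $Q_J$ therefore still works with $J^c$ attached. ``Tracing out $J^c$'' alone would not give you this.

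\textbf{The gap is the necessity direction.} You argue that a nontrivial logical Pauli of $Q_J$ supported on $I$ makes the environment acquire information about ``the state''. If that means the encoded state, the claim is false. Take $C=\langle(1,0,1)\rangle\subseteq\mathbb{F}_4^3$, $J=\{1,2\}$ and $I=\{1\}$.
\begin{itemize}
\item Since $\sigma_J(C)=0$, every nonzero vector supported on $I$ is a nontrivial logical of $Q_J$.
\item But $C^{\perp_h}=\{(a,b,a)\}$ contains no nonzero vector supported on $\{1\}$. So erasure of qudit $1$ is correctable for $Q'(C)$, and the environment learns nothing about the encoded state.
\item Yet $(I,J)$-recovery fails, as the criterion confirms. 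What depolarization destroys is the correlation between qudit $1$ and qudit $3\in J^c$.
\end{itemize}

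\textbf{How to close the gap.} The right reference system is $WJ^c$, where $W$ purifies the code. You need the missing fact that $WJ^c$ is a \emph{full} reference for $Q_J$:
\begin{enumerate}
\item Write $\Pi_Q=|S|^{-1}\sum_{g\in S}g$ for the stabilizer group $S$. The partial trace over $J^c$ kills every $g$ that is not the identity on $J^c$, so $\mathrm{tr}_{J^c}\Pi_Q$ is proportional to $\Pi_{Q_J}$.
\item Hence a maximally entangled state $\psi$ between $W$ and $Q'(C)$, viewed on $J\otimes(J^cW)$, purifies the maximally mixed state of $Q_J$.
\item By polarization, if $\Phi=\mathcal{R}^J_{Q,I}\circ\Gamma^I$ fixes every $\ket{\varphi}\bra{\varphi}$ with $\varphi\in Q'(C)$, then $\Phi\otimes\mathrm{id}_W$ fixes $\psi$.
\item Every vector of $Q_J\otimes\mathcal{H}'$ has the form $(\mathrm{Id}_J\otimes X)\ket{\psi}$ for some operator $X$ on $J^cW$, and $\Phi$ acts only on $J$. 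So $\Phi$ is the identity on all operators of $Q_J$.
\item Thus erasure of $I$ is correctable for $Q_J$, and Knill--Laflamme gives $\sigma_I[\pi_J(C^{\perp_h})]\subseteq\sigma_I(C)$.
\end{enumerate}
Without this marginal computation you only know that reduced states lie in $Q_J$. That does not make correctability on all of $Q_J$ necessary.
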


Concerning Singleton-like bounds for our  quantum $(r,\delta)$-LRCs, again in \cite{QLRC}, it is proved the following theorem which considers quantum codes $Q(C)$ coming from linear codes $C$ as described in Theorem \ref{resto}.

\begin{teo}
\label{SingletonQ}
Let $C$ be a $q^2$-ary  linear code included in $\mathbb{F}_{q^2}^m$ (respectively, a  $q$-ary linear code included in $\mathbb{F}_{q}^m$). Assume that $C$ is Hermitian  (respectively, Euclidean) dual-containing, $\dim C = \frac{m+k}{2}$ and  $C$ is an $(r,\delta)$-LRC. Then, the quantum $(r,\delta)$-LRC coming from $C$, $Q(C)$, has parameters $[[m,k, \geq \dis(C)]]_q$, which satisfy
\begin{equation}
\label{eq17}
k + 2 \dis(C) + 2\left(\left\lceil\frac{m+k}{2r}\right\rceil-1\right)(\delta-1) \leq m+2.
\end{equation}
\end{teo}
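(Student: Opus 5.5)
The plan is to reduce the statement to the classical Singleton-like bound of Proposition \ref{Singleton-2}, applied to the classical code $C$ itself rather than to the quantum code. The argument has three steps: read off the quantum parameters from Theorem \ref{resto}, apply Proposition \ref{Singleton-2} to $C$, and rescale the resulting inequality.

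\emph{Step 1: parameters of $Q(C)$.} Write $k' = \dim C = \frac{m+k}{2}$.

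In the Hermitian case, part 1 of Theorem \ref{resto} applies directly to the dual-containing code $C$. It gives a stabilizer code with parameters $[[m, 2k'-m, \geq \dis(C)]]_q = [[m,k,\geq \dis(C)]]_q$.

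In the Euclidean case, use part 2 of Theorem \ref{resto} with $C_1=C_2=C$; the hypothesis $C^{\perp_e}\subseteq C$ is exactly $C_2^{\perp_e}\subseteq C_1$. The dimension is $k_1+k_2-m = 2k'-m = k$. The minimum distance is the minimum weight over $C\setminus C^{\perp_e}$, which is a subset of $C\setminus\{0\}$, so it is at least $\dis(C)$.

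That $Q(C)$ is in fact a quantum $(r,\delta)$-LRC is part of the framework recalled from \cite{QLRC}: for stabilizer codes built from dual-containing $(r,\delta)$-LRCs, the recovery sets of $C$ serve as quantum recovery sets (cf.\ Theorem \ref{quantumH-E}, under the standing condition $\delta\le \dis$ of the dual). The inequality itself, however, only needs the classical locality of $C$, so this point is not needed for what follows.

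\emph{Step 2: classical bound.} Since $C$ is a classical $(r,\delta)$-LRC with parameters $[m, \frac{m+k}{2}, \dis(C)]$ over $\mathbb{F}_{q^2}$ (respectively, $\mathbb{F}_q$), Proposition \ref{Singleton-2} gives
\[
\frac{m+k}{2} + \dis(C) + \left(\left\lceil \frac{m+k}{2r}\right\rceil -1\right)(\delta-1) \leq m+1 .
\]
Note that Proposition \ref{Singleton-2} holds over any finite field, so the same inequality is valid in both the Hermitian and the Euclidean case.

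\emph{Step 3: rescaling.} Multiplying the inequality by $2$ gives
\[
m+k+2\dis(C)+2\left(\left\lceil \frac{m+k}{2r}\right\rceil -1\right)(\delta-1)\le 2m+2 .
\]
Subtracting $m$ from both sides yields exactly (\ref{eq17}).

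The only step requiring care is Step 1: checking that the quantum dimension equals $k$ in both cases and that the quantum minimum distance is at least $\dis(C)$. The latter rests on the elementary inclusion $C\setminus C^{\perp}\subseteq C\setminus\{0\}$. The rest is arithmetic.
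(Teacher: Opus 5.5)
Your proposal is correct: (\ref{eq17}) is exactly twice the classical Singleton-like bound of Proposition \ref{Singleton-2} applied to $C$ with $\dim C=\frac{m+k}{2}$, minus $m$, and the parameters $[[m,k,\geq \dis(C)]]_q$ follow from Theorem \ref{resto} as you describe. The paper does not prove this theorem itself but imports it from \cite{QLRC}; the derivation you give is the intended one, as the remark right after the theorem, that the bound involves $\dis(C)$ rather than $\dis(Q(C))$, indicates.
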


Notice that Inequality (\ref{eq17}) depends on the minimum distance $\dis(C)$ of the code $C$ and not on the minimum distance of $Q(C)$, but for pure codes $Q(C)$, it can be written:
\begin{equation}
\label{PURE}
k + 2 \dis(Q(C)) + 2\left(\left\lceil\frac{m+k}{2r}\right\rceil-1\right)(\delta-1) \leq m+2.
\end{equation}
As a consequence, a concept of optimal pure quantum $(r,\delta)$-LRCs can be given for  stabilizer quantum codes coming from linear codes:
\begin{de}
{\rm A pure quantum $(r, \delta)$-LRC,  $Q(C)$, is said to be {\it optimal} whenever its parameters and $(r, \delta)$-locality attain the bound in (\ref{PURE}).}
\end{de}

\section{Quantum locally recoverable codes given by BCH and homothetic-BCH codes}
\label{LA3}

BCH codes are a very useful class of classical error-correcting codes. In this section, we explain how quantum $(r,\delta)$-LRCs can be constructed from BCH and H-BCH codes. Theorem \ref{SingletonQ} shows that we only need to construct quantum codes $Q(C)$ given by $(r,\delta)$-LRCs $C$.


\subsection{Local recoverability}
\label{Subs31}
In this subsection we explain how dual codes of subfield-subcodes of $\{1\}$-affine variety codes and of H-BCH codes can be regarded as $(r, \delta)$-LRCs. Keep the above notation and consider the set
$$P_{\mathfrak{A}}:=\left\{1,\zeta_{n},\dots,\zeta_{n}^{n-1},a,a\zeta_{n},
\dots,a\zeta_{n}^{n-1},\dots,a^{\frac{N}{n}-1},a^{\frac{N}{n}-1}\zeta_{n},\dots,a^{\frac{N}{n}-1}\zeta_{n}^{n-1}\right\},$$
that is the above defined set $P$ but for $\lambda = \frac{N}{n}$. Then, one can identify the code $\mathcal{H}_\Delta^{P_{\mathfrak{A}}}$, as introduced in Definition \ref{definit}, with $\mathcal{C}_\Delta^N$ because we have only permuted the symbols in each word by  the same permutation. Note that $\mathcal{H}_\Delta^{P_{\mathfrak{A}}}$ is a $\{1\}$-affine variety code, but not an H-BCH code because $\lambda n$ divides $N$. From now on, we will consider evaluation maps on $P_{\mathfrak{A}}$ or on some of its subsets.



Given a set $D \subseteq \mathbb{Z}_N$, recall that its elements are representatives of the elements in the congruence ring $\mathbb{Z}/N\mathbb{Z}$. For any positive divisor $M$ of $N$, we define $D_M$ as the subset of $\mathbb{Z}_M$ given by the representatives of the classes in $\mathbb{Z}/M\mathbb{Z}$ of the elements in $D$; clearly if $D \subseteq \mathbb{Z}_M$, $D_M = D$. Also, define
\[
- D_M := \left\{ M-i \, \mod M \; : \; i \in D \right\}.
\]
And when $s= 2 \varsigma$, set $- D_M^q := \left\{ M- qi \, \mod M \; : \; i \in D \right\}.$

As above, write $\mathcal{S}_\Delta^{P_{\mathfrak{A}},q}:= \mathcal{H}_\Delta^{P_{\mathfrak{A}}} \cap \mathbb{F}_q^N$. The next result allows us to regard $\left(\mathcal{S}_\Delta^{P_{\mathfrak{A}},q}\right)^{\perp_e}$ as an $(r,\delta)$-LRC whenever suitable sets $\Delta$ are chosen. In the subsequent result we will see a similar result for $\left(\mathcal{S}_\Delta^{P_{\mathfrak{A}},q^2}\right)^{\perp_h}$ when $s=2 \varsigma$. Both are easy consequences of a very slight generalization of Theorem 12 in \cite{QZF2021}. First we need to give a definition.

\begin{de}
\label{def13}
{\rm
Let $M$ be a positive integer which divides $N$. A subset $D$ of $\mathbb{Z}_M$ is said to be {\it $q$-complete modulo $M$} if it is a union of cyclotomic cosets $\Lambda^q_e$ in $\mathbb{Z}_M$. When $s= 2 \varsigma$, $D$ is {\it $q^2$-complete modulo $M$} whenever it is a union of cyclotomic cosets $\Lambda^{q^2}_e$.
}
\end{de}

\begin{pro}
\label{era1}
Let $A$ and $B$ be subsets of $\mathbb{Z}_N$. Suppose that $$A+B := \{a+b \; : \; a \in A, \; b\in B\} \subseteq \Delta,$$ $\Delta$ being a $q$-complete modulo $N$ subset of $\mathbb{Z}_N$. Denote by $d_B^{\perp_e}$  the minimum distance of the code $\left(\mathcal{H}_B^{P_{\mathfrak{A}}}\right)^{\perp_e}$  and by $d_A$ the minimum distance of $\mathcal{H}_A^{P_{\mathfrak{A}}}$. Assume that $\delta$ is a positive integer such that $\delta \leq \min\{d_A,  d_B^{\perp_e}\}$.
Then, the code $\left(\mathcal{S}_\Delta^{P_{\mathfrak{A}},q}\right)^{\perp_e}$  is a $q$-ary $(r,\delta)$-LRC with locality $(d_A - \delta +1, \delta)$.
\end{pro}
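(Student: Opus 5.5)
The plan is to adapt the argument of \cite[Theorem 12]{QZF2021}: the minimum-weight codewords of $\mathcal{H}_A^{P_{\mathfrak{A}}}$ will serve as recovery sets, and the condition $A+B\subseteq\Delta$ turns local parity checks into membership in $\left(\mathcal{H}_B^{P_{\mathfrak{A}}}\right)^{\perp_e}$. Throughout I identify $\mathcal{H}_\Delta^{P_{\mathfrak{A}}}$ with $\mathcal{C}_\Delta^N$, as explained above. For a point $\gamma_j\in P_{\mathfrak{A}}$, evaluating $X^e$ gives $\gamma_j^e$.

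\emph{Step 1: description of the dual.} I first show that
\[
\left(\mathcal{S}_\Delta^{P_{\mathfrak{A}},q}\right)^{\perp_e}=\left(\mathcal{H}_\Delta^{P_{\mathfrak{A}}}\right)^{\perp_e}\cap\mathbb{F}_q^N .
\]
Since $\Delta$ is $q$-complete, the Frobenius map $x\mapsto x^q$ permutes $\{\ev(X^e) : e\in\Delta\}$, so $\mathcal{H}_\Delta^{P_{\mathfrak{A}}}$ is Galois-invariant. By the standard argument (Delsarte / descent for Galois-invariant codes), it then has a basis of vectors in $\mathbb{F}_q^N$, namely a basis of $\mathcal{S}_\Delta^{P_{\mathfrak{A}},q}$. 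A vector of $\mathbb{F}_q^N$ is therefore orthogonal to $\mathcal{S}_\Delta^{P_{\mathfrak{A}},q}$ exactly when it is orthogonal to $\mathcal{H}_\Delta^{P_{\mathfrak{A}}}$. This is the only place where $q$-completeness is used.

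\emph{Step 2: recovery sets.} Pick $\mathbf{c}=\ev(f)\in\mathcal{H}_A^{P_{\mathfrak{A}}}$ of weight $d_A$, with $f$ in the span of $X^e$, $e\in A$. For any $t$, the polynomial $f(a^tX)$ lies in the same span, since each monomial is only rescaled. Its evaluation is a coordinate permutation of $\mathbf{c}$ induced by multiplying the points by $a^t$, which acts transitively on $P_{\mathfrak{A}}$. Hence for each index $i$ there is a weight-$d_A$ codeword $\mathbf{c}\in\mathcal{H}_A^{P_{\mathfrak{A}}}$ whose support $J$ contains $i$. Then $\#J=d_A=r+\delta-1$ with $r=d_A-\delta+1$, and $r\ge 1$ because $\delta\le d_A$.

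\emph{Step 3: distance of the punctured code (the main point).} Let $\mathbf{x}\in\left(\mathcal{S}_\Delta^{P_{\mathfrak{A}},q}\right)^{\perp_e}$ with $\pi_J(\mathbf{x})\neq 0$; I must show $\mathrm{wt}(\pi_J(\mathbf{x}))\ge\delta$. For $b\in B$, the componentwise product $\mathbf{c}\ast\ev(X^b)=\ev(fX^b)$ lies in $\mathcal{H}_{A+B}^{P_{\mathfrak{A}}}\subseteq\mathcal{H}_\Delta^{P_{\mathfrak{A}}}$. By Step 1,
\[
0=\sum_j x_jc_j\gamma_j^{\,b}\quad\text{for all } b\in B,
\]
that is, $\mathbf{x}\ast\mathbf{c}\in\left(\mathcal{H}_B^{P_{\mathfrak{A}}}\right)^{\perp_e}$. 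Since $\mathbf{c}$ is supported exactly on $J$ with nonzero entries, $\mathbf{x}\ast\mathbf{c}$ is nonzero and has weight $\mathrm{wt}(\pi_J(\mathbf{x}))$. Therefore $\mathrm{wt}(\pi_J(\mathbf{x}))\ge d_B^{\perp_e}\ge\delta$, which gives $\dis[\pi_J(C)]\ge\delta$ for $C=\left(\mathcal{S}_\Delta^{P_{\mathfrak{A}},q}\right)^{\perp_e}$ and so locality $(d_A-\delta+1,\delta)$.

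I expect the main care to be needed in Step 1: the identification of the dual of the subfield-subcode with the $\mathbb{F}_q$-points of the dual of the big code must be justified correctly. The rest is a direct computation.
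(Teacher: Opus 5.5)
Your proof is correct and follows essentially the same route as the paper. You identify $\bigl(\mathcal{S}_\Delta^{P_{\mathfrak{A}},q}\bigr)^{\perp_e}$ with $\bigl(\mathcal{H}_\Delta^{P_{\mathfrak{A}}}\bigr)^{\perp_e}\cap\mathbb{F}_q^N$ and then write out explicitly the argument of \cite[Theorem 12]{QZF2021} that the paper only cites: minimum-weight words of $\mathcal{H}_A^{P_{\mathfrak{A}}}$, moved to every coordinate by the cyclic action, give the recovery sets, and $A+B\subseteq\Delta$ forces $\mathbf{x}\ast\mathbf{c}\in\bigl(\mathcal{H}_B^{P_{\mathfrak{A}}}\bigr)^{\perp_e}$. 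The only difference is that you justify the identification by Galois descent for the Frobenius-stable code $\mathcal{H}_\Delta^{P_{\mathfrak{A}}}$, whereas the paper derives it from the chain of Delsarte-type equalities (\ref{EE}) cited from \cite{galindo-hernando}; both work.
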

\begin{proof}
Consider the code $\mathcal{H}_\Delta^{P_{\mathfrak{A}}} \subseteq \mathbb{F}_{q^s}^N$. Its Euclidean dual $\left(\mathcal{H}_\Delta^{P_{\mathfrak{A}}}\right)^{\perp_e}$ is a cyclic code $\mathrm{Cy}(\Delta)$ generated by $\prod_{\ell \in \Delta} (X-a^\ell)$, $a$ being as in Subsection \ref{BCH-1}. Now $\mathcal{H}_A^{P_{\mathfrak{A}}}= \left( \mathrm{Cy}(A)\right)^{\perp_e}$ and  $\left(\mathcal{H}_B^{P_{\mathfrak{A}}}\right)^{\perp_e} = \mathrm{Cy}(B)$. Following the proof of \cite[Theorem 12]{QZF2021} and, with the notation in that proof, considering $\delta -1$ columns (instead of $d_B -1$) of the matrix corresponding to the cyclic code $\overline{\mathcal{C}}_B$ used in \cite[Theorem 12]{QZF2021}, it is straightforward to deduce that the subfield-subcode $\left(\mathcal{H}_\Delta^{P_{\mathfrak{A}}}\right)^{\perp_e} \cap \mathbb{F}_{q}^N$ is an $(r,\delta)$-LRC with locality $(d_A - \delta +1, \delta)$. In our conditions, $\left(\mathcal{H}_\Delta^{P_{\mathfrak{A}}}\right)^{\perp_e} \cap \mathbb{F}_{q}^N = \left(\mathcal{S}_\Delta^{P_{\mathfrak{A}},q}\right)^{\perp_e}$, which concludes the proof.

To see the above equality of codes, it suffices to consider the following chain of equalities:
\begin{equation}
\label{EE}
\left(\mathcal{H}_\Delta^{P_{\mathfrak{A}}}\right)^{\perp_e} \cap \mathbb{F}_{q}^N =
\left(\mathcal{S}_{\Delta^{\perp_e}}^{P_{\mathfrak{A}},q}\right) =
\mathbf{tr} \left(\mathcal{H}_{\Delta^{\perp_e}}^{P_{\mathfrak{A}}}\right) =
\left(\mathcal{S}_\Delta^{P_{\mathfrak{A}},q}\right)^{\perp_e},
\end{equation}
where $\mathbf{tr}$ is given by the trace map from $\mathbb{F}_{q^s}$ to $\mathbb{F}_{q}$ applied componentwise and $\Delta^{\perp_e} = \mathbb{Z}_N \setminus -\Delta_N$. The first equality follows from \cite[Proposition 3]{galindo-hernando}, which proves that $\left(\mathcal{H}_\Delta^{P_{\mathfrak{A}}}\right)^{\perp_e} = \left(\mathcal{H}_{\Delta^{\perp_e}}^{P_{\mathfrak{A}}}\right)$, and the second and third ones are true by \cite[Theorem 4]{galindo-hernando} and the proof of Theorem 5 in \cite{galindo-hernando}.

\end{proof}

Our next result assumes $s= 2\varsigma$ and writes $\mathfrak{p} := q^{\varsigma}$ and for a set $\Delta \subseteq \mathbb{Z}_N$, we regard $\mathcal{H}_\Delta^{P_{\mathfrak{A}}}$ as a code over $\mathbb{F}_{q^{s}}$ but also over $\mathbb{F}_{\mathfrak{p}^{2}}$. We write $\left(\mathcal{H}_\Delta^{P_{\mathfrak{A}}}\right)^{\perp_h}$ the Hermitian dual of this last code.

\begin{pro}
\label{era1-B}
Keep the above notation and suppose $s= 2\varsigma$. Let $A$ and $B$ be subsets of $\mathbb{Z}_N$. Suppose that $$A+B := \{a+b \; : \; a \in A, \; b\in B\} \subseteq \Delta,$$ $\Delta$ being a $q^2$-complete modulo $N$ subset of $\mathbb{Z}_N$. Denote by $d_B^{\perp_{h}}$  the minimum distance of the code $\left(\mathcal{H}_B^{P_{\mathfrak{A}}}\right)^{\perp_h}$  and by $d_A$ the minimum distance of $\mathcal{H}_A^{P_{\mathfrak{A}}}$. Assume that $\delta$ is a positive integer such that $\delta \leq \min\{d_A,  d_B^{\perp_{h}}\}$.
Then, the code $\left(\mathcal{S}_\Delta^{P_{\mathfrak{A}},{q^2}}\right)^{\perp_{h}}$  is a $q^2$-ary $(r,\delta)$-LRC with locality $(d_A - \delta +1, \delta)$.
\end{pro}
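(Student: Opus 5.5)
The plan mirrors the proof of Proposition \ref{era1}: reduce Hermitian duality to Euclidean duality twisted by the Frobenius $x\mapsto x^{q}$. Then apply the slight generalization of \cite[Theorem 12]{QZF2021} already used there, now with subfield $\mathbb{F}_{q^2}$ instead of $\mathbb{F}_q$.

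First I will identify the code. For any code $C\subseteq \mathbb{F}_{q^2}^N$ one has $C^{\perp_h}=(C^{q})^{\perp_e}$, where $C^q$ applies $x\mapsto x^q$ componentwise. Since $\Delta$ is $q^2$-complete modulo $N$, the code $\mathcal{S}_\Delta^{P_{\mathfrak{A}},q^2}$ is the $q^2$-ary subfield-subcode of $\mathcal{H}_\Delta^{P_{\mathfrak{A}}}$, and raising to the $q$-th power sends $\ev(X^e)$ to $\ev(X^{qe})$. Hence
$$\left(\mathcal{S}_\Delta^{P_{\mathfrak{A}},q^2}\right)^{q}=\mathcal{S}_{q\Delta}^{P_{\mathfrak{A}},q^2}, \qquad q\Delta:=\{qe \bmod N : e\in\Delta\},$$
and $q\Delta$ is again $q^2$-complete. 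Applying the chain of equalities (\ref{EE}) with $q^2$ in place of $q$ (the results of \cite{galindo-hernando} hold for any subfield) gives
$$\left(\mathcal{S}_\Delta^{P_{\mathfrak{A}},q^2}\right)^{\perp_h}=\left(\mathcal{H}_{q\Delta}^{P_{\mathfrak{A}}}\right)^{\perp_e}\cap\mathbb{F}_{q^2}^N=\left(\mathcal{H}_{\Delta}^{P_{\mathfrak{A}}}\right)^{\perp_h}\cap\mathbb{F}_{q^2}^N .$$
Here the last expression is the Hermitian dual over $\mathbb{F}_{\mathfrak{p}^2}$ restricted to $\mathbb{F}_{q^2}$, which is the cyclic code $\mathrm{Cy}(q\Delta)\cap\mathbb{F}_{q^2}^N$. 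Equivalently, it is the code $\mathrm{Cy}(-\Delta^q_N)$ up to the natural reindexing.

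Next I will run the argument of \cite[Theorem 12]{QZF2021}. Put $A'=qA$ and $B'=qB$, so that $A'+B'\subseteq q\Delta$. The map $x\mapsto x^q$ is a weight-preserving bijection, so $\mathcal{H}_{A'}^{P_{\mathfrak{A}}}=(\mathcal{H}_{A}^{P_{\mathfrak{A}}})^q$ has minimum distance $d_A$. Likewise $(\mathcal{H}_{B'}^{P_{\mathfrak{A}}})^{\perp_e}=\bigl((\mathcal{H}_B^{P_{\mathfrak{A}}})^{q}\bigr)^{\perp_e}$ is exactly $\left(\mathcal{H}_B^{P_{\mathfrak{A}}}\right)^{\perp_h}$, with minimum distance $d_B^{\perp_h}$. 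We are therefore in the Euclidean situation of Proposition \ref{era1}, with $(A',B',q\Delta)$ and the subfield $\mathbb{F}_{q^2}$, and $\delta\le\min\{d_A,d_B^{\perp_h}\}$.

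The locality comes from the construction of \cite[Theorem 12]{QZF2021}. A minimum weight codeword of $\mathcal{H}_{A'}^{P_{\mathfrak{A}}}$ has support $J$ with $\#J=d_A=(d_A-\delta+1)+\delta-1$. Its cyclic shifts, together with the products $\ev(X^{a})\ev(X^{b})=\ev(X^{a+b})$ lying in $\mathcal{H}_{q\Delta}$, show that the punctured code $\pi_J$ of the subfield subcode has distance at least $\delta$. This uses $\delta-1$ columns of the parity matrix of $\mathrm{Cy}(B')$, of which any $\delta-1\le d_B^{\perp_h}-1$ are independent. Cyclic shifts then cover every coordinate, giving locality $(d_A-\delta+1,\delta)$.

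The main obstacle is the bookkeeping in the identification step. One must check that $q^2$-completeness of $\Delta$ (rather than $q$-completeness) is exactly what makes the $q^2$-ary trace description valid. One must also check that the Frobenius twist transports both $A+B\subseteq\Delta$ and the two minimum distances correctly. I will verify this via the elementwise identity $(\ev(X^e))^q=\ev(X^{qe})$ and $q$-linearity of the trace $\mathbb{F}_{q^s}\to\mathbb{F}_{q^2}$.
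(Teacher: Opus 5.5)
Your proposal is correct and follows essentially the paper's argument. The paper also identifies $\left(\mathcal{S}_\Delta^{P_{\mathfrak{A}},q^2}\right)^{\perp_h}$ with $\left(\mathcal{H}_\Delta^{P_{\mathfrak{A}}}\right)^{\perp_h}\cap\mathbb{F}_{q^2}^N$ through the Hermitian version of the chain (\ref{EE}) (with $\Delta^{\perp_h}=\mathbb{Z}_N\setminus -\Delta_N^q$), then applies \cite[Theorem 12]{QZF2021}, appealing to the fact that Hermitian and Euclidean duals are isometric; your Frobenius twist $\Delta\mapsto q\Delta$, $A\mapsto qA$, $B\mapsto qB$ is just an explicit form of that isometry, which lets you quote Proposition \ref{era1} over $\mathbb{F}_{q^2}$ directly.
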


\begin{proof}
We reason in a similar way as we did in the proof of Proposition \ref{era1}. The Hermitian dual $\left(\mathcal{H}_\Delta^{P_{\mathfrak{A}}}\right)^{\perp_h}$ is a cyclic code $\mathrm{Cy}^q(\Delta)$. The set $\Delta$ is $q^2$-complete modulo $N$ and, applying again \cite[Theorem 12]{QZF2021}, one gets that the subfield-subcode $\left(\mathcal{H}_\Delta^{P_{\mathfrak{A}}}\right)^{\perp_h} \cap \mathbb{F}_{q^2}^N$ is an $(r,\delta)$-LRC with locality $(d_A - \delta +1, \delta)$. Note that this holds because Hermitian and Euclidean dual codes are isometric.

As in the proof of Proposition \ref{era1}, it implies that $\left(\mathcal{S}_\Delta^{P_{\mathfrak{A}},q^2}\right)^{\perp_h}$ is an $(r,\delta)$-LRC with the previous given locality. To see the corresponding equality of codes, it suffices to notice that the chain of equalities (\ref{EE}) is true for $\perp_h$ instead of $\perp_e$ and $q^2$ instead of $q$. To prove it, we should note that $\left(\mathcal{H}_\Delta^{P_{\mathfrak{A}}}\right)^{\perp_h} = \left(\mathcal{H}_{\Delta^{\perp_h}}^{P_{\mathfrak{A}}}\right)$, where $\Delta^{\perp_h} = \mathbb{Z}_N \setminus -\Delta_N^q$ by \cite[Proposition 4]{QINP} and use the proof of \cite[Theorem 7]{QINP} and Theorem 4 in \cite{galindo-hernando}.
\end{proof}

To apply Propositions \ref{era1} and \ref{era1-B}, it is convenient to determine the minimum distance $d_A$ or, at least, a bound of it. The following proposition gives this minimum distance for specific sets $A$.

\begin{pro}
\label{era3}
Keep the previous conditions where $N$ and $n$ are positive integers such that $n$ divides $N$ which divides $q^s -1$. Consider the set
\[
\mathcal{A} := \{0, n, 2n, \ldots, N-n\} \subset \mathbb{Z}_N.
\]
Then, $\dis \left(\mathcal{H}_{\mathcal{A}}^{P_{\mathfrak{A}}}\right) = n$.
\end{pro}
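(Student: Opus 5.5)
The plan is to read the code directly off the point set. Put $m := N/n$. The exponent set is then $\mathcal{A} = \{jn : 0 \leq j \leq m-1\}$. Every monomial $X^{jn}$ equals $(X^n)^j$. A point of $P_{\mathfrak{A}}$ has the form $a^{i}\zeta_n^{k}$ with $0 \leq i \leq m-1$ and $0 \leq k \leq n-1$, and since $\zeta_n^n = 1$ its $n$-th power is $a^{in}$. Hence, for any $f = \sum_{j=0}^{m-1} c_j X^{jn}$, we have $f(a^i\zeta_n^k) = g(a^{in})$, where $g(Y) := \sum_{j=0}^{m-1} c_j Y^j$. So every codeword of $\mathcal{H}_{\mathcal{A}}^{P_{\mathfrak{A}}}$ is constant on each block $\{a^i, a^i\zeta_n, \ldots, a^i\zeta_n^{n-1}\}$ of $n$ consecutive coordinates. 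Its value on the $i$-th block is $g(a^{in})$.

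The elements $a^{in}$, $0 \leq i \leq m-1$, are pairwise distinct, because $a^n$ is a primitive $m$-th root of unity. They are exactly the $m$ distinct $m$-th roots of unity in $\mathbb{F}_{q^s}$. The map $g \mapsto (g(a^{in}))_{i}$ on polynomials of degree $< m$ is therefore an evaluation map at $m$ distinct points, and this evaluation code is the full space $\mathbb{F}_{q^s}^m$ (Reed--Solomon of dimension $m$, or simply the invertible Vandermonde matrix). Consequently $\mathcal{H}_{\mathcal{A}}^{P_{\mathfrak{A}}}$ is the $n$-fold repetition, blockwise, of $\mathbb{F}_{q^s}^m$. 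In other words, it consists of all vectors that are constant on each of the $m$ blocks.

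For the lower bound, a nonzero codeword is nonzero on at least one block, and hence on all $n$ coordinates of that block, so its weight is at least $n$. For the upper bound, the Vandermonde surjectivity gives a $g$ with $g(1) = 1$ and $g(a^{in}) = 0$ for $i \neq 0$, equivalently $g = \frac{1}{m}\sum_j Y^j$. The value $m$ is invertible in $\mathbb{F}_{q^s}$ because $m$ divides $q^s - 1$. The corresponding codeword has weight exactly $n$, so $\dis\bigl(\mathcal{H}_{\mathcal{A}}^{P_{\mathfrak{A}}}\bigr) = n$.

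There is no real obstacle. The only points needing care are that distinct exponents in $\mathcal{A}$ give distinct classes modulo $X^N - 1$, so the sums $f$ above describe the whole code, and that the blocks are disjoint and together cover $P_{\mathfrak{A}}$. The second point follows because $P_{\mathfrak{A}}$ consists of the $N$ distinct $N$-th roots of unity, grouped as cosets of the subgroup $\langle \zeta_n \rangle$.
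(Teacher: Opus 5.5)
Your proof is correct. For the upper bound you use the same codeword as the paper: the paper takes $f = 1 + X^n + \cdots + X^{N-n} = (X^N-1)/(X^n-1)$, which is $m\, g(X^n)$ for your $g$, and a degree count shows it vanishes exactly off the block $\{1,\zeta_n,\ldots,\zeta_n^{n-1}\}$.

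The two routes differ in the lower bound. The paper passes to the dual. By Proposition 3 of \cite{galindo-hernando}, $\bigl(\mathcal{H}_{\mathcal{A}}^{P_{\mathfrak{A}}}\bigr)^{\perp_e}$ is the evaluation code on the complementary exponents $\mathbb{Z}_N \setminus \mathcal{A}$. These contain the $n-1$ consecutive integers $1,\ldots,n-1$, so the BCH (Vandermonde) bound gives $\dis\bigl(\mathcal{H}_{\mathcal{A}}^{P_{\mathfrak{A}}}\bigr) \geq n$.

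You instead show that every codeword is constant on the cosets of $\langle \zeta_n \rangle$, and that the block values range over all of $\mathbb{F}_{q^s}^{N/n}$. So the code is exactly the blockwise $n$-fold repetition of $\mathbb{F}_{q^s}^{N/n}$.

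Your argument is self-contained and needs neither the duality result nor the BCH bound. It also gives more:
\begin{itemize}
\item the dimension $N/n$;
\item the fact that the minimum-weight codewords are precisely the nonzero multiples of block indicators. Their supports, the blocks, are the recovery sets of size $r+\delta-1=n$ underlying the locality in Proposition~\ref{era1}.
\end{itemize}
The paper's route fits the dual/BCH-bound machinery it uses throughout. That technique also applies to exponent sets whose complement contains a run of consecutive integers but which have no explicit block structure like the one you exploit.
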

\begin{proof}
Take the polynomial in $\mathbb{F}_{q^s} [X]$
\[
f= 1+ X^n + X^{2n} + \cdots + X^{N-n}.
\]
The facts that the degree of $f$ is $N-n$ and $ f= \frac{X^N -1}{X^n -1}$ prove that all the elements in $P_{\mathfrak{A}}$, with the exception of $1, \zeta_n, \ldots, \zeta_n^{n-1}$, are roots of $f$, therefore the Hamming weight $\mathrm{wt}\left(\mathrm{ev}^{P_{\mathfrak{A}}} (f)\right) =n$. As a consequence, the minimum distance $\dis \left( \mathcal{H}_{\mathcal{A}}^{P_{\mathfrak{A}}}\right) \leq n$.

To conclude the proof, it remains to show that $\dis \left( \mathcal{H}_{\mathcal{A}}^{P_{\mathfrak{A}}}\right) \geq n$. By Proposition 3 in \cite{galindo-hernando}, $ [\mathcal{H}_{\mathcal{A}}^{P_{\mathfrak{A}}}]^{\perp_e} = \mathcal{H}_{\{1, \ldots, n-1, n+1, \ldots, 2n-1, \ldots, N-n-1\}}^{P_{\mathfrak{A}}}$. Then,
\[
\dis \left(\mathcal{H}_{\mathcal{A}}^{P_{\mathfrak{A}}}\right) = \dis \left[\left(\mathcal{H}_{\mathcal{A}}^{P_{\mathfrak{A}}}\right)^{\perp_e}\right]^{\perp_e} \geq n,
\]
which concludes the proof.
\end{proof}

\begin{rem}
{\rm
The concept of $q$-complete modulo $M$ set introduced in Definition \ref{def13} can be extended to any positive integer $z$ (not necessarily prime) relatively prime to $M$. Thus, a subset $S \subseteq \mathbb{Z}_M$ is {\it $z$-complete modulo $M$} whenever $\sigma z \in S$ for all $\sigma \in S$.
}
\end{rem}

The above set $\mathcal{A}$ will be instrumental in several of our forthcoming results directed at find localities of the dual codes of the previously introduced subfield-subcodes, regarded as $(r, \delta)$-LRCs.

Let us prove some simple properties of $\mathcal{A}$.

\begin{pro}
\label{1-2-jap}
Keep the above notation where $N$ and $n$ are positive integers such that $n$ divides $N$ and $N$ divides $q^s -1$. Let $z$ be a positive integer relatively prime to $N$.
\begin{enumerate}
\item The set $\mathcal{A}$ introduced above is $z$-complete modulo $N$.
\item Let $B$ be a $z$-complete modulo $n$ set, then $\mathcal{A} + B$ is $z$-complete modulo $N$.
\end{enumerate}
\end{pro}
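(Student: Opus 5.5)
The plan is to use the description $\mathcal{A} = \{ \ell n \bmod N : \ell \in \mathbb{Z}\}$: $\mathcal{A}$ is exactly the set of residues modulo $N$ that are divisible by $n$. This makes sense because $n \mid N$, so divisibility by $n$ is well defined on classes in $\mathbb{Z}/N\mathbb{Z}$. Both items will then follow from short arithmetic modulo $N$ and modulo $n$.

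For item (1), take $\sigma = \ell n \in \mathcal{A}$. Then $z\sigma = (z\ell)n$. Reducing modulo $N$ subtracts a multiple of $N$, which is itself a multiple of $n$, so the representative of $z\sigma$ in $\mathbb{Z}_N$ is still divisible by $n$. Hence $z\sigma \bmod N$ lies in $\mathcal{A}$. The coprimality of $z$ and $N$ is not even needed here. It only guarantees that multiplication by $z$ permutes $\mathbb{Z}_N$, so that $z$-cosets make sense.

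For item (2), I will interpret $\mathcal{A}+B$ as a subset of $\mathbb{Z}_N$, taking sums modulo $N$, with $B \subseteq \mathbb{Z}_n \subseteq \mathbb{Z}_N$. The key preliminary observation is the description
\[
\mathcal{A}+B = \{ x \in \mathbb{Z}_N : x \bmod n \in B\}.
\]
The inclusion $\subseteq$ holds because adding $\ell n$ and reducing modulo $N$ does not change the residue modulo $n$. For $\supseteq$, write $x = b + (x-b)$ with $b = x \bmod n \in B$; then $x - b$ is divisible by $n$, so it lies in $\mathcal{A}$ after reduction modulo $N$.

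With this description, let $x \in \mathcal{A}+B$ and consider $zx \bmod N$. Its residue modulo $n$ is $z(x \bmod n) \bmod n$, again because $n \mid N$. Since $x \bmod n \in B$ and $B$ is $z$-complete modulo $n$, this residue lies in $B$. Therefore $zx \bmod N \in \mathcal{A}+B$.

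The only delicate point is the bookkeeping between representatives in $\mathbb{Z}_N$, $\mathbb{Z}_n$ and the sums. In particular, $B$ may be given as a subset of $\mathbb{Z}_n$ or as a set whose reduction $B_n$ is $z$-complete. I will handle this by stating the characterization $\mathcal{A}+B = \{x : x \bmod n \in B_n\}$ up front, which works in either reading, so that everything reduces to the compatibility of reduction modulo $n$ with multiplication and with reduction modulo $N$.
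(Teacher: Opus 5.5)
Your proof is correct and follows essentially the same route as the paper. Item (1) is the same observation: a multiple of $n$ remains a multiple of $n$ after reduction modulo $N$. Item (2) is the paper's argument $z(in+b) \equiv i'n + b' \pmod N$ with $b' \equiv zb \pmod n$; your characterization $\mathcal{A}+B = \{x \in \mathbb{Z}_N : x \bmod n \in B\}$ makes it explicit and in particular justifies the existence of $i'$, which the paper leaves implicit.
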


\begin{proof}

(1) Since $n$ divides $N$, the product of $z$ by a multiple of $n$ modulo $N$ belongs to $\mathcal{A}$.

(2) Consider an element $in +b \in \mathcal{A} + B$, $0 \leq i \leq \frac{N}{n}-1$. Set $b' \in B$ such that $zb \equiv b' \mod n$. Then, $z (in+b) \equiv i' n + b' \mod N$ for a suitable choice of a nonnegative integer $i'$. Therefore, $z (in+b) \in \mathcal{A} + B$ and, then, this set is $z$-complete modulo $N$.
\end{proof}

With the previous notations, denote $\mathcal{S}_\Delta^{P_{\mathfrak{A}}, q}$ the subfield-subcode over the finite field $\mathbb{F}_q$ of the $\{1\}$-affine variety code $\mathcal{H}_\Delta^{P_{\mathfrak{A}}}$. Analogously, if $s = 2\varsigma$, we set $\mathcal{S}_\Delta^{P_{\mathfrak{A}}, q^2}$ the subfield-subcode of the same code but over $\mathbb{F}_{q^2}$.
\begin{teo}
\label{eval}
The following statements hold:
\begin{enumerate}
\item Let $d \leq n $ be a positive integer and $B \subseteq \mathbb{Z}_N$  such that $B$ contains $d-1$ consecutive positive integers. Assume that the set $\Delta:= \mathcal{A} + B \subseteq \mathbb{Z}_N$ is $q$-complete modulo $N$ and let $\delta$ be an integer such that $2 \leq \delta \leq d$. Then, the code $\left(
\mathcal{S}_\Delta^{P_{\mathfrak{A}}, q}
\right)^{\perp_e}$ is an $( n - \delta +1, \delta)$-locally recoverable code  with parameters $[N, N - \# \Delta, \geq d]_q$.

\item Suppose that $s = 2 \varsigma$. Let $B \subseteq \mathbb{Z}_N$  such that it contains $d-1$ consecutive positive integers. Assume that $\Delta:= \mathcal{A} + B \subseteq \mathbb{Z}_N$  is a $q^2$-complete modulo $N$ set. Let $\delta$ and $d$ as Item (1). Then, the code $\left(\mathcal{S}_\Delta^{P_{\mathfrak{A}}, q^2}\right)^{\perp_h}$ is an $( n - \delta +1,  \delta)$-locally recoverable code  with parameters $[N, N - \# \Delta, \geq d]_{q^2}$.
\end{enumerate}
\end{teo}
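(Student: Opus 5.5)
We obtain the locality by applying Proposition \ref{era1} (respectively, Proposition \ref{era1-B}) with $A=\mathcal{A}$ and the given $B$. The length, dimension and distance come from standard facts about subfield-subcodes and the BCH-type bound.

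\textbf{Locality.} By hypothesis $\Delta=\mathcal{A}+B$ is $q$-complete modulo $N$, so the hypothesis $A+B\subseteq\Delta$ of Proposition \ref{era1} holds trivially. Proposition \ref{era3} gives $d_A=\dis\bigl(\mathcal{H}_{\mathcal{A}}^{P_{\mathfrak{A}}}\bigr)=n$.

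For $d_B^{\perp_e}$, recall that $\bigl(\mathcal{H}_B^{P_{\mathfrak{A}}}\bigr)^{\perp_e}$ is the cyclic code $\mathrm{Cy}(B)$ generated by $\prod_{\ell\in B}(X-a^\ell)$. Its zeros contain $d-1$ consecutive powers $a^{b},a^{b+1},\dots,a^{b+d-2}$ of the primitive $N$-th root $a$. The BCH bound therefore gives $d_B^{\perp_e}\ge d$.

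Hence $\delta\le d\le n$ implies $\delta\le\min\{d_A,d_B^{\perp_e}\}$. Proposition \ref{era1} then shows that $\bigl(\mathcal{S}_\Delta^{P_{\mathfrak{A}},q}\bigr)^{\perp_e}$ is an $(r,\delta)$-LRC with locality $(n-\delta+1,\delta)$.

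\textbf{Parameters.} The length is $N$. The code $\mathcal{S}_\Delta^{P_{\mathfrak{A}},q}$ is a subfield-subcode of a $\{1\}$-affine variety code with $\Delta$ $q$-complete. By \cite[Theorem 4]{galindo-hernando} (the trace description used in (\ref{EE})), its dimension equals $\#\Delta$, so the dual has dimension $N-\#\Delta$.

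For the distance, chain (\ref{EE}) shows that $\bigl(\mathcal{S}_\Delta^{P_{\mathfrak{A}},q}\bigr)^{\perp_e}=\bigl(\mathcal{H}_\Delta^{P_{\mathfrak{A}}}\bigr)^{\perp_e}\cap\mathbb{F}_q^N$. The right-hand side is a subcode of the cyclic code $\mathrm{Cy}(\Delta)$. Since $B\subseteq\Delta$ (because $0\in\mathcal{A}$), the zero set of $\mathrm{Cy}(\Delta)$ contains $d-1$ consecutive powers of $a$. The BCH bound again gives minimum distance $\ge d$.

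\textbf{Hermitian case (Item (2)).} The argument is the same, using Proposition \ref{era1-B}. We still need $d_B^{\perp_h}\ge d$. Here $\bigl(\mathcal{H}_B^{P_{\mathfrak{A}}}\bigr)^{\perp_h}$ is the image of the Euclidean dual under componentwise $\mathfrak{p}$-th power conjugation, which is weight preserving. Hence it has the same minimum distance, as noted in the proof of Proposition \ref{era1-B}, and the BCH bound applies.

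The dimension $N-\#\Delta$ follows from $q^2$-completeness via \cite[Theorem 7]{QINP}. The distance bound follows from the $\perp_h$ version of (\ref{EE}) together with the fact that $\bigl(\mathcal{H}_\Delta^{P_{\mathfrak{A}}}\bigr)^{\perp_h}$ is a conjugate of the Euclidean dual of $\mathcal{H}_\Delta^{P_{\mathfrak{A}}}$.

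\textbf{Main obstacle.} The step needing most care is the identification of the duals as cyclic codes whose zeros contain the consecutive run. In the Hermitian case the zero set is twisted by $-q$ (the set $\Delta^{\perp_h}=\mathbb{Z}_N\setminus-\Delta^q_N$). One must check that multiplying the run of consecutive exponents by the unit $q$ modulo $N$ still yields a BCH-type bound. I would handle this either by the weight-preserving conjugation isometry mentioned above, or by noting that consecutive powers of $a^{q}$, another primitive $N$-th root, suffice for the BCH bound.
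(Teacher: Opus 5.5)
Your proposal is correct and follows the paper's proof essentially step by step. The dimension $\#\Delta$ comes from the $q$- (resp.\ $q^2$-) completeness of $\Delta$, and the bound $\geq d$ comes from the Vandermonde/BCH argument, inherited by the subfield-subcode through (\ref{EE}); the locality comes from Proposition \ref{era1} (resp.\ \ref{era1-B}), using $d_A=n$ from Proposition \ref{era3} and $d_B^{\perp}\geq d\geq\delta$. Your extra observation that the Hermitian duals are componentwise Frobenius conjugates of Euclidean ones, and so have the same minimum distance, makes explicit what the paper leaves to its cited references.
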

\begin{proof}
Let us prove Item (1). When one considers a $q$-complete modulo $N$ set $\Omega$, by \cite[Theorem 5]{QINP}, the dimension of the subfield-subcode $\mathcal{S}_\Omega^{P_{\mathfrak{A}}, q}$ equals $\# \Omega$. The bounds for the minimum distances hold because the parity-check matrix of the code $\left(\mathcal{H}_\Delta^{P_{\mathfrak{A}}}\right)^{\perp_e}$ contains a Vandermonde matrix giving a bound for its distance which  $\left(
\mathcal{S}_\Delta^{P_{\mathfrak{A}}, q}
\right)^{\perp_e}$ inherits. Details can be seen in the proof of \cite[Proposition 3.6]{QINP2}. Finally, the facts that $ \dis\left(\mathcal{H}_{\mathcal{A}}^{P_{\mathfrak{A}}}\right) =n$ and $\dis \left(\mathcal{H}_{B}^{P_{\mathfrak{A}}}\right)^{\perp_e} \geq \delta$ imply that $\delta \leq \min\{d_A, d_B^{\perp_e}\}$ and by Proposition \ref{era1} we get the locality of the code.

A proof for Item (2) follows analogously after noting that we are considering $q^2$-complete  modulo $N$ sets $\Omega$, see for instance the proof of \cite[Proposition 3.10]{QINP2}. Here, we have to use Proposition \ref{era1-B} for giving the locality of $\left(\mathcal{S}_\Delta^{P_{\mathfrak{A}}, q^2}\right)^{\perp_h}$.
\end{proof}

Proposition \ref{1-2-jap} supports the following corollary of Theorem \ref{eval}. Note that, in this corollary, the elements in $B$ are representatives in $\mathbb{Z}_n$ and we consider $n-1$ consecutive to $0$.

\begin{cor}
\label{mats-1}
Theorem \ref{eval} (1) (respectively, (2)) holds for $\Delta = \mathcal{A} + B$, $B$ formed by representatives in $\mathbb{Z}_n$ and being $q$ (respectively, $q^2$) complete modulo $n$.
\end{cor}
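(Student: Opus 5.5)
The corollary will follow by reducing it to Theorem \ref{eval}: I will check that the only nontrivial hypothesis there, namely that $\Delta=\mathcal{A}+B$ is $q$-complete (respectively, $q^2$-complete) modulo $N$, is automatic once $B$ is complete modulo $n$. The other hypotheses are the consecutive-integers condition on $B$ and the bound $d\le n$ with $2\le\delta\le d$. These are kept as in the theorem, with $B$ now read as a set of representatives in $\mathbb{Z}_n$. As the remark before the corollary indicates, $0$ is regarded as consecutive to $n-1$, so a run of $d-1$ consecutive elements may wrap around modulo $n$.

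The first step is to apply Proposition \ref{1-2-jap}(2) with $z=q$ (respectively, $z=q^2$). Since $N$ divides $q^s-1$, $q$ is invertible modulo $N$, so $\gcd(q,N)=\gcd(q^2,N)=1$ and the proposition applies. It gives that $\mathcal{A}+B$ is $z$-complete modulo $N$. By the remark after Proposition \ref{era3}, being $z$-complete modulo $N$ means closed under multiplication by $z$. For $z=q$ this is exactly being a union of cyclotomic cosets $\Lambda^q_e$ in $\mathbb{Z}_N$, i.e.\ $q$-complete modulo $N$ in the sense of Definition \ref{def13}. For $z=q^2$ we get $q^2$-completeness in the same way.

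Next I verify the consecutive-integers hypothesis of Theorem \ref{eval}, viewed inside $\mathbb{Z}_N$. This is the only real subtlety. Elements of $B$ lie in $\{0,\dots,n-1\}$, and a run of $d-1$ consecutive residues modulo $n$ may wrap through $n-1,0$. In $\mathcal{A}+B\subseteq\mathbb{Z}_N$, however, adding the multiples of $n$ in $\mathcal{A}$ turns each wrapped run into a genuine run of consecutive integers: $jn+b$ with $b=n-1$ is followed by $(j+1)n+0$. Hence $\Delta$ contains $d-1$ consecutive positive integers (or at least that many consecutive exponents). That is precisely what the distance argument uses, since it rests on the BCH-type Vandermonde bound for $\mathcal{H}_\Delta^{P_{\mathfrak{A}}}$ and hence on consecutive exponents in $\Delta$. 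The locality part of the argument of Theorem \ref{eval} only uses $\mathcal{A}+B\subseteq\Delta$ and $\dis(\mathcal{H}_B^{P_{\mathfrak{A}}})^{\perp_e}\ge\delta$, with $B$ read in $\mathbb{Z}_N$. That bound again follows from the consecutive run of exponents, after translating the run to start at a positive integer (shifting by a multiple of $n$ if needed), since a monomial shift does not change the weight distribution. The Hermitian case is the same with $\perp_h$.

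With these facts in hand, Theorem \ref{eval}(1) (respectively, (2)) applies verbatim to $\Delta=\mathcal{A}+B$. It yields an $(n-\delta+1,\delta)$-LRC with parameters $[N,N-\#\Delta,\ge d]_q$ (respectively, over $\mathbb{F}_{q^2}$). Moreover, $\#\Delta=\frac{N}{n}\#B$, because the sums $in+b$ are distinct. The main obstacle is the wrap-around step in the second paragraph. The completeness step is immediate from Proposition \ref{1-2-jap}.
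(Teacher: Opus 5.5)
Your proposal is correct and follows essentially the paper's route. Both arguments reduce the corollary to Theorem~\ref{eval} via Proposition~\ref{1-2-jap}(2); the paper only adds a detour through the $q$-closure $\tilde B$ of $B$ modulo $N$ to check $\mathcal{A}+B=\mathcal{A}+\tilde B$, which is unnecessary once you note directly that $\Delta$ is complete. Your explicit treatment of a run wrapping through $n-1,0$ is more careful than the paper's.

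One justification in that step needs repair, though. A uniform monomial shift cannot turn a wrapped run into a run modulo $N$ when $N>n$. For example, take $q=7$, $s=4$, $n=6$, $N=30$ and $B=\{0,5\}\subseteq\mathbb{Z}_6$; then $\dis\bigl((\mathcal{H}_B^{P_{\mathfrak{A}}})^{\perp_e}\bigr)=2<3=d$. So instead apply Proposition~\ref{era1} to the lift $B'\subseteq\mathcal{A}+B$ obtained by adding $n$ to the elements of the run after the wrap (here $B'=\{5,6\}$). This is legitimate because $\mathcal{A}+B'=\Delta$, since $\mathcal{A}$ is a subgroup of $\mathbb{Z}_N$.
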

\begin{proof}
Let us prove the first assertion, the second one follows analogously.

Let $\tilde{B}$ be the smallest $q$-complete modulo $N$ set such that $B \subseteq \tilde{B}$. Then, $\mathcal{A} + B  = \mathcal{A} + \tilde{B}$. Indeed,  clearly $\mathcal{A} + B  \subseteq \mathcal{A} + \tilde{B}$ and we need to prove the opposite inclusion. Let $a + \tilde{b} \in \mathcal{A} + \tilde{B}$, where $a \in \mathcal{A}$ and $\tilde{b} \in \tilde{B}$. Now $\tilde{b} \equiv q^i b \mod N$ for some nonnegative integer $i$ and $b \in B$. Since $\mathcal{A}$ is $q$-complete modulo $N$, there exists $a' \in \mathcal{A}$  such that $q^i a' \equiv a \mod N$. Therefore $q^i (a'+b) \equiv a + \tilde{b} \mod N$ and the fact that $\mathcal{A}$ is $q$-complete modulo $N$ shows that $a + \tilde{b} \in \mathcal{A} + B$ and the inclusion holds. This finishes the proof because $\tilde{B}$ contains $d -1$ consecutive positive integers whenever $B$ fulfils the same property.
\end{proof}

Another corollary of Theorem \ref{eval}, written in terms of BCH codes, is the following one.

\begin{cor}
\label{coreval}
Let $\mathcal{A}$ and $B$ be two sets as  in the statement of Theorem \ref{eval}. With the notation as in Subsection \ref{BCH-1}, suppose that $\mathcal{A}+B$ equals either $\Delta_q^N(0,\tau)$, or $\Delta_{q^2}^N(0,\tau)$ whenever $s= 2\varsigma$, for some $1 \leq \tau < \nu$. Assume also that $2 \leq \delta \leq \alpha_{\tau +1} +1 \leq n$. Then,
\begin{enumerate}
\item The code $\left(\mathrm{BCH}_q^N(0,\tau)\right)^{\perp_e}$ is an $( n - \delta +1, \delta)$-locally recoverable code with parameters $[N, N - \sum_{\ell=0}^{\tau} \# \Lambda_{\alpha_\ell}^q, \geq \alpha_{\tau +1}+1]_q$.
\item If  $s = 2\varsigma$, the code $\left(\mathrm{BCH}_{q^2}^N(0,\tau)\right)^{\perp_h}$ is an $(n - \delta +1, \delta)$-locally recoverable code with parameters $[N, N - \sum_{\ell=0}^{\tau} \# \Lambda_{\alpha_\ell}^{q^2}, \geq \alpha_{\tau +1}+1]_{q^2}$.
\end{enumerate}
\end{cor}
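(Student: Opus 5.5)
The plan is to specialize Theorem \ref{eval} to the set $\Delta=\mathcal{A}+B=\Delta_q^N(0,\tau)$ (respectively $\Delta_{q^2}^N(0,\tau)$) and then translate its conclusion into BCH language via Definition \ref{AA} and Proposition \ref{BCH}. First, by the identification explained at the start of Subsection \ref{Subs31}, $\mathcal{H}_\Delta^{P_{\mathfrak{A}}}$ coincides with $\mathcal{C}_\Delta^N$ up to a fixed coordinate permutation. Hence $\mathcal{S}_\Delta^{P_{\mathfrak{A}},q}$ is, up to that permutation, $\mathrm{BCH}_q^N(0,\tau)$. Both duality and $(r,\delta)$-locality are invariant under a simultaneous permutation of coordinates: recovery sets are simply permuted, and punctured codes keep their minimum distance. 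So it suffices to prove the statement for $\left(\mathcal{S}_\Delta^{P_{\mathfrak{A}},q}\right)^{\perp_e}$.

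Next I check the hypotheses of Theorem \ref{eval}(1).
\begin{itemize}
\item $\Delta=\Delta_q^N(0,\tau)$ is a union of cyclotomic cosets, hence $q$-complete modulo $N$.
\item I set $d:=\alpha_{\tau+1}+1$, and need $B$ to contain $d-1=\alpha_{\tau+1}$ consecutive positive integers. Since $0\in\mathcal{A}$, we have $B\subseteq\mathcal{A}+B=\Delta$. Moreover $\Delta$ contains every integer in $\{1,\dots,\alpha_{\tau+1}-1\}$, because each such integer lies in a coset whose minimal representative is at most itself, hence below $\alpha_{\tau+1}$, hence among $\alpha_0,\dots,\alpha_\tau$.
\item The remaining conditions $d\le n$ and $2\le\delta\le d$ are exactly the hypothesis $2\le \delta\le \alpha_{\tau+1}+1\le n$.
\end{itemize}

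The consecutive-integers condition is the step I expect to be the main obstacle. It is not immediate that the run $1,\dots,\alpha_{\tau+1}-1$ lies inside $B$ rather than merely inside $\Delta$. There are two routes. The first is to read the hypothesis ``as in the statement of Theorem \ref{eval}'' as already supplying $\alpha_{\tau+1}$ consecutive positive integers in $B$. The second is to observe that the proof of Theorem \ref{eval} uses the consecutive run only for the distance bound of $(\mathcal{H}_\Delta^{P_{\mathfrak{A}}})^{\perp_e}$. For that bound, the run $0,1,\dots,\alpha_{\tau+1}-1$ inside $\Delta$ gives $\alpha_{\tau+1}$ consecutive exponents and so distance at least $\alpha_{\tau+1}+1$; this agrees with Proposition \ref{BCH}(2). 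For the locality, the proof needs $\dis(\mathcal{H}_B^{P_{\mathfrak{A}}})^{\perp_e}\ge\delta$, which a run of $\delta-1$ consecutive elements in $B$ would give. I would try the first route, and fall back on the second with Proposition \ref{BCH}(2) supplying the distance if needed.

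Theorem \ref{eval}(1) then gives an $(n-\delta+1,\delta)$-LRC of length $N$ and dimension $N-\#\Delta$. Proposition \ref{BCH}(1), or the equality $\dim=\#\Delta$ for complete sets, gives $\#\Delta=\sum_{\ell=0}^{\tau}\#\Lambda^q_{\alpha_\ell}$, and the minimum distance is at least $\alpha_{\tau+1}+1$. Item (2) is identical, using Theorem \ref{eval}(2), $q^2$-cyclotomic cosets as in Remark \ref{BCHq}, Hermitian duals, and Proposition \ref{BCH}(3).
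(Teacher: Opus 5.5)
Your plan (specialise Theorem \ref{eval} to $\Delta=\Delta_q^N(0,\tau)$, read off the dimension and the bound $\ge\alpha_{\tau+1}+1$ from Proposition \ref{BCH}, and transport everything along the coordinate permutation) is what the paper intends; it states the result as an immediate corollary of Theorem \ref{eval} and gives no separate argument. However, the step you flag as the main obstacle is never closed, and it is exactly the step that carries the locality claim.

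Through Proposition \ref{era1}, Theorem \ref{eval} gives $(n-\delta+1,\delta)$-locality only for $\delta\le\dis\bigl((\mathcal{H}_B^{P_{\mathfrak{A}}})^{\perp_e}\bigr)$. The phrase ``$B$ as in Theorem \ref{eval}'' only supplies a run of $d-1$ consecutive positive integers for some $d$ that has nothing to do with $\alpha_{\tau+1}$. The equality $\mathcal{A}+B=\Delta_q^N(0,\tau)$ fixes only the residues of $B$ modulo $n$; in particular $B$ meets every class $j \bmod n$ with $0\le j<\alpha_{\tau+1}$. It does not put a run of consecutive integers inside $B$ itself. So your first route assumes something the statement does not give.

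That route is also off by one. The run you exhibit, $\{1,\dots,\alpha_{\tau+1}-1\}$, has only $\alpha_{\tau+1}-1$ elements, and it cannot be extended upward because $\alpha_{\tau+1}\notin\Delta$. Hence even $B\supseteq\{1,\dots,\alpha_{\tau+1}-1\}$ would only give $\delta\le\alpha_{\tau+1}$ in Theorem \ref{eval}; reaching $\alpha_{\tau+1}+1$ forces you to use the exponent $0$, which is not positive. Your second route only restates that a run of $\delta-1$ consecutive elements in $B$ is needed, without producing one. So for $\delta$ beyond whatever $B$ happens to supply, the locality is not proved.

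The missing observation is that Proposition \ref{era1} requires only $A+B\subseteq\Delta$, not equality, so you may replace $B$.
\begin{enumerate}
\item Since $n\mid N$, $\mathcal{A}+\mathcal{A}=\mathcal{A}$ in $\mathbb{Z}/N\mathbb{Z}$, hence $\mathcal{A}+\Delta=\Delta$.
\item Put $B':=\{0,1,\dots,\alpha_{\tau+1}-1\}\subseteq\Delta$ (or simply $B'=\Delta$). Then $\mathcal{A}+B'\subseteq\Delta$.
\item $\dis\bigl((\mathcal{H}_{B'}^{P_{\mathfrak{A}}})^{\perp_e}\bigr)\ge\alpha_{\tau+1}+1$. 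Indeed, any $\alpha_{\tau+1}$ columns of the parity-check matrix with rows $(x^j)_{x\in P_{\mathfrak{A}}}$, $0\le j<\alpha_{\tau+1}$, form a nonsingular Vandermonde matrix.
\item By Proposition \ref{era3}, $\dis(\mathcal{H}_{\mathcal{A}}^{P_{\mathfrak{A}}})=n\ge\alpha_{\tau+1}+1\ge\delta$.
\end{enumerate}
Proposition \ref{era1} applied to the pair $(\mathcal{A},B')$ then yields $(n-\delta+1,\delta)$-locality for every $2\le\delta\le\alpha_{\tau+1}+1$, whatever $B$ is.

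In the Hermitian case, use Proposition \ref{era1-B} with the same $B'$ built from the $q^2$-coset representatives. Since $C^{\perp_h}$ is the coordinatewise Frobenius image of $C^{\perp_e}$, the same bound holds for $\dis\bigl((\mathcal{H}_{B'}^{P_{\mathfrak{A}}})^{\perp_h}\bigr)$. With this replacement, the rest of your argument goes through unchanged: dimension via completeness or Proposition \ref{BCH}(1),(3), distance via Proposition \ref{BCH}(2),(3), and invariance under the coordinate permutation.
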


An analog of Theorem \ref{eval} also holds for H-BCH codes. Indeed, suppose that we consider a value $\lambda$ such that $\lambda n$ does not divide $N$ and assume that $B$ is a set as in Corollary \ref{mats-1}. Thus, $\Delta := \mathcal{A} + B$ is $q$-complete (respectively, $q^2$-complete) modulo $N$ as in Theorem \ref{eval}. Then, the result follows from the fact that the code $\mathcal{S}_{\Delta}^{P, q}$ (respectively, $\mathcal{S}_{\Delta}^{P, q^2}$ if $s=2\varsigma$) is a punctured code of $\mathcal{S}_\Delta^{P_{\mathfrak{A}}, q}$ (respectively, $\mathcal{S}_\Delta^{P_{\mathfrak{A}}, q^2}$ if $s=2\varsigma$) and by the relation between punctured and shortened codes given in (\ref{dualpunct}) (respectively, in (\ref{dualpuncth}) if $s=2\varsigma$), the code $\left[ \mathcal{S}_{\Delta}^{P, q} \right]^{\perp_e}$ (respectively,  $\left[ \mathcal{S}_{\Delta}^{P, q^2} \right]^{\perp_h}$ if $s=2\varsigma$) is a shortened code of $\left[ \mathcal{S}_\Delta^{P_{\mathfrak{A}, q}} \right]^{\perp_e}$ (respectively,  $\left[ \mathcal{S}_\Delta^{P_{\mathfrak{A}, q^2}} \right]^{\perp_h}$ if $s=2\varsigma$). Therefore,  locality and minimum distance of $\left[ \mathcal{S}_\Delta^{P, q} \right]^{\perp_e}$ (respectively, $\left[ \mathcal{S}_\Delta^{P, q^2} \right]^{\perp_h}$ if $s=2\varsigma$) are at least as good as those of $\left[ \mathcal{S}_\Delta^{P_{\mathfrak{A}, q}} \right]^{\perp_e}$ (respectively, $\left[ \mathcal{S}_\Delta^{P_{\mathfrak{A}, q^2}} \right]^{\perp_h}$ if $s=2\varsigma$). Note that the locality is deduced from the proof of Theorem 12 in \cite{QZF2021}. In our context, we only need to consider nonzero columns of matrices obtained from the shortened code, where the original codewords contain zeros outside the designated index positions. Let us state the result:




\begin{teo}
\label{mats-2}

Keep the above notation, where $\lambda$ is a positive integer such that $\lambda n$ does not divide $N$.

\begin{enumerate}
\item
Suppose that $\Delta = \mathcal{A} + B$, $B$ being $q$ complete modulo $n$. Assume also that $B$ contains $d-1$ consecutive positive integers. Let $\delta$ be a positive integer such that  $2 \leq \delta \leq d \leq n$. Then, the code $\left(
\mathcal{S}_\Delta^{P, q}
\right)^{\perp_e}$ is an $(n - \delta +1, \delta)$-locally recoverable code  with parameters $[\lambda n,  \lambda (n - \#B), \geq d]_q$.

\item
Suppose that $s = 2 \varsigma$. Suppose that $\Delta = \mathcal{A} + B$, $B$ being $q^2$ complete modulo $n$. Assume also that $B$ contains $d-1$ consecutive positive integers. Let $\delta$ and $d$ be positive integers such that  $2 \leq \delta \leq d$. Then, the code $\left( \mathcal{S}_\Delta^{P, q^2} \right)^{\perp_h}$ is an $(n - \delta +1, \delta)$-locally recoverable code  with parameters $[\lambda n, \lambda (n - \#B), \geq d]_{q^2}$.
\end{enumerate}

\end{teo}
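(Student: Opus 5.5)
The plan is to reduce Theorem \ref{mats-2} to Theorem \ref{eval} through Corollary \ref{mats-1}, and then pass from $P_{\mathfrak{A}}$ to $P$ by puncturing. We treat Item (1); Item (2) is identical, with $q^2$-completeness, Hermitian duals, (\ref{dualpuncth}) and Proposition \ref{era1-B} replacing their Euclidean counterparts.

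\textbf{Step 1: reduction to Theorem \ref{eval}.} Since $B$ is $q$-complete modulo $n$, Proposition \ref{1-2-jap}(2) with $z=q$ shows that $\Delta=\mathcal{A}+B$ is $q$-complete modulo $N$. Corollary \ref{mats-1} then applies, so $\left(\mathcal{S}_\Delta^{P_{\mathfrak{A}},q}\right)^{\perp_e}$ is an $(n-\delta+1,\delta)$-LRC of length $N$ with minimum distance at least $d$.

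\textbf{Step 2: puncturing and shortening.} The set $P$ consists of the first $\lambda n$ points of $P_{\mathfrak{A}}$, namely the first $\lambda$ cosets $a^j\langle\zeta_n\rangle$. Call this index set $T$. Then $\mathcal{H}_\Delta^P=\pi_T(\mathcal{H}_\Delta^{P_{\mathfrak{A}}})$. The subfield-subcode satisfies $\mathcal{S}_\Delta^{P,q}=\pi_T(\mathcal{S}_\Delta^{P_{\mathfrak{A}},q})$, because both equal the trace code of $\mathcal{H}$ with the dual exponent set; for $q$-complete $\Delta$ this holds by the Delsarte/trace description used in (\ref{EE}). By (\ref{dualpunct}), $\left(\mathcal{S}_\Delta^{P,q}\right)^{\perp_e}=\sigma_T\left[\left(\mathcal{S}_\Delta^{P_{\mathfrak{A}},q}\right)^{\perp_e}\right]$. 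A shortened code has minimum distance at least that of the original code, which gives $\geq d$.

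\textbf{Step 3: locality.} In the proof of \cite[Theorem 12]{QZF2021}, the recovery sets are the cosets $a^j\{1,\zeta_n,\dots,\zeta_n^{n-1}\}$. These are the supports of shifts of $\ev(f)$ with $f=(X^N-1)/(X^n-1)$ (Proposition \ref{era3}), and each such set lies entirely inside or entirely outside $T$. For $i\in T$ we take the same $J\subseteq T$. Then $\pi_J(\sigma_T(C))\subseteq\pi_J(C)$, so the minimum distance of the punctured code is still $\geq\delta$, and $\#J=n=(n-\delta+1)+\delta-1$.

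\textbf{Step 4: dimension.} We must show $\dim\mathcal{S}_\Delta^{P,q}=\lambda\#B$, which gives dual dimension $\lambda(n-\#B)$. Write elements of $\Delta$ as $kn+b$ with $b\in B$. Evaluated at $a^j\zeta_n^l$, the monomial $X^{kn+b}$ takes the value $a^{jkn}a^{jb}\zeta_n^{lb}$. The span over $k$ therefore realizes, on the $\lambda$ cosets, all vectors $\left(c_j a^{jb}\zeta_n^{lb}\right)_{j,l}$ with $(c_j)\in\F_{q^s}^\lambda$ arbitrary. This follows from the Vandermonde matrix in $a^{jn}$, $j<\lambda\le N/n$, whose nodes are distinct. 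For distinct $b\in B$ the characters $\zeta_n^{lb}$ are independent. Hence $\dim\mathcal{H}_\Delta^P=\lambda\#B$ and $\mathcal{H}_\Delta^P$ is Galois-stable, since $\Delta$ is $q$-complete. A Galois-stable code has a subfield-subcode of the same dimension, which gives the claim.

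I expect the main obstacle to be the identification in Step 2 of the H-BCH subfield-subcode with the puncturing of $\mathcal{S}_\Delta^{P_{\mathfrak{A}},q}$, together with the exact dimension count in Step 4. Both rest on the Galois-invariance argument. I would establish it first, directly from the structure $\Delta=\mathcal{A}+B$.
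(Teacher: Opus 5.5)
Your proposal is correct. Steps 1--3 follow the paper's own reasoning, which it states just before the theorem: $(\mathcal{S}_\Delta^{P,q})^{\perp_e}$ is the shortening $\sigma_T$ of $(\mathcal{S}_\Delta^{P_{\mathfrak{A}},q})^{\perp_e}$ by (\ref{dualpunct}), so distance and locality are inherited. Your Step 3 (the recovery sets are the cosets $a^j\langle\zeta_n\rangle$, these lie inside $T$, and $\pi_J(\sigma_T(C))\subseteq\pi_J(C)$) is a cleaner version of the paper's remark about ``nonzero columns''.

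The real difference is the dimension count. The paper stays at the level of subfield-subcodes and uses a squeeze:
\begin{itemize}
\item $\dim\mathcal{S}_\Delta^{P_{\mathfrak{A}},q}=\#\Delta=\frac{N}{n}\#B$;
\item each block projection $\pi_{I_j}$ has dimension $\#B$, quoted from a thesis using $\Delta\equiv B \bmod n$;
\item subadditivity over the first $\lambda$ blocks and the remaining $\frac{N}{n}-\lambda$ blocks then forces $\dim\pi_L=\lambda\#B$.
\end{itemize}
You instead compute $\dim\mathcal{H}_\Delta^P=\lambda\#B$ over $\F_{q^s}$ directly, via the Vandermonde matrix in $a^{jn}$ and the independence of the characters $l\mapsto\zeta_n^{lb}$. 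This is essentially the computation of Lemma \ref{lemauno}. You then descend to $\F_q$ via Galois stability.

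Your route is self-contained. It also actually proves the identification $\mathcal{S}_\Delta^{P,q}=\pi_T(\mathcal{S}_\Delta^{P_{\mathfrak{A}},q})$, which the paper calls clear even though, in general, puncturing a subfield-subcode only gives an inclusion. The paper's route avoids any big-field computation, but it depends on that identification and on two external dimension formulas.

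One sentence in Step 2 needs fixing. By (\ref{EE}), the trace code of $\mathcal{H}_{\Delta^{\perp_e}}^{P_{\mathfrak{A}}}$ is the \emph{dual} $(\mathcal{S}_\Delta^{P_{\mathfrak{A}},q})^{\perp_e}$, not $\mathcal{S}_\Delta^{P_{\mathfrak{A}},q}$ itself. The correct justification is the Galois-invariance you mention at the end:
\begin{itemize}
\item all points of $P_{\mathfrak{A}}$ are $N$-th roots of unity and $\Delta$ is $q$-complete modulo $N$, so coordinatewise Frobenius sends $\ev(X^e)$ to $\ev(X^{qe \bmod N})$;
\item hence $\mathcal{H}_\Delta^{P_{\mathfrak{A}}}$ and $\mathcal{H}_\Delta^{P}=\pi_T(\mathcal{H}_\Delta^{P_{\mathfrak{A}}})$ are Galois-stable;
\item so each subfield-subcode is the trace code of the code with exponent set $\Delta$ itself, and $\pi_T\circ\mathbf{tr}=\mathbf{tr}\circ\pi_T$ gives the identification.
\end{itemize}
This needs only $q$-completeness of $\Delta$ modulo $N$, not the finer structure $\mathcal{A}+B$.
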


\begin{proof}
Let us prove (1). Item (2) follows analogously.

Considering the reasoning prior to the statement of the theorem, it only remains to calculate the dimension. Consider the sets $I_j := \{jn + 1, \ldots, jn+n\}$, $0 \leq j \leq \frac{N}{n}-1$, which are a partition of $\{1,\dots,N\}$. Set $L:= \cup_{j=0}^{\lambda -1} I_j$ and $K := \cup_{j=\lambda}^{\frac{N}{n} -1} I_j$. It is clear that $\mathcal{S}_{\Delta}^{P,q} = \pi_L\left(\mathcal{S}_{\Delta}^{P_{\mathfrak{A}}, q}\right)$. By \cite[Proposition 1.5.8]{HMCthesis}, it holds that
\[
\mathcal{S}_{\Delta}^{\{a^j, a^j\zeta_n, a^j\zeta_n^2, \ldots,  a^j\zeta_n^{n-1}\},q} = \mathcal{S}_{B}^{\{a^j, a^j\zeta_n, a^j\zeta_n^2, \ldots,  a^j\zeta_n^{n-1}\},q}
\]
because $\mathcal{A}+B$ is congruent to $B$ modulo $n$. As a consequence, $\dim \pi_{I_j}\left(\mathcal{S}_{\Delta}^{P_{\mathfrak{A}}, q}\right) = \#B$.

Now, the fact that
$$\mathcal{S}_{\Delta}^{P_{\mathfrak{A}}, q} \subseteq \pi_L \left( \mathcal{S}_{\Delta}^{P_{\mathfrak{A}}, q}  \right) \times \pi_{K} \left( \mathcal{S}_{\Delta}^{P_{\mathfrak{A}}, q} \right)$$
proves that
\begin{equation}
\label{UNO}
\frac{N}{n} \#B = \dim \left(\mathcal{S}_{\Delta}^{P_{\mathfrak{A}}, q} \right) \leq \dim \left( \pi_L \left( \mathcal{S}_{\Delta}^{P_{\mathfrak{A}}, q} \right) \right) + \dim \left( \pi_K \left( \mathcal{S}_{\Delta}^{P_{\mathfrak{A}}, q} \right) \right).
\end{equation}
The same reasoning allows us to state that
\begin{equation}
\label{DOS}
\dim \left( \pi_L \left( \mathcal{S}_{\Delta}^{P_{\mathfrak{A}}, q} \right) \right) \leq \sum_{j=0}^{\lambda -1} \dim \left(\pi_{I_j} \left( \mathcal{S}_{\Delta}^{P_{\mathfrak{A}}, q} \right) \right) = \lambda \#B
\end{equation}
	and
\begin{equation}
\label{TRES}
\dim \left( \pi_K \left( \mathcal{S}_{\Delta}^{P_{\mathfrak{A}}, q} \right) \right) \leq \sum_{j=\lambda}^{\frac{N}{n} -1} \dim \left(\pi_{I_j} \left( \mathcal{S}_{\Delta}^{P_{\mathfrak{A}}, q} \right) \right) = \left(\frac{N}{n}- \lambda\right) \#B.
\end{equation}	
Combining (\ref{UNO}), (\ref{DOS}) and (\ref{TRES}), one gets the desired equality.
\end{proof}

\subsection{Quantum locally recoverable codes}
\label{sect32}
In this subsection we combine previous results in order to construct quantum $(r,\delta)$-locally recoverable codes. We keep the notation as in Subsection \ref{Subs31}. Recall that $n$ divides $N$ which divides $q^s -1$. The following result will be useful. We continue to consider the before fixed set $\mathcal{A}$.

\begin{pro}
\label{era5}
The following statements hold:

1) If $B \subseteq \mathbb{Z}_n$ is $q$-complete modulo $n$ and $B \cap \left(-B\right) = \emptyset$, then the following intersection of sets in $\mathbb{Z}_N$
\[
 \left(\mathcal{A}+ B \right)_N \cap \left[ - \left(\mathcal{A}+B\right)_N \right]
\]
is empty.

2) If $s= 2 \varsigma$, $B \subseteq \mathbb{Z}_n$ is $q^2$-complete modulo $n$ and $B \cap \left(-B^q\right) = \emptyset$, then the following intersection of sets in $\mathbb{Z}_N$
\[
 \left( \mathcal{A}+ B \right)_N \cap \left[- \left(\mathcal{A}+B\right)^q_N \right]
\]
is empty.
\end{pro}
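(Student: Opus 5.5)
The plan is to argue by contradiction and reduce everything modulo $n$. The point is that $n$ divides $N$, so reduction $\mathbb{Z}/N\mathbb{Z}\to\mathbb{Z}/n\mathbb{Z}$ is a well-defined ring homomorphism. Moreover, every element of $\mathcal{A}$ reduces to $0$ modulo $n$.

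For Item 1), suppose some $x\in\mathbb{Z}_N$ lies in $(\mathcal{A}+B)_N\cap[-(\mathcal{A}+B)_N]$.
\begin{itemize}
\item Since $x \in (\mathcal{A}+B)_N$, we can write $x\equiv in+b \pmod N$ with $0\le i\le \frac{N}{n}-1$ and $b\in B$.
\item Since $x \in -(\mathcal{A}+B)_N$, we can write $x\equiv -(jn+b') \pmod N$ with $b'\in B$.
\item Reducing both congruences modulo $n$ (legitimate because $n\mid N$) gives $b\equiv -b' \pmod n$.
\item Because $B\subseteq\mathbb{Z}_n$ and $-B$ is, by definition, the set of representatives $n-b' \bmod n$ in $\mathbb{Z}_n$, this says $b\in B\cap(-B)$.
\end{itemize}
This contradicts the hypothesis, so the intersection is empty. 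Here $-(\mathcal{A}+B)_N$ is read as the set of representatives of $N-y$ for $y$ in the set, with the conventions fixed before Definition~\ref{def13}.

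For Item 2) the same argument applies with the Frobenius twist. Now $x\equiv in+b\pmod N$ and $x\equiv -q(jn+b')\pmod N$. Reducing modulo $n$ gives $b\equiv -qb'\pmod n$, and the term $qjn$ vanishes because it is a multiple of $n$. This says $b\in -B^q$ computed in $\mathbb{Z}_n$, contradicting $B\cap(-B^q)=\emptyset$.

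The only place needing care is bookkeeping with representatives. One must check that the reduction of $-(\mathcal{A}+B)_N$ (resp.\ $-(\mathcal{A}+B)^q_N$) modulo $n$ is exactly $-B$ (resp.\ $-B^q$) as subsets of $\mathbb{Z}_n$. This holds since the reduction map commutes with negation and with multiplication by $q$. The hypotheses that $B$ is $q$-complete (resp.\ $q^2$-complete) are not even needed for emptiness; they matter only for later use of the result. I expect no real obstacle beyond stating the reduction cleanly.
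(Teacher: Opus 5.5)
Your proof is correct and is essentially the paper's argument: assume a common element, use $n \mid N$ and the fact that every element of $\mathcal{A}$ is a multiple of $n$ to reduce both representations modulo $n$, and obtain an element of $B \cap (-B)$ (respectively, $B \cap (-B^q)$), a contradiction. You also write out Item 2) with the factor $q$, which the paper only calls ``very similar,'' and you correctly note that the completeness hypotheses are not used for the emptiness claim.
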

\begin{proof}
Let us prove 1). The proof of 2) is very similar. Let us assume $j \in \left( \mathcal{A}+ B \right)_N \cap \left[ - \left(\mathcal{A}+B\right)_N \right]$ and we will deduce a contradiction. Taking into account that any element in $\mathcal{A}$ is a multiple of $n$, if $j \equiv a+b \mod N$, $j \in \mathbb{Z}_N$, $a \in \mathcal{A}, b \in B$, then $j \mod n \equiv b \mod n  \in B_n = B$. In addition $j= N- \left( a' + b' \right) \mod N$, $a' \in \mathcal{A}$ and $b' \in B$. Clearly, $N = \frac{N}{n} n$ and $a' = i n$, $i < \frac{N}{n}$. Then, $j=i'n-b'$, and thus $j \mod n \in -B_n =-B$, which contradicts the hypothesis.
\end{proof}

Recall that $P_n := \{1,\zeta_{n},\dots,\zeta_{n}^{n-1} \}$. The key result here is that, for $\mathcal{A}$ and $B$ as above,  Proposition \ref{era5} shows that the Euclidean (respectively, Hermitian when $s=2 \varsigma$) self-orthogonality of the code $\mathcal{S}_{B}^{P_{n}, q}$ (respectively, $\mathcal{S}_{B}^{P_{n}, q^2}$) implies Euclidean (respectively, Hermitian) self-orthogonality of the code $\mathcal{S}_{\mathcal{A}+B}^{P_{\mathfrak{A}}, q}$ (respectively, $\mathcal{S}_{\mathcal{A}+B}^{P_{\mathfrak{A}}, q^2}$). This follows from
\cite[Theorem 5(2)]{QINP} (respectively,
\cite[Theorem 7(1)]{QINP}), because
$ \mathcal{S}_{\mathcal{A}+B}^{P_{\mathfrak{A}}, q} \subseteq \left( \mathcal{S}_{\mathcal{A}+B}^{P_{\mathfrak{A}}, q} \right)^{\perp_e}$ (respectively, $ \mathcal{S}_{\mathcal{A}+B}^{P_{\mathfrak{A}}, q^2} \subseteq \left( \mathcal{S}_{\mathcal{A}+B}^{P_{\mathfrak{A}}, q^2} \right)^{\perp_h}$) if for each cyclotomic coset $\Lambda_e^q$ (respectively, $\Lambda_e^{q^2}$) in $\mathcal{A}+B$, one has that $\Lambda_e^q \cap (\mathbb{Z}_N \setminus -(\mathcal{A}+B)_N ) \neq \emptyset$ (respectively, $\Lambda_e^{q^2} \cap (\mathbb{Z}_N \setminus -(\mathcal{A}+B)_N^q ) \neq \emptyset$).

Whether one considers H-BCH codes or subfield-subcodes of $\{1\}$-affine variety codes, by Equality (5) in \cite{HBCH}, one has self-orthogonality when considering exponents in  $\mathcal{A}+B$ if one assumes that the code $\mathcal{S}_{B}^{P_{n}, q}$ (respectively, $\mathcal{S}_{B}^{P_{n}, q^2}$) is self-orthogonal.


Let $\lambda \leq \frac{N}{n}$ be a positive integer and set
$$R:=\left\{1,\zeta_{n},\dots,\zeta_{n}^{n-1},a,a\zeta_{n},
\dots,a\zeta_{n}^{n-1},\dots,a^{\lambda-1},a^{\lambda-1}\zeta_{n},\dots,a^{\lambda-1}\zeta_{n}^{n-1}\right\}.$$
When evaluating at  $R$, we get H-BCH codes if $\lambda n$ does not divide $N$ and subfield-subcodes of $\{1\}$-affine variety codes otherwise. From Theorems \ref{eval}, \ref{SingletonQ} and  \ref{mats-2}, one deduces the following result.

\begin{teo}
\label{qlrcbch}
Consider a $q$-complete (respectively, $q^2$-complete if $s=2\varsigma$) modulo $n$ set $B$, such that it contains $d-1$ consecutive positive integers and $d \leq n$. Define $\Delta=  \mathcal{A} + B $ and suppose that $\delta$ is a positive integer such that $2 \leq \delta \leq d $. Finally, assume that the code $\mathcal{S}_{B}^{P_n, q}$ (respectively, $\mathcal{S}_{B}^{P_n, q^2}$) is Euclidean (respectively, Hermitian) self-orthogonal. Then:
\begin{enumerate}
\item The code $\mathcal{S}_{\Delta}^{R, q}$ is Euclidean self-orthogonal and it gives rise to a quantum $(r, \delta)$-locally recoverable code with parameters $[[\lambda n,  \lambda (n- 2 \#B), \geq d]]_q$ and locality $(n - \delta +1, \delta)$.
\item If $s = 2 \varsigma$, the code $\mathcal{S}_{\Delta}^{R, q^2}$ is Hermitian self-orthogonal and it gives rise to a quantum $(r, \delta)$-locally recoverable code with parameters $[[\lambda n,  \lambda (n- 2 \#B), \geq d]]_q$ and locality $(n - \delta +1, \delta)$.
\end{enumerate}
\end{teo}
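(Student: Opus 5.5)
The proof combines three ingredients already available: self-orthogonality of $\mathcal{S}_{\Delta}^{R,q}$ (respectively, $\mathcal{S}_{\Delta}^{R,q^2}$), the classical $(r,\delta)$-LRC property of its dual from Theorems \ref{eval} and \ref{mats-2}, and the passage to quantum codes through Theorems \ref{resto}, \ref{quantumH-E} and \ref{SingletonQ}. I treat the Euclidean case (1); case (2) is verbatim with $q^2$, $\perp_h$, $-B^q$, Proposition \ref{era1-B} and Item (1) of Theorem \ref{resto}.

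\textbf{Step 1: self-orthogonality.} Since $\mathcal{S}_{B}^{P_n,q}$ is Euclidean self-orthogonal and $B$ is $q$-complete modulo $n$, \cite[Theorem 5(2)]{QINP} applied on $P_n$ gives the combinatorial condition on the cyclotomic cosets of $B$ modulo $n$. For $\lambda = N/n$, the discussion after Proposition \ref{era5} lifts this to $\Delta=\mathcal{A}+B$ modulo $N$. Two facts are used here. By Proposition \ref{1-2-jap}, $\Delta$ is $q$-complete modulo $N$. And the reduction modulo $n$ of any $j\in\Delta\cap(-\Delta_N)$ lies in $B\cap(-B)$, as in the proof of Proposition \ref{era5}. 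Hence $\mathcal{S}_{\Delta}^{P_{\mathfrak{A}},q}\subseteq(\mathcal{S}_{\Delta}^{P_{\mathfrak{A}},q})^{\perp_e}$. For general $\lambda$, I invoke Equality (5) of \cite{HBCH}: it reduces the inner products of evaluations at $R$ to inner products over the blocks $a^jP_n$, since $(a^j\gamma)^{e+e'}=a^{j(e+e')}\gamma^{e+e'}$ and $e+e'$ is congruent modulo $n$ to a sum of elements of $B$. Each block sum then vanishes exactly as on $P_n$. So $\mathcal{S}_{\Delta}^{R,q}$ is self-orthogonal.

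\textbf{Step 2: the classical LRC and dimensions.} Set $C=(\mathcal{S}_{\Delta}^{R,q})^{\perp_e}$, which is dual-containing by Step 1. By Theorem \ref{mats-2}, or Theorem \ref{eval} together with Corollary \ref{mats-1} when $\lambda n\mid N$, $C$ is an $(n-\delta+1,\delta)$-LRC with parameters $[\lambda n,\lambda(n-\#B),\geq d]_q$. In the case $\lambda=N/n$ the dimension $N-\#\Delta$ also equals $\lambda(n-\#B)$, because $\#\Delta=\frac{N}{n}\#B$: the map $(in,b)\mapsto in+b$ is injective modulo $N$ since $b<n$. Theorem \ref{resto}(2) with $C_1=C_2=C$ then gives $k=2\dim C-\lambda n=\lambda(n-2\#B)$, and the quantum distance is at least $\dis(C)\geq d$.

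\textbf{Step 3: quantum locality.} I apply \cite[Theorem 28]{QLRC}, recalled in the introduction: $Q'(C)$ is a quantum $(r,\delta)$-LRC as soon as $C$ is one and $\delta\le \dis(C^{\perp_e})=\dis(\mathcal{S}_\Delta^{R,q})$. Equivalently, I verify the condition $\sigma_I[\pi_J(C)]=\sigma_I(C^{\perp_e})$ of Theorem \ref{quantumH-E} on the recovery sets $J$, which are the blocks $a^jP_n$ as in \cite[Theorem 12]{QZF2021}.

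\textbf{Main obstacle.} The hard step is this distance requirement $\delta\le\dis(\mathcal{S}_\Delta^{R,q})$. Self-orthogonality gives $\mathcal{S}_\Delta^{R,q}\subseteq C$, so $\dis(\mathcal{S}_\Delta^{R,q})\ge \dis(C)\ge d\ge\delta$. This settles it directly, and it is the point I would check first.
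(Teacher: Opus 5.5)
Your proposal is correct and follows essentially the paper's own route. Self-orthogonality of $\mathcal{S}_{\Delta}^{R,q}$ (respectively, $\mathcal{S}_{\Delta}^{R,q^2}$) comes from $B\cap(-B)=\emptyset$ (respectively, $B\cap(-B^q)=\emptyset$) via Proposition~\ref{era5} and the blockwise identity \cite[Equality (5)]{HBCH}. The classical locality and the dimension $\lambda(n-\#B)$ come from Theorems~\ref{eval} and~\ref{mats-2}, and the passage to the quantum code goes through Theorems~\ref{resto} and~\ref{SingletonQ} together with \cite[Theorem 28]{QLRC}.

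Your check $\dis(\mathcal{S}_\Delta^{R,q})\geq\dis(C)\geq d\geq\delta$ is valid but hardly an obstacle. In fact it is not needed for the direction you use: $\dis(\pi_J(C))\geq\delta$ gives $\sigma_I[\pi_J(C)]=0$, which contains $\sigma_I(C^{\perp_e})$ because $C^{\perp_e}\subseteq C$, so the condition of Theorem~\ref{quantumH-E} holds automatically.
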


\section{Some examples of optimal pure quantum \texorpdfstring{$(r,\delta)$}{lr5}-LRCs}
\label{LA4}
In this section, we use the previous results to construct some examples of  optimal pure quantum $(r,\delta)$-LRCs.  

\subsection{Optimal pure \texorpdfstring{$q$}{q4}-ary quantum \texorpdfstring{$(r,\delta)$}{lr6}-LRCs. Case \texorpdfstring{$n$ divides $q^2+1$}{n2}}
\label{41}
In this subsection we suppose that $n$ divides $q^2+1$ which divides $N$ and it divides $q^s -1$ for some $s=2 \varsigma$. We also assume that
$n$ is odd and $n \geq 3 $.  We will provide optimal pure $q$-ary quantum $(r,\delta)$-LRCs of length $\lambda n$, where $\lambda \leq \frac{N}{n}$. The following three results will be key for determining the parameters of our quantum codes.

First we define the following sets of integer numbers which depend on a positive integer $u$, $1 \leq u \leq \frac{n-1}{2}$:
\begin{equation}
\label{BU}
B(u) := \left\{ (n-1)/2 + i \;:\; 1 \leq i \leq u \right\} \cup \left\{ 1+ (n-1)/2 - i \;:\; 1 \leq i \leq u \right\}.
\end{equation}
Our first result in this subsection is a consequence of Corollary \ref{mats-1} and Theorems \ref{eval} and \ref{mats-2}. Recall that $R$ is the set of points where we evaluate.
\begin{pro}
\label{3-jap}
With the above notation, set $\Delta(u) = \mathcal{A} + B(u)$, $1 \leq u \leq \frac{n-1}{2}$. Then, the code $\left( \mathcal{S}_{\Delta(u)}^{R, q^2} \right)^{\perp_h}$ is an optimal $(r=n - 2u , \delta = 2u+1)$-locally recoverable code  with parameters $[\lambda n, \lambda n - 2 \lambda u , 2u+1]_{q^2}$.
\end{pro}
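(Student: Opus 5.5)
The proof checks three things for $B(u)$: that it is $q^2$-complete modulo $n$, that it contains enough consecutive integers, and that the resulting parameters meet the Singleton-like bound of Proposition \ref{Singleton-2} with equality.

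\textbf{Completeness.} Since $n$ divides $q^2+1$, we have $q^2\equiv -1 \pmod n$. Hence each $q^2$-cyclotomic coset modulo $n$ is $\{e,\,n-e\}$. Write $m=(n-1)/2$. Then
\[
B(u)=\{m+1,\dots,m+u\}\cup\{m+1-u,\dots,m\},
\]
which is the block of $2u$ consecutive integers $m-u+1,\dots,m+u$. Under $e\mapsto n-e$, the element $m+i$ goes to $m+1-i$, so $B(u)$ is a union of such cosets and is therefore $q^2$-complete modulo $n$. Every element lies in $[1,n-1]$ because $u\le m$, and $\#B(u)=2u$.

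\textbf{Local recoverability.} $B(u)$ contains $2u=d-1$ consecutive positive integers with $d=2u+1\le n$. Take $\delta=d=2u+1\ge 3$. Corollary \ref{mats-1} and Theorem \ref{eval}(2) apply when $\lambda=N/n$, and Theorem \ref{mats-2}(2) applies otherwise. Either way, $\left(\mathcal{S}_{\Delta(u)}^{R,q^2}\right)^{\perp_h}$ is an $(n-2u,\,2u+1)$-LRC with length $\lambda n$, dimension $\lambda(n-2u)$ and distance at least $2u+1$. The case $\lambda n \mid N$ with $\lambda<N/n$ is not literally covered by the statement of Theorem \ref{mats-2}. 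Its puncturing argument (dual of a puncturing is a shortening), together with the dimension computation in its proof, does not use non-divisibility, so it applies here as well.

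\textbf{Optimality.} This is the main point to verify. Put $k=\lambda(n-2u)$ and $r=n-2u$, so $\lceil k/r\rceil=\lambda$. The bound of Proposition \ref{Singleton-2} gives
\[
\lambda(n-2u)+d+(\lambda-1)2u\le \lambda n+1,
\]
that is, $d\le 2u+1$. Combined with the lower bound $d\ge 2u+1$, the minimum distance is exactly $2u+1$ and the bound holds with equality, so the code is optimal. The only delicate steps are the coset description $\{e,n-e\}$, which relies on $q^2\equiv-1 \pmod n$, and the extension to intermediate $\lambda$ noted above.
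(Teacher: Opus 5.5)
Your proposal is correct and follows the paper's proof: $q^2$-completeness of $B(u)$ from $q^2\equiv -1 \pmod n$, locality and parameters from Corollary \ref{mats-1} together with Theorems \ref{eval} and \ref{mats-2}, and equality in the bound of Proposition \ref{Singleton-2}. Two points the paper passes over quickly are made precise in your version: you use that bound to pin the distance at exactly $2u+1$, and you note that the puncturing argument also covers $\lambda n \mid N$ with $\lambda < N/n$.
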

\begin{proof}
We start by proving that $B(u)$ is $q^2$-complete modulo $n$. Indeed, for $1 \leq \ell \leq n-1$, the following equivalence holds:
\begin{equation}
  \ell \in B(u) \Leftrightarrow n-\ell \in B(u), \label{eq:m1}
\end{equation}
and, since $n$ divides $q^2+1$, $\ell q^2 \equiv n- \ell \mod{n}$, which shows that $B(u)$ is $q^2$-complete modulo $n$ by (\ref{eq:m1}). Then, by Proposition \ref{1-2-jap}, the set $\Delta (u)$ is $q^2$-complete modulo $N$.
Thus, Corollary \ref{mats-1}
proves that the code $\left( \mathcal{S}_{\Delta(u)}^{R, q^2} \right)^{\perp_h}$ is a classical $(r, \delta)$-LRC. 

Now,
independently of whether $\lambda n$ divides $N$ or not, by Theorems \ref{eval} and \ref{mats-2}, the left hand side of the Singleton-like bound (\ref{Singleform}) equals
\begin{eqnarray}
  &&(\lambda n-2u \lambda) + (1+2u) + \left(\left\lceil \frac{\lambda n-2u \lambda }{n-2u}\right\rceil-1\right)2u \label{eq:m10}\\
  &=& \lambda(n-2u) + (1+2u) + (\lambda -1) 2u\nonumber\\
  &=&\lambda n+1\nonumber
\end{eqnarray}
and we get an optimal $(r=n - 2u , \delta=2u+1)$-locally recoverable code  with parameters $[\lambda n, \lambda n - 2 \lambda u , 2u+1]_{q^2}$.
\end{proof}

Let us state a second result which will be useful.

\begin{pro}
\label{sumsquare}
Let $q$ be a prime power and $n$ an odd integer such that $2 \leq n \leq q^2+1$ and it divides $q^2+1$. Then, there exists a unique pair of positive integers $(m_1,m_2)$ such that $n=m_1^2 + m_2^2$ and $m_1 \equiv m_2 q \mod n$.
\end{pro}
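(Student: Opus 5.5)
The plan is a lattice argument. Consider
\[
L := \{(x,y)\in\mathbb{Z}^2 \;:\; x \equiv q y \mod n\}.
\]
It is a sublattice of $\mathbb{Z}^2$ of index $n$, since for each $y$ the class of $x$ modulo $n$ is determined. Hence its covolume is $\det L = n$. Two facts about $L$ drive everything.

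\emph{Norms.} For $(x,y)\in L$ we have $x^2+y^2 \equiv y^2(q^2+1) \equiv 0 \mod n$, because $n$ divides $q^2+1$. So every squared norm in $L$ is a multiple of $n$.

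\emph{Rotation.} $L$ is invariant under the rotation $\rho(x,y)=(-y,x)$. Indeed, if $x\equiv qy$ then $qx \equiv q^2 y \equiv -y \mod n$, so $(-y,x)\in L$.

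For existence, apply Minkowski's convex body theorem (compact version) to the closed disc of radius $R$ with $\pi R^2 = 4n$. It contains a nonzero $v=(x,y)\in L$ with $0 < x^2+y^2 \le 4n/\pi < 2n$. Since $x^2+y^2$ is a positive multiple of $n$, we get $x^2+y^2 = n$.

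To obtain positive coordinates, first exclude vectors on the axes. If $y=0$, then $x\equiv 0 \mod n$ and $x^2=n$, which forces $n\in\{0,1\}$, impossible since $n\ge 2$. If $x=0$, then $qy\equiv 0$; since $\gcd(q,n)=1$ (as $n$ divides $q^2+1$), this gives $y\equiv 0 \mod n$ and $y^2=n$, again impossible. So $v$ lies in the interior of some quadrant. Applying a suitable power of $\rho$ moves it into the open first quadrant while keeping it in $L$. This yields positive $(m_1,m_2)$ with $n=m_1^2+m_2^2$ and $m_1\equiv m_2 q \mod n$.

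For uniqueness, let $v$ be as above and let $w\in L$ be another vector with squared norm $n$. If $w$ is not $\pm v$, then $v,w$ are linearly independent, because both have the same norm. Their determinant $|\det(v,w)|$ is then a positive multiple of $\det L = n$. On the other hand $|\det(v,w)| \le |v||w| = n$, so equality holds. Therefore $v\perp w$, which means $w=\pm\rho(v)$. Consequently the vectors of squared norm $n$ in $L$ are exactly $v,\rho v,\rho^2 v,\rho^3 v$. These lie one in each open quadrant, so exactly one has both coordinates positive.

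I expect the main care to be needed in two places. The first is the strictness in Minkowski: we need $4n/\pi<2n$, so that no norm-$2n$ vector can be the one produced. The second is the determinant-versus-norm comparison that gives uniqueness. Both are short once the lattice $L$ and its rotation invariance are in place.
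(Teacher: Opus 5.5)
Your proof is correct, and it takes a genuinely different route from the paper's.

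Your lattice $L$ is exactly the paper's ideal $\mathfrak{a}=\{x+yi\in\mathbb{Z}[i] : x\equiv qy \bmod n\}$. Your rotation invariance under $\rho$ is the paper's check that $\mathfrak{a}$ is closed under multiplication by $i$. From there the paper argues algebraically:
\begin{itemize}
\item $\mathbb{Z}[i]$ is a PID, so $\mathfrak{a}=(\mu)$ with $\mu=m_1+m_2i$;
\item the index of $\mathfrak{a}$ in $\mathbb{Z}[i]$ is $n$, hence $m_1^2+m_2^2=n$;
\item uniqueness follows because a generator is determined up to the four units $\pm1,\pm i$, and exactly one associate has positive coordinates.
\end{itemize}

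You obtain existence from Minkowski's theorem together with $n\mid x^2+y^2$ on $L$. You obtain uniqueness from the squeeze $n\le|\det(v,w)|\le|v|\,|w|=n$, which forces $w\in\{\pm v,\pm\rho v\}$.

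The paper's argument is shorter if you take for granted the PID property of $\mathbb{Z}[i]$ and the identification of the ideal norm with $m_1^2+m_2^2$. Yours is self-contained geometry of numbers.

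Your version also makes explicit a step the paper leaves implicit. For uniqueness one must know that every pair $(m_1,m_2)$ satisfying the conditions gives a generator of $\mathfrak{a}$, not just an element of it. In the paper's language this requires comparing indices:
\[
[\mathbb{Z}[i]:(m_1+m_2i)] = n = [\mathbb{Z}[i]:\mathfrak{a}].
\]
Your Hadamard-inequality argument, which shows $L$ has exactly four vectors of squared norm $n$, settles this directly.

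Both proofs exclude solutions on the axes the same way, using $n\ge 2$ and $\gcd(q,n)=1$.
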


\begin{proof}
Denote by $\mathbb{Z}[i]$ the ring of Gaussian integers and consider the ideal $\mathfrak{a} \subseteq \mathbb{Z}[i]$ defined as follows:
\[ \mathfrak{a} = \{ x + yi \in \mathbb{Z}[i] \; : \; x \equiv qy \mod n \}. \]
To verify that $\mathfrak{a}$ is an ideal, it suffices to note that $i(x+yi)= -y +xi$ and $-y +xi \in \mathfrak{a}$ since $x \equiv qy \mod n$ implies $qx \equiv q^2y \equiv - y \mod n$ because $q^2 \equiv -1 \mod n$.

It is clear that $\mathfrak{a}$ is generated by $n$ and $q+i$ and since $\mathbb{Z}[i]$ is a principal ideal domain, there is an element $\mu = m_1 + m_2 i$ generating $\mathfrak{a}$. By definition of $\mathfrak{a}$, $m_1 \equiv m_2 q \mod n$. Now, the size of the quotient ring $\frac{\mathbb{Z}[i]}{\mathfrak{a}}$ (usually called norm of $\mathfrak{a}$) is $n$ because $q+i$ equals zero in $\frac{\mathbb{Z}[i]}{\mathfrak{a}}$ and, thus, the classes of $0, 1, \ldots, n-1$ generate $\frac{\mathbb{Z}[i]}{\mathfrak{a}}$.

The fact that $\mu$ generates $\mathfrak{a}$ proves that $n=m_1^2 + m_2^2$ and it remains to show that $m_1$ and $m_2$ can be chosen as positive integers. For a start $n \geq 2$, and thus $m_1=0$ and $m_2=0$ is not possible. The equality $m_1 = 0$ cannot hold; indeed, in this case $n = m_2^2$ and the equality $0 \equiv m_2 q \mod n$ implies $m_2^2$ divides $m_2 q$, which allows us to deduce that $m_2$ divides $q$. Thus $m_2^2 = n$ divides $q^2$, but $n$ divides $q^2+1$, which means that $n$ divides both $q^2$ and $q^2+1$, a contradiction since $n \geq 2$.

In conclusion, both $m_1$ and $m_2$ are different from zero and since $\mathfrak{a}$ has only four valid generators:
\[ (m_1, m_2), \quad (-m_1, -m_2), \quad (-m_2, m_1), \quad (m_2, -m_1), \]
the result is proved because there is only one choice with positive coordinates.
\end{proof}

From now, and until the end of this subsection, according to the above proposition, we  consider a positive integer $n$ as above, $n= m_1^2+m_2^2$, where $m_1$ and $m_2$ are both positive integers and $m_1 \equiv  m_2 q \mod n$.

Our third result is the following lemma. It helps to prove the forthcoming Theorem \ref{4-5-japo}, which is the main result in this subsection. This last result shows a way of getting optimal pure $q$-ary quantum $(r,\delta)$-LRCs.

\begin{lem}\label{lem:selforthogonal}
Let $n$, $q$, $m_1$ and $m_2$  as in Proposition \ref{sumsquare}. Assume that $n$ is odd, then the classical code $\mathcal{S}_{\Delta(u)}^{R, q^2}$
is Hermitian self-orthogonal
if and only if $u \leq (m_1 + m_2)/2$.
\end{lem}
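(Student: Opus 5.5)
The plan is to reduce both directions to a purely arithmetic condition on $B(u)$ modulo $n$, and then to solve that condition with the Gaussian-integer lattice $\mathfrak{a}$ from the proof of Proposition \ref{sumsquare}.

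\emph{Step 1 (reduction).} By Proposition \ref{3-jap}, $B(u)$ is $q^2$-complete modulo $n$. By the discussion before Theorem \ref{qlrcbch} (Proposition \ref{era5}(2) together with \cite[Theorem 7]{QINP} and Equality (5) of \cite{HBCH}), the code $\mathcal{S}_{\Delta(u)}^{R,q^2}$ is Hermitian self-orthogonal as soon as $B(u)\cap(-B(u)^q)=\emptyset$ in $\mathbb{Z}_n$.

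For the converse, suppose $e,e'\in B(u)$ satisfy $e+qe'\equiv 0 \pmod n$. Since $\Delta(u)$ contains every lift of $e,e'$ to $\mathbb{Z}_N$, I would choose lifts with $e+qe'\equiv 0 \pmod N$. I would then compute the Hermitian product of suitable trace-type codewords built from $X^e$ and $X^{e'}$ blockwise on $R$. On each block $a^j P_n$ the character sum equals $n$ times a power of $a$, and this power is $1$ for the chosen lifts, so the product is a nonzero multiple of $\lambda n$. This is the step I expect to be the main obstacle. If $p\mid\lambda$, the value $\lambda n$ may vanish in characteristic $p$, and then I would pick lifts with $e+qe'\not\equiv 0 \pmod N$, or combine several pairs, so that the geometric sum $\sum_{j<\lambda}a^{j(e+qe')}$ is nonzero. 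For $\lambda=N/n$ the converse is immediate from the iff criterion of \cite[Theorem 7]{QINP}.

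\emph{Step 2 (arithmetic translation).} By \eqref{eq:m1}, $B(u)=-B(u)$, so the condition becomes $B(u)\cap qB(u)=\emptyset$ modulo $n$. Writing $t=2\ell-n$, the elements $\ell\in B(u)$ correspond exactly to the odd integers $t$ with $|t|\le 2u-1$, via $2\ell\equiv t \pmod n$. Since $n$ is odd, $2$ is invertible modulo $n$. Hence self-orthogonality fails if and only if there are odd integers $x,y$ with $\max(|x|,|y|)\le 2u-1$ and $x\equiv qy \pmod n$, that is, if and only if $x+yi\in\mathfrak{a}$ for such a pair.

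\emph{Step 3 (lattice minimum).} By Proposition \ref{sumsquare}, $\mathfrak{a}=(m_1+m_2i)$, so its elements are $a(m_1,m_2)+b(-m_2,m_1)$ with $a,b\in\mathbb{Z}$. Because $n=m_1^2+m_2^2$ is odd, $m_1$ and $m_2$ have opposite parity. A parity check then shows that both coordinates are odd exactly when $a$ and $b$ are both odd.

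If $(a,b)=(\pm1,\pm1)$, the point is $\pm(m_1\mp m_2,\, m_2\pm m_1)$, whose sup-norm is $m_1+m_2$. Otherwise $a^2+b^2\ge 10$, so $x^2+y^2\ge 10n$ and $\max(|x|,|y|)^2\ge 5n>2n\ge (m_1+m_2)^2$. Hence the minimal sup-norm of an odd–odd point of $\mathfrak{a}$ is $m_1+m_2$.

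Consequently, self-orthogonality fails if and only if $2u-1\ge m_1+m_2$. Since $m_1+m_2$ is odd, the code is Hermitian self-orthogonal if and only if $u\le (m_1+m_2-1)/2$, which for integers $u$ is equivalent to $u\le (m_1+m_2)/2$.
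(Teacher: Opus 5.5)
Your Steps 2 and 3 are correct and are, in substance, the paper's proof. The paper's lattice $L=\{(v',w') : w'\equiv -qv' \pmod n\}$ is exactly $\mathfrak{a}$ (multiply by $q$ and use $q^2\equiv -1$). Its map $T$ is $(x,y)\mapsto (x-yi)(m_1+m_2i)$. The paper re-proves $T(\mathbb{Z}^2)=L$ by an index count, whereas you reuse the generator already found in Proposition~\ref{sumsquare}; the parity argument and the bound $\max(|x|,|y|)^2\ge 5n>(m_1+m_2)^2$ are the same in both. Note that the paper's proof consists only of this arithmetic equivalence. It identifies Hermitian self-orthogonality with $B(u)\cap(-B(u)^q)=\emptyset$ without comment, and the discussion before Theorem~\ref{qlrcbch} supplies only the implication you use for sufficiency.

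The gap is the converse in your Step 1, on which the ``only if'' half of the lemma rests. As written it is not yet an argument, for two reasons. First, the Hermitian product of two trace codewords is a sum over all conjugate pairs $(eq^{2t},e'q^{2t'})$. So ``a nonzero multiple of $\lambda n$'' needs a reason why these terms cannot cancel for every choice of trace coefficients. Second, the remedy for $p\mid\lambda$ is only asserted.

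Global lifts are in fact unnecessary. We have $\mathrm{ev}^{a^jP_n}(X^{in+b})=\omega^{ji}a^{jb}\,\mathrm{ev}^{P_n}(X^b)$ with $\omega=a^n$, and the matrix $(\omega^{ji})_{0\le i,j\le \lambda-1}$ is invertible for every $\lambda\le N/n$. The argument of Lemma~\ref{lemauno} then shows that $\mathcal{H}^R_{\Delta(u)}$ is the direct product of $\lambda$ copies of $\mathcal{H}^{P_n}_{B(u)}$, one on each block $a^jP_n$. Hence $\mathcal{S}^{R,q^2}_{\Delta(u)}$ is the direct product of $\lambda$ copies of $\mathcal{S}^{P_n,q^2}_{B(u)}$. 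Such a product is Hermitian self-orthogonal if and only if one factor is, as you see by using codewords supported on a single block.

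For the length-$n$ code you can apply the same if-and-only-if criterion you invoke for $\lambda=N/n$, with $N$ replaced by $n$. There every character sum is $0$ or $n$, and $n\neq 0$ in $\mathbb{F}_{q^s}$ because $n\mid q^s-1$. This gives the converse for every $\lambda$, with no condition on the characteristic $p$.
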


\begin{proof}
Keep the  above notation and consider the (related to $B(u)$) set:
\[
C(u) := \left\{ c \in \mathbb{Z}_{\geq 0} \;: \; c \equiv -qb \mod n \mbox{ for some } b \in B(u) \right\}.
\]

In a first step we are going to prove the following formula.
\begin{equation}
\label{BBBBB}
\# \left( B(u) \cap C(u) \right) = \# \big\{ (x,y) \in \mathbb{Z}^2 \; : \; |x m_1 + y m_2 | \leq 2u-1 \; \mbox{ and } \; |x m_2 - y m_1 | \leq 2u-1 \big\}.
\end{equation}
Let $c$ be an element in  $B(u) \cap C(u)$, it means that $c \in C(u)$ and there exists $b \in B(u)$ such that $c \equiv -qb \mod n$. Consider integer numbers $v = 2b - n$ and $w = 2c - n$. Both belong to $B(u)$, which implies $|v| \leq 2u-1$ and $|w| \leq 2u-1$. In addition $c \equiv -qb \mod n$ and then $w \equiv -qv \mod n$. Thus,
\begin{equation}
\label{AAAAA}
\# \left( B(u) \cap C(u) \right) = \# \big\{ (v,w) \in \mathbb{Z}^2 \; : \; v, w \; \mbox{ are odd and } \; |v| \leq 2u-1 \; \mbox{ and } \; |w| \leq 2u-1 \big\}.
\end{equation}

Let $T: \mathbb{Z}^2 \rightarrow \mathbb{Z}^2$ be the mapping of integer pairs $$T(x,y)=(v'=x m_1 + y m_2, w'=x m_2 - y m_1).$$ Recalling that $m_1 \equiv qm_2 \mod n$, one has $ v' \equiv m_2(qx + y) \mod n$, and
multiplying by $q$, one gets
$qv' \equiv  q^2m_2x + qm_2y \mod n$. Finally, the fact that $q^2 \equiv -1 \mod n$ proves
\[ qv' \equiv - m_2x + (qm_2)y \equiv - m_2x + m_1y \equiv -w' \mod n. \]

We claim that $T$ is a bijection between $\mathbb{Z}^2$ and the sublattice
\[
L := \{(v',w') \in \mathbb{Z}^2 \mid w' \equiv -qv' \mod n\}.
\]
To establish this, recall we have already shown that the image of $\mathbb{Z}^2$ under $T$, denoted $T(\mathbb{Z}^2)$, is contained in $L$. Thus, it suffices to show that $T(\mathbb{Z}^2) = L$. Since the determinant of the matrix corresponding to $T$ is $-n$, the index of $T(\mathbb{Z}^2)$ in the full lattice $\mathbb{Z}^2$ is given by:
$
[\mathbb{Z}^2 : T(\mathbb{Z}^2)] = |-n| = n$.  To prove the equality of the sublattices, it is now enough to show that $[\mathbb{Z}^2 : L] = n$. Consider the group homomorphism:
\[
\varphi: \mathbb{Z}^2 \rightarrow \mathbb{Z}/n\mathbb{Z}, \quad \varphi(v',w') = (w'+qv') \bmod n.
\]
By definition, the kernel of $\varphi$ is precisely $L$. Since $\varphi$ is surjective, the first isomorphism theorem shows $
[\mathbb{Z}^2 : L] = n = [\mathbb{Z}^2 : T(\mathbb{Z}^2)]$.  Since $T(\mathbb{Z}^2) \subseteq L$ and both sublattices have the same finite index in $\mathbb{Z}^2$, we conclude that $T(\mathbb{Z}^2) = L$, completing the proof of our claim.

Continuing with the proof, we know that  $n = m_1^2 + m_2^2$ is odd, therefore one of $m_1$ and $m_2$ is odd and the other one is even, which allows us to deduce that $v'$ and $w'$ are both odd integers if and only if both $x$ and $y$ are odd integers. Finally, considering the subset of $L$ given by the pairs $(v,w)$ before considered, the constraints $|v| \leq 2u-1$ and $|w| \leq 2u-1$ in (\ref{AAAAA}) precisely translate to $|x m_1 + y m_2| \leq 2u-1$ and $|x m_2 - y m_1| \leq 2u-1$ and, therefore Equality (\ref{BBBBB}) is proved.\\

To finish the proof of the lemma we have only to prove the following equivalence:
\begin{equation}
\label{CCCCC}
 B(u) \cap C(u) = \emptyset \; \mbox{ if and only if } \; 2u \leq m_1 + m_2.
\end{equation}

In fact, (\ref{CCCCC}) is true if and only if there are no odd pairs $(x,y)$ satisfying the constraints in (\ref{AAAAA}).

Set $\mathfrak{V}(x,y) = |x m_1 + y m_2|$ and $\mathfrak{W}(x,y) = |x m_2 - y m_1|$, and consider the map $h: \mathbb{Z}^2 \rightarrow \mathbb{Z}$, $h(x,y):=\max \{\mathfrak{V}(x,y), \mathfrak{W}(x,y)\}$. We are interested in the absolute minimum:
\[
\mathfrak{m} := \min\{ h(x,y) \; : \; x, y \; \mbox{ are odd integer numbers} \}.
\]
Let us look for that minimum. For a start, it is straightforward to show that  $\mathfrak{V}(x,y)^2 + \mathfrak{W}(x,y)^2 = n (x^2+y^2)$. Now, starting with the set of odd integers of minimum magnitude $\{-1,1\}$, by direct computation, it holds that $\min \{ h(x,y) : x, y \in \{-1,1\} \} =m_1+m_2$,
because $m_1, m_2 >0$ and then $m_1 + m_2 > |m_1-m_2|$.

Now, when either $x$ or $y$ does not belong to $\{-1,1\}$, one gets  $x^2+y^2 \geq 3^2+1^2 = 10$ and then $\mathfrak{V}(x,y)^2 + \mathfrak{W}(x,y)^2 \geq 10 n$, which proves that $h(x,y)^2 \geq 5n$. Finally, the sequence of equalities $$(m_1+m_2)^2 = m_1^2 + m_2^2 + 2m_1m_2 \leq 2(m_1^2+m_2^2) \leq 2n < 5n$$ allows us to deduce that $h(x,y) \geq \sqrt{5n} > m_1+ m_2$. Thus, by (\ref{BBBBB}), $B(u) \cap C(u) = \emptyset$ if and only if $2u-1 < \mathfrak{m}$, which is equivalent to $2u < m_1 +m_2$ and the lemma is proved.

\end{proof}

\begin{teo}
\label{4-5-japo}
Keep the above notation, where $n$ is an odd divisor of $q^2+1$ (which divides $N$ and it divides $q^s -1$) and $m_1$ and $m_2$ is the unique pair of positive integers
such that $m_1^2 + m_2^2 = n$ and $m_1 \equiv m_2 q$ modulo $n$. Suppose also that $\lambda \leq \frac{N}{n}$.
Then, the code $\left( \mathcal{S}_{\Delta(u)}^{R, q^2} \right)^{\perp_h}$, $1 \leq u \leq \frac{{m_1 + m_2 }}{2} \leq \frac{n-1}{2}$, is Hermitian dual-containing, giving rise to an optimal pure quantum $(r= n - 2u , \delta= 2u+1)$-LRC with parameters $[[\lambda n, \lambda n - 4 \lambda u , 2u+1]]_{q}$.
\end{teo}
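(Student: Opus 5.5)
The plan is to assemble the theorem from three earlier results: Lemma \ref{lem:selforthogonal} for self-orthogonality, Theorem \ref{qlrcbch} (together with Proposition \ref{3-jap}) for the quantum parameters and locality, and Theorem \ref{SingletonQ} with (\ref{PURE}) for optimality.

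First, since $1 \leq u \leq (m_1+m_2)/2$, Lemma \ref{lem:selforthogonal} shows that $\mathcal{S}_{\Delta(u)}^{R,q^2}$ is Hermitian self-orthogonal. Equivalently, $C := \left(\mathcal{S}_{\Delta(u)}^{R,q^2}\right)^{\perp_h}$ is Hermitian dual-containing.

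Before invoking Theorem \ref{qlrcbch}, I would check its hypotheses for $B = B(u)$.
\begin{itemize}
\item $B(u)$ is $q^2$-complete modulo $n$; this was shown in the proof of Proposition \ref{3-jap}.
\item $B(u)$ consists of the $2u$ consecutive positive integers $\frac{n+1}{2}-u, \ldots, \frac{n-1}{2}+u$, so $d = 2u+1$.
\item $d = 2u+1 \leq n$, because $u \leq \frac{n-1}{2}$.
\item $\mathcal{S}_{B(u)}^{P_n,q^2}$ is Hermitian self-orthogonal. This follows from the case $\lambda=1$ of the lemma, or directly from $B(u)\cap(-B(u)^q)=\emptyset$, which is what the lemma's proof establishes via $B(u)\cap C(u)=\emptyset$. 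Proposition \ref{era5} then lifts this to $\mathcal{A}+B(u)$.
\end{itemize}
Theorem \ref{qlrcbch}(2), with $\delta = 2u+1$ and $\#B(u) = 2u$, then yields a quantum $(n-2u, 2u+1)$-LRC with parameters $[[\lambda n, \lambda(n-4u), \geq 2u+1]]_q$.

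Next I would pin the minimum distance to exactly $2u+1$ and establish purity. By Proposition \ref{3-jap}, $C$ is an optimal classical LRC with $\dis(C) = 2u+1$ exactly. The quantum code from Theorem \ref{resto}(1) has distance $\geq \dis(C)$. It is pure exactly when its minimum distance equals $\dis(C)$. Since $C^{\perp_h} \subseteq C$, the quantum distance is the minimum weight over $C \setminus C^{\perp_h}$. So I need a codeword of $C$ of weight $2u+1$ that does not lie in $C^{\perp_h} = \mathcal{S}_{\Delta(u)}^{R,q^2}$.

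This is the step I expect to be the main obstacle. My first attempt would be a dimension/distance comparison: show that $\mathcal{S}_{\Delta(u)}^{R,q^2}$ has minimum distance larger than $2u+1$, so it contains no word of weight $2u+1$. The natural tool is the Singleton-type bound applied block by block: on each block of $n$ points, $\mathcal{S}_{\Delta(u)}$ restricts to $\mathcal{S}_{B(u)}^{P_n,q^2}$, which has dimension $2u$.

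If that is not sharp enough, I would fall back on the standard argument from \cite{QLRC}. In that paper, purity is what lets one replace $\dis(C)$ by $\dis(Q(C))$, and the Singleton bound (\ref{eq17}) already forces $\dis(Q) \leq \dis(C)$ whenever $C$ is optimal. Concretely: (\ref{PURE}) with $k = \lambda(n-4u)$, $m = \lambda n$, $r = n-2u$ gives
\[
\left\lceil \frac{m+k}{2r} \right\rceil = \left\lceil \frac{2\lambda(n-2u)}{2(n-2u)} \right\rceil = \lambda.
\]
Hence the left-hand side of (\ref{PURE}) equals
\[
\lambda(n-4u) + 2(2u+1) + 2(\lambda-1)2u = \lambda n + 2 = m+2,
\]
which attains the bound. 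So the quantum distance cannot exceed $2u+1$, and it equals $2u+1$. Purity then amounts to showing that a weight-$(2u+1)$ word of $C$, supported inside one block, lies outside $C^{\perp_h}$. I would try to take this word to be a minimum-weight word of the local code $\left(\mathcal{S}_{B(u)}^{P_n,q^2}\right)^{\perp_h}$, which is MDS with parameters $[n, n-2u, 2u+1]$, extended by zeros. I would then argue that it is not in $\mathcal{S}_{B(u)}^{P_n,q^2}$, because the latter has minimum distance $\geq n-2u+1$ by Singleton duality; this is valid as long as $n-2u+1 > 2u+1$, i.e.\ $u$ is in the stated range.
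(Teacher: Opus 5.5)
Your argument is correct and has the same skeleton as the paper's: Hermitian dual containment from Lemma~\ref{lem:selforthogonal}, parameters and locality from Theorem~\ref{qlrcbch} and Proposition~\ref{3-jap}, and purity by showing that weight-$(2u+1)$ words of $C$ avoid $C^{\perp_h}$. Your purity certificate differs from the paper's, however, and yours is the one that holds up.

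The paper asserts $\dis(\mathcal{S}_{\Delta(u)}^{R,q^2})\geq\lambda n-(\frac{n-1}{2}+u)>\delta$ by counting zeros of a polynomial of degree $\frac{n-1}{2}+u$. That count ignores the exponents $in+b$ with $i\geq 1$. By the dimension count in the proof of Lemma~\ref{lemauno}, which works for every $\lambda\leq N/n$, $\mathcal{S}_{\Delta(u)}^{R,q^2}$ is the direct sum of $\lambda$ block copies of $\mathcal{S}_{B(u)}^{P_n,q^2}$. So for $\lambda\geq 2$ it contains words of weight at most $n$, and the displayed bound cannot hold. Even for $\lambda=1$, the value $\frac{n+1}{2}-u$ does not exceed $2u+1$ when $(n,u)=(5,1)$ or $(13,2)$.

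Your block-local argument avoids this. Take the zero-extension of a weight-$(2u+1)$ word of the MDS code $(\mathcal{S}_{B(u)}^{P_n,q^2})^{\perp_h}$. It lies in $C$ because $\pi_{P_0}(\mathcal{S}_{\Delta(u)}^{R,q^2})\subseteq\mathcal{S}_{B(u)}^{P_n,q^2}$. It lies outside $C^{\perp_h}$ because the same inclusion would place a weight-$(2u+1)$ word in the $[n,2u,n-2u+1]$ MDS code $\mathcal{S}_{B(u)}^{P_n,q^2}$. Your ``first attempt'' becomes the same argument once the Singleton bound, which only bounds distances from above, is replaced by MDS duality. That gives $\dis(\mathcal{S}_{\Delta(u)}^{R,q^2})=n-2u+1$ exactly.

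There are two things to fix.
\begin{enumerate}
\item Drop the claim that (\ref{eq17}) forces $\dis(Q)\leq\dis(C)$. Inequality (\ref{eq17}) constrains $\dis(C)$ only, (\ref{PURE}) is available only once purity is known, and in general $\dis(Q)\geq\dis(C)$. Deducing from equality in (\ref{PURE}) that the quantum distance cannot exceed $2u+1$ is therefore circular. It does no harm here only because your explicit weight-$(2u+1)$ word in $C\setminus C^{\perp_h}$ already gives $\dis(Q)=2u+1=\dis(C)$, i.e.\ purity. After that, your evaluation of (\ref{PURE}) correctly gives optimality.
\item The inequality $n-2u+1>2u+1$, i.e.\ $n>4u$ (which also ensures $k=\lambda(n-4u)>0$), is not implied by the hypothesis $\frac{m_1+m_2}{2}\leq\frac{n-1}{2}$ alone, so it needs a line of proof. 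For $n>8$, use $(m_1+m_2)^2\leq 2n$ to get $4u\leq 2(m_1+m_2)\leq 2\sqrt{2n}<n$. An odd $n\geq 3$ dividing $q^2+1$ has only prime factors $\equiv 1\pmod 4$, so the only case with $n\leq 8$ is $n=5$, where $u=1$ and $4u<5$.
\end{enumerate}
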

\begin{proof}
The Hermitian dual-containment follows from the last lemma.
To prove that the obtained quantum code is pure, we notice that  $\delta = 2u+1$ is the minimum distance of $\left(\mathcal{S}_{\Delta(u)}^{R, q^2}\right)^{\perp_h}$, while
\[
\dis \left(\mathcal{S}_{\Delta(u)}^{R, q^2}\right) \geq \lambda n - \left( \frac{n-1}{2} +u \right)> \delta,
\]
where the first inequality holds because a polynomial of degree $\frac{n-1}{2}+ u$ has, at most, $\frac{n-1}{2}+ u$ zeroes.
\end{proof}


\subsection{Optimal pure \texorpdfstring{$q$}{Q1}-ary quantum \texorpdfstring{$(r,\delta)$}{lr7}-LRCs. Case \texorpdfstring{$n=\mathfrak{q}-1$}{n}}
\label{42}
In this subsection, the symbol $\mathfrak{q}$ stands for $q$ when using the Euclidean construction and for $q^2$ in the Hermitian case. Set $n=\mathfrak{q} -1$ and  $s=2\varsigma$ when considering Hermitian duality. As before, $\lambda \leq \frac{N}{n}$, where $n$ divides $N$ and it divides $q^s -1$.  For any positive integer $v < \mathfrak{q} -1$, set
\[
B'(v):=\{1, \ldots, v\}.
\]
$B'(v)$ is $\mathfrak{q}$-complete modulo $n$ because $\mathfrak{q} \equiv 1 \mod n$.

Let $\Delta'(v) = \mathcal{A} + B'(v)$. Reasoning as in Proposition \ref{3-jap}, but considering $\mathfrak{q}$ instead of $q^2$, one gets that $\left( \mathcal{S}_{\Delta'(v)}^{R, q} \right)^{\perp_e}$ and $\left( \mathcal{S}_{\Delta'(v)}^{R, q} \right)^{\perp_h}$  are $\mathfrak{q}$-ary $(n-v,v+1)$-LRCs. Let us show that they are optimal, which follows from 
the following sequence of equalities:
\begin{eqnarray*}
  &&(\lambda n - \lambda v ) + (1+v) + \left(\left\lceil \frac{\lambda n - \lambda v}{n-v}\right\rceil-1\right)v \\
  &=& \lambda (n-v) + (1+v) + \left( \lambda-1\right)v\\
  &=&\lambda n +1.
\end{eqnarray*}

In addition, if $\mathfrak{q}=q$, it suffices that $2v \leq n-1$ to guarantee $B'(v) \cap (-B'(v)) = \emptyset$ and, when $\mathfrak{q}=q^2$,  $(1+q)v \leq n-1$ implies that $B'(v) \cap (-B'(v)^q) = \emptyset$. Therefore, we have proved the following result.

\begin{teo}
\label{jap-s-3}
Keep the above notation and  recall that $\Delta'(v) = \mathcal{A} + B'(v)$.
\begin{enumerate}
\item Set $n=q-1$, $q \geq 3$ and suppose that $1 \leq v \leq \frac{n-1}{2}$. Then, the code $\left( \mathcal{S}_{\Delta'(v)}^{R, q} \right)^{\perp_e}$ is an optimal $(r=n - v , \delta = v+1)$-locally recoverable code  with parameters $[\lambda n, \lambda n -  \lambda v, v+1]_{q}$. It is Euclidean dual-containing giving rise to an optimal pure quantum $(r= n - v , \delta= v+1)$-LRC with parameters $[[\lambda n, \lambda n - 2 \lambda v , v+1]]_{q}$.

\item Set $n=q^2-1$ and suppose that $1 \leq v \leq \frac{n-1}{q+1}$. Then, the code $\left( \mathcal{S}_{\Delta'(v)}^{R, q^2} \right)^{\perp_h}$ is an optimal $(r=n - v , \delta=v+1)$-locally recoverable code  with parameters $[\lambda n, \lambda n -  \lambda v, v+1]_{q^2}$. It is Hermitian dual-containing giving rise to an optimal pure  quantum $(r= n - v , \delta= v+1)$-LRC with parameters $[[\lambda n, \lambda n - 2 \lambda v , v+1]]_{q}$.

\end{enumerate}

\end{teo}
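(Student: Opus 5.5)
The plan is to verify the hypotheses of Theorem \ref{qlrcbch} for $B = B'(v)$, then read off optimality from the displayed Singleton computation and purity from a degree argument, as in the proof of Theorem \ref{4-5-japo}.

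First, $B'(v)=\{1,\ldots,v\}$ is $\mathfrak{q}$-complete modulo $n=\mathfrak{q}-1$, since $\mathfrak{q}\equiv 1 \mod n$. It contains $d-1=v$ consecutive positive integers, so $d=v+1\le n$. By Corollary \ref{mats-1}, Theorem \ref{eval}, and Theorem \ref{mats-2} (depending on whether $\lambda n$ divides $N$), the code $(\mathcal{S}_{\Delta'(v)}^{R,\mathfrak{q}})^{\perp}$ has length $\lambda n$, dimension $\lambda(n-v)$, minimum distance at least $v+1$, and locality $(n-v,v+1)$ with $\delta=v+1$.

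Equality in the Singleton-like bound (\ref{Singleform}) forces the minimum distance to be exactly $v+1$: the displayed computation shows the left-hand side equals $\lambda n+1$ already with $d=v+1$, and a larger $d$ would violate Proposition \ref{Singleton-2}. This gives the classical optimality claim.

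Next comes self-orthogonality. By the discussion after Proposition \ref{era5}, and Equality (5) of \cite{HBCH} for the H-BCH case, it suffices to show that $\mathcal{S}_{B'(v)}^{P_n,\mathfrak{q}}$ is self-orthogonal. Via \cite[Theorem 5(2)]{QINP} (respectively \cite[Theorem 7(1)]{QINP}), this reduces to $B'(v)\cap(-B'(v))=\emptyset$ (respectively $B'(v)\cap(-B'(v)^q)=\emptyset$) in $\mathbb{Z}_n$.

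In the Euclidean case, $-B'(v)=\{n-v,\ldots,n-1\}$, which is disjoint from $\{1,\ldots,v\}$ exactly when $v<n-v$, i.e.\ $2v\le n-1$.

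In the Hermitian case, $-B'(v)^q=\{n-qi \bmod n : 1\le i\le v\}$. Since $qv\le (q+1)v-1<n$, each $qi$ lies in $[q,qv]\subset[1,n-1]$, so $n-qi\in[n-qv,n-q]$. The condition $(q+1)v\le n-1$ gives $n-qv\ge v+1$, hence the two sets are disjoint. Theorem \ref{qlrcbch} then yields the quantum code $[[\lambda n,\lambda(n-2v),\ge v+1]]_q$ with locality $(n-v,v+1)$.

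For purity I will show $\dis(\mathcal{S}_{\Delta'(v)}^{R,\mathfrak{q}})>v+1$, arguing as in Theorem \ref{4-5-japo}. A nonzero codeword is the evaluation on $R$ of a polynomial with exponents in $\mathcal{A}+B'(v)$. Restricted to each block $a^j\{1,\zeta_n,\ldots,\zeta_n^{n-1}\}$, it agrees with a polynomial $g_j$ with exponents in $\{1,\ldots,v\}$, since $X^n$ is constant on the block. Each $g_j$ is a nonzero multiple of $X$ of degree at most $v$, so it has at most $v-1$ nonzero roots. Hence every nonzero block contributes weight at least $n-v+1$.

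The main obstacle is checking that $n-v+1>v+1$, i.e.\ $n>2v$, which holds in both cases by the bounds on $v$. Given this, every nonzero codeword has weight above $v+1$, so the quantum minimum distance is $v+1$ and the code is pure.

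Finally, plugging $k=\lambda(n-2v)$, $m=\lambda n$ and $r=n-v$ into (\ref{PURE}) gives $\lceil (m+k)/(2r)\rceil=\lambda$. The left-hand side is then $\lambda n-2\lambda v+2v+2+2(\lambda-1)v=\lambda n+2$, so the bound is attained and the code is optimal.
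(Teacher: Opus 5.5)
Your proposal is correct and follows essentially the paper's route: $\mathfrak{q}$-completeness of $B'(v)$, then Corollary \ref{mats-1} and Theorems \ref{eval} and \ref{mats-2} for locality and parameters, the Singleton computation for optimality, and the two disjointness conditions ($2v\le n-1$, respectively $(q+1)v\le n-1$) fed into Theorem \ref{qlrcbch}. The paper leaves three points implicit, and you supply them soundly: that the classical distance is exactly $v+1$, purity, and equality in the quantum Singleton-like bound; your blockwise bound $n-v+1>v+1$ for purity is the right argument, since a global degree bound on $R$ would not apply (exponents in $\mathcal{A}+B'(v)$ go far beyond $v$).
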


When $q$ is odd, Item (2) of the above theorem can be improved when $\lambda=2$. The specific result is the following one, where Hermitian dual containment holds by considering the case $s=2$ and $n_1= 2(q-1)$  in \cite[Theorem A]{HBCH}.

\begin{cor}
\label{BCHER}
Let $n=q^2-1$, $q$ odd and $\lambda =2$ and suppose that $1 \leq v \leq 2q-3$ . Then, the code $\left( \mathcal{S}_{\Delta'(v)}^{R, q^2} \right)^{\perp_h}$ is an optimal $(r=n - v , \delta=v+1)$-locally recoverable code  with parameters $[2 n, 2 n -  2 v, v+1]_{q^2}$. It is Hermitian dual-containing giving rise to an optimal pure quantum $(r= n - v , \delta= v+1)$-LRC with parameters $[[2 n, 2 n - 4  v , v+1]]_{q}$.
\end{cor}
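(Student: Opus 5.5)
The plan is to run the argument of Theorem \ref{jap-s-3}(2) again, replacing only the self-orthogonality step by an appeal to \cite[Theorem A]{HBCH}. Take $s=2$, so $N=q^2-1=n$; then $\mathcal{A}=\{0\}$, and $\lambda=2$ gives length $2n=2(q^2-1)$. Since $2n$ does not divide $N$, this is not literally the setting $\lambda\le N/n$. So first I will enlarge the ambient field. Take $s=2\varsigma$ large enough that $2n$ divides $q^s-1$ (for example $s=4$, because $q^2+1$ is even when $q$ is odd, so $2(q^2-1)$ divides $q^4-1$). Put $N=2n$ or $N=q^4-1$. In the $N=2n$ case $\mathcal{A}=\{0,n\}$ and $R=P_{\mathfrak{A}}$.

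\textbf{The classical code.} $B'(v)=\{1,\dots,v\}$ is $q^2$-complete modulo $n$ because $q^2\equiv 1 \pmod n$. It contains $v$ consecutive positive integers, and $v+1\le 2q-2\le n$. Theorem \ref{eval}(2) (or Theorem \ref{mats-2}(2) if one prefers the H-BCH viewpoint), together with Corollary \ref{mats-1}, shows that $\left(\mathcal{S}_{\Delta'(v)}^{R,q^2}\right)^{\perp_h}$ is an $(n-v,v+1)$-LRC with parameters $[2n,2n-2v,\ge v+1]_{q^2}$. Its distance is exactly $v+1$, since the Singleton-like computation already displayed for general $\lambda$ then forces equality:
\[
(2n-2v)+(1+v)+\Bigl(\bigl\lceil\tfrac{2n-2v}{n-v}\bigr\rceil-1\Bigr)v = 2n+1.
\]
This gives classical optimality.

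\textbf{Hermitian dual-containment.} This is the main obstacle, because the proof of Theorem \ref{jap-s-3} only works for $(q+1)v\le n-1$, that is $v\le q-2$. My plan is to identify $\mathcal{S}_{\Delta'(v)}^{R,q^2}$ with the $q^2$-ary code $\mathcal{S}_{\Delta}^{P,q^2}$ of length $n_1=2(q-1)\cdot\frac{q+1}{1}$ treated in \cite{HBCH}. Concretely, the evaluation points are the $(q^2-1)$-th roots of unity together with their translates by $a$, and they are the zeros of $(X^n-1)(X^n-a^n)$. I would check that the exponent set $\{1,\dots,v\}+\{0,n\}$ corresponds to the designed-distance set $\Delta_{q^2}^{N}(1,\tau)$ in that theorem, with $\alpha_{\tau}=v$. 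Then \cite[Theorem A]{HBCH}, specialised to $s=2$ and $n_1=2(q-1)$, yields Hermitian self-orthogonality exactly when the designed distance satisfies $v\le 2q-3$. Alternatively, one can verify directly the criterion recalled before Theorem \ref{qlrcbch}: every cyclotomic coset in $\mathcal{A}+B'(v)$ must meet $\mathbb{Z}_N\setminus -(\mathcal{A}+B'(v))^q_N$. With $N=2n$, an element $j\in\{1,\dots,v\}\cup\{n+1,\dots,n+v\}$ has $q^2$-coset $\{j\}$ modulo $2n$, since $q^2\equiv 1 \pmod{2n}$ for odd $q$. So the condition becomes $j\notin -q(\mathcal{A}+B'(v))$ modulo $2n$. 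Writing $-qb-\epsilon n$ modulo $2n$ for $b\le v$ and checking when this lands in $[1,v]\cup[n+1,n+v]$ reduces to an elementary inequality on $qb \bmod 2n$. Here the parity of $q$ is what allows the bound to double to $2q-3$. I expect this index bookkeeping to be the delicate step.

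\textbf{Quantum code and purity.} Theorem \ref{resto}(1) gives a $[[2n,2(2n-2v)-2n,\ge v+1]]_q=[[2n,2n-4v,\ge v+1]]_q$ code. Purity is argued as in Theorem \ref{4-5-japo}. The code $\mathcal{S}_{\Delta'(v)}^{R,q^2}$ is spanned by evaluations of polynomials of degree at most $n+v$ at $2n$ points, so its minimum distance is at least $2n-(n+v)=n-v>v+1$. Hence the minimum distance is exactly $v+1$ and is attained only by vectors outside the stabilizer, so the code is pure. Finally, (\ref{PURE}) becomes $(2n-4v)+2(v+1)+2(2-1)v=2n+2$, which is equality, so the code is optimal.
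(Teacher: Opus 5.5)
\textbf{The gap is the Hermitian dual-containment step, and neither of your two routes closes it.}

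First, the congruence $q^2\equiv 1 \pmod{2n}$ is false. Since $q^2=n+1$, one has $q^2\equiv n+1 \pmod{2n}$. So the $q^2$-coset of $j$ modulo $2n$ is $\{j\}$ for $j$ even but $\{j,n+j\}$ for $j$ odd. Hence the designed-distance set generated by $1,\dots,v$ is $\{1,\dots,v\}\cup\{n+j : j\le v,\ j \text{ odd}\}$. This is strictly smaller than $\{0,n\}+B'(v)$, so the identification you need before quoting \cite[Theorem A]{HBCH} is wrong.

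Second, your direct check produces no doubling. Write $e=\epsilon n+i$ and $e'=\epsilon' n+j$ with $i,j\in B'(v)$. Using $qn\equiv n \pmod{2n}$, you get $e+qe'\equiv(\epsilon+\epsilon')n+i+qj \pmod{2n}$. Both values of $\epsilon$ and $\epsilon'$ occur in $\Delta'(v)$. Therefore $\Delta'(v)\cap\bigl(-\Delta'(v)^q_{2n}\bigr)=\emptyset$ is equivalent to $i+qj\not\equiv 0 \pmod n$ for all $i,j\le v$. That is exactly the length-$n$ condition, and it fails at $i=j=q-1$.

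This cannot be repaired, because the dual-containment claim itself fails for $q-1\le v\le 2q-3$, whatever admissible $N$ you choose. Since $N/n\ge 2$, we have $\omega=a^n\neq 1$, and
$\mathrm{ev}^R\bigl((\omega X^{q-1}-X^{n+q-1})/(\omega-1)\bigr)=\bigl(\mathrm{ev}^{P_n}(X^{q-1}),\mathbf{0}\bigr)$.
This vector has entries in $\mathbb{F}_{q^2}$, so it lies in $\mathcal{S}^{R,q^2}_{\Delta'(v)}$ as soon as $v\ge q-1$. Its Hermitian norm is $\sum_{\zeta\in P_n}\zeta^{(q-1)(q+1)}=q^2-1=-1\neq 0$.

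The same computation shows more: $\mathcal{S}^{R,q^2}_{\Delta'(v)}$ is the direct sum of two copies of the narrow-sense $[n,v]_{q^2}$ Reed--Solomon code. So it is Hermitian self-orthogonal exactly when $v\le q-2$, which is the range already covered by Theorem \ref{jap-s-3}(2).

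The paper's one-line citation of \cite[Theorem A]{HBCH} does not rescue this either. That theorem concerns BCH-type defining sets $\Delta^N_{q^2}(1,\tau)$. The bound $v\le 2q-3$ is precisely the self-orthogonality range of the smaller set above; its first obstruction is $i=j=2q-2$, where $i+qj=2n$. But that code is not of the form $\mathcal{A}+B$, and it has neither dimension $2v$ nor $(n-v,v+1)$-locality.

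The rest of your argument is fine: the classical parameters, the Singleton-like equality, the purity estimate and the equality in (\ref{PURE}). The quantum part, however, is conditional on dual containment, so you obtain the quantum conclusion only for $v\le q-2$.
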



\subsection{Optimal pure \texorpdfstring{$q$}{q2}-ary quantum \texorpdfstring{$(r,\delta)$}{lr9}-LRCs from evaluation of multivariate polynomials}
\label{43}
In this subsection, we will use monomial-Cartesian codes and results in Subsections \ref{41} and \ref{42} to provide new optimal pure $q$-ary quantum $(r,\delta)$-LRCs.

We begin by recalling the concept of {\it monomial-Cartesian code}. We follow the definition given in \cite[Definition 3.1]{GFMC} and, since we are interested in subfield-subcodes over $\mathbb{F}_{q}$, we need a larger field $\mathbb{F} =\mathbb{F}_{q^s}$. Some previous references are \cite{G2008, LMS2020}. Consider a Cartesian product $Z:=Z_1 \times \cdots \times Z_w = \{\mathbf{z}_1, \ldots, \mathbf{z}_\beta\}$, where $Z_\ell$, $1 \leq \ell \leq w$, are subsets of $\mathbb{F}$ of cardinality $n_\ell \geq 2$. Denote by $\mathbb{F}[X_1, \ldots, X_w]$ the polynomial ring in $w$ variables with coefficients in $\mathbb{F}$ and, for $1 \leq \ell \leq w$, take $f_\ell (X_\ell) = \prod_{z \in Z_\ell} (X_\ell -z) \in \mathbb{F}[X_1, \ldots, X_w]$. Let $I$ be the ideal of $\mathbb{F}[X_1, \ldots, X_w]$ generated by $\{f_1, \ldots, f_w\}$ and denote by $\mathcal{R}$ the quotient ring $\mathcal{R} := \mathbb{F}[X_1, \ldots, X_w]/I$. It is clear that any class $f$ in $\mathcal{R}$ has a representative as follows:
\[
f= \sum_{ \mathbf{e}=(e_1, \ldots, e_w)\in E}  f_{\mathbf{e}} X_1^{e_1} \cdots X_w^{e_w},
\]
where $E:= \{0, \ldots, n_1-1\} \times \cdots \times \{0, \ldots, n_w-1\}$. The evaluation map
\[
\mathrm{ev}^Z := \mathcal{R} \rightarrow \mathbb{F}^\beta \; \mbox{ defined by } \; \mathrm{ev}^Z(f)= (f(\mathbf{z}_1), \ldots, f(\mathbf{z}_\beta))
\]
allows us to define the {\it monomial-Cartesian code} $\mathcal{C}_\Gamma^Z$ given by a non-empty set $\Gamma \subseteq E$ as
\[
\mathcal{C}_\Gamma^Z := \langle \mathrm{ev}^Z(X^\mathbf{e}) \; : \; \mathbf{e} \in \Gamma \rangle,
\]
where $\langle \cdot \rangle$ means the $\mathbb{F}$-linear space spanned by $\cdot$ and $X^\mathbf{e}=X_1^{e_1} \cdots X_w^{e_w}$.

Next we introduce the particular class of monomial-Cartesian codes we are interested in. We will show that Theorems \ref{4-5-japo}, \ref{jap-s-3} and Corollary  \ref{BCHER} can be extended to larger lengths. According to the above results, in the rest of this subsection we consider two situations:

{\it Situation 1}, where $Z_1$ and the first projection of $\Gamma$ are as in Subsection \ref{41}, and

{\it Situation 2}, where the same happens with respect to Subsection \ref{42}.

Denote by $\mathfrak{n}$ the value $n$ considered in Situation 1 and $\mathfrak{n} =\mathfrak{q}-1$ in Situation 2.  Recall that, when we use Hermitian duality $\mathfrak{q} = q^2$ and $\mathfrak{q}$ is $q$ otherwise.

The following lemma considers the above notation, in particular that in Subsection \ref{22}. It will be shown that, when considering certain homothetic evaluation codes, one can replace the set $\mathcal{A}$ with a truncated set
\[
\mathcal{A}_\lambda := \{ i\, n \mid 0 \le i \le \lambda - 1 \}.
\]

\begin{lem}
\label{lemauno}
Let $B$ be a subset of $\mathbb{Z}_n$, then the following equality of homothetic evaluation codes holds:
\[
\mathcal{H}_{\mathcal{A} + B}^P = \mathcal{H}_{\mathcal{A}_\lambda + B}^P.
\]
\end{lem}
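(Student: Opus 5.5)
Since $\mathcal{A}_\lambda \subseteq \mathcal{A}$, the inclusion $\mathcal{H}_{\mathcal{A}_\lambda + B}^P \subseteq \mathcal{H}_{\mathcal{A} + B}^P$ is immediate. The plan is to prove the reverse inclusion by showing that every generator $\mathrm{ev}^P(X^{jn+b})$ with $0 \le j \le \frac{N}{n}-1$ and $b \in B$ is an $\mathbb{F}_{q^s}$-linear combination of the vectors $\mathrm{ev}^P(X^{in+b})$ with $0 \le i \le \lambda-1$ and the same $b$. Both codes are spans of evaluation vectors, so this suffices. Exponents are read as integers in $\mathbb{Z}_N$; since $a^N = 1$ and every point of $P$ is a power of $a$, reduction modulo $N$ does not affect evaluation at $P$.

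The key step is a factorization. For a point $\gamma = a^k \zeta_n^l \in P$ with $0 \le k \le \lambda-1$, we have $\gamma^{jn+b} = (\gamma^n)^j \gamma^b = (a^{kn})^j \gamma^b$, because $\zeta_n^n = 1$. So $\mathrm{ev}^P(X^{jn+b})$ is the componentwise product of $\mathrm{ev}^P(X^b)$ with the vector $\mathrm{ev}^P(Y^j)$, where $Y := X^n$ takes the value $c_k := a^{kn}$ on the $k$-th block of $P$. The values $c_0, \ldots, c_{\lambda-1}$ are pairwise distinct, since $a$ has order $N$ and $kn < N$ for $k < \lambda \le \frac{N}{n}$.

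Next, interpolate in $Y$. The function $k \mapsto c_k^j$ on the $\lambda$ distinct points $c_k$ agrees with a polynomial $g_j(Y) = \sum_{i=0}^{\lambda-1} g_{j,i} Y^i$ of degree at most $\lambda-1$, by Lagrange interpolation. Equivalently, $X^{jn} \equiv g_j(X^n)$ modulo the ideal $I = \langle \prod_{k=0}^{\lambda-1}(X^n - a^{kn}) \rangle$. Multiplying by $X^b$ gives
\[
\mathrm{ev}^P(X^{jn+b}) = \sum_{i=0}^{\lambda-1} g_{j,i}\, \mathrm{ev}^P(X^{in+b}),
\]
and each $in + b$ lies in $\mathcal{A}_\lambda + B$. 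This yields the reverse inclusion.

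The only delicate point is the bookkeeping of representatives. An element of $\mathcal{A} + B$ may have been reduced modulo $N$. It may also happen that $in+b$ with $b \in \mathbb{Z}_n$ exceeds $N-1$; this cannot occur for $i \le \lambda - 1 < \frac{N}{n}$, but it should be checked. The point is to argue via $a^N = 1$ that evaluation on $P$ depends only on the class modulo $N$. No result stated earlier in the paper is needed beyond Definition \ref{definit}.
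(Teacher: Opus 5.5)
Your proof is correct, but it takes a different route from the paper's.

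The paper argues by a dimension count:
\begin{enumerate}
\item It projects onto the $\lambda$ blocks $P_j=\{a^j\zeta_n^k : 0\le k\le n-1\}$ and observes that each projection of $\mathcal{H}_{\mathcal{A}+B}^P$ has dimension $\#B$.
\item This gives $\dim \mathcal{H}_{\mathcal{A}+B}^P \le \lambda\#B$.
\item It shows $\dim \mathcal{H}_{\mathcal{A}_\lambda+B}^P=\lambda\#B$ from the nonsingularity of the Vandermonde matrix $(\omega^{ji})_{0\le i,j\le\lambda-1}$, where $\omega=a^n$.
\item It concludes from the obvious inclusion together with equality of dimensions.
\end{enumerate}

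You instead prove the reverse inclusion directly, one generator at a time. You use Lagrange interpolation in $Y=X^n$ on the $\lambda$ distinct values $a^{kn}$; equivalently, $X^{jn}$ reduces modulo $I$ to a polynomial in $X^n$ of degree less than $\lambda$.

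Your argument is more elementary and more general. It needs no linear independence of any family of evaluation vectors, and it works for an arbitrary exponent set $B$ and every $j\ge 0$. The bookkeeping worry in your last paragraph is moot: for $b\in\mathbb{Z}_n$ and $0\le j\le \frac{N}{n}-1$ one has $jn+b\le N-1$, so no reduction modulo $N$ ever occurs.

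What the paper's route buys is the by-product $\dim\mathcal{H}_{\mathcal{A}_\lambda+B}^P=\lambda\#B$. This is used later in the monomial-Cartesian construction, for instance in $\#\Delta+\#\Delta^\perp=\lambda n$ and in the stated dimensions. You can recover it cheaply in your framework:
\begin{itemize}
\item $\mathrm{ev}^P$ is an isomorphism $\mathbb{F}_{q^s}[X]/I\to\mathbb{F}_{q^s}^{\lambda n}$, because the generator of $I$ has degree $\lambda n$ and its roots are exactly the $\lambda n$ distinct points of $P$.
\item For $B\subseteq\mathbb{Z}_n$, the exponents $in+b$ with $0\le i\le\lambda-1$ are pairwise distinct and smaller than $\lambda n$, so the corresponding monomials are independent modulo $I$.
\end{itemize}
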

\begin{proof}
Recall that $\lambda < \frac{N}{n}$ such that the block length $\lambda n$ does not divide $N$. $P$ can be partitioned as
\[
P = \bigcup_{j=0}^{\lambda-1} P_j, \quad \text{where } P_j = \{a^j \zeta_n^k \mid 0 \le k \le n-1\}.
\]
Let $\pi_{P_j}: \mathbb{F}_{q^s}^{\lambda n} \to \mathbb{F}_{q^s}^n$ denote the projection map onto the block $P_j$.
Let $0 \le i \le \frac{N}{n}-1$ and $b \in B$, then
\[
\pi_{P_j}\left(\mathrm{ev}^P(X^{in+b})\right) = \left( (a^j \zeta_n^k)^{in+b} \right)_{k=0}^{n-1} = \omega^{ji} a^{jb} \cdot \mathrm{ev}^{P_0}(X^b),
\]
where $\omega = a^n$ is a primitive $\frac{N}{n}$-th root of unity.
This shows that
\[
\pi_{P_j}\left(\mathcal{H}_{\mathcal{A}+B}^P\right) = \pi_{P_j}\left(\mathcal{H}_{\mathcal{A}_\lambda+B}^P\right) = a^{jb} \cdot \mathcal{H}_B^{P_0}
\]
and thus $\dim_{\mathbb{F}_{q^s}}\left(\pi_{P_j}(\mathcal{H}_{\mathcal{A}+B}^P)\right) = \#B$. Therefore,
\begin{equation}
\label{RYU1}
\dim_{\mathbb{F}_{q^s}}\left(\mathcal{H}_{\mathcal{A}+B}^P\right) \le \sum_{j=0}^{\lambda-1} \dim_{\mathbb{F}_{q^s}}\left(\pi_{P_j}(\mathcal{H}_{\mathcal{A}+B}^P)\right) = \lambda \#B.
\end{equation}
Now,
\[
\dim_{\mathbb{F}_{q^s}}\left(\mathcal{H}_{\mathcal{A}_\lambda+B}^P\right) = \lambda \#B
\]
because of the non-singularity of the Vandermonde matrix $M = (\omega^{j i})_{0 \le j, i \le \lambda-1}$ corresponding to $\mathcal{H}_{\mathcal{A}_\lambda+B}^P$.

Finally, the fact that $\mathcal{H}_{\mathcal{A}_\lambda+B}^P \subseteq \mathcal{H}_{\mathcal{A}+B}^P$ along with \eqref{RYU1} forces the equality of dimensions and spaces, which concludes the proof.
\end{proof}

Now, let $R$ be as in Subsections \ref{41} and \ref{42} and define $Z_1 := R$, then $\# Z_1 = \lambda \mathfrak{n}$ for positive integers $\mathfrak{n}$ and $\lambda$ such that $\lambda \leq \frac{N}{n}$. Take positive integers $2 \leq n_\ell < \mathfrak{q}$, $2 \leq \ell \leq w$,  and consider subsets $Z_\ell$, $2 \leq \ell \leq w$, in $\mathbb{F}_{\mathfrak{q}}$; that is, the sets $Z_\ell$ (with the exception of $Z_1$) have no elements in $\mathbb{F} \setminus \mathbb{F}_{\mathfrak{q}}$.

From this point through the end of the subsection, let $\Delta$ denote the set $\mathcal{A}_\lambda+B$, where $B$ stands for  the set $B(u)$ in Situation 1 and $B'(v)$ in Situation 2. Note that by Lemma \ref{lemauno}, one can consider $\mathcal{A}_\lambda$ instead of $\mathcal{A}$, which remains valid even when $\lambda n$ divides $N$ since, in this case, we can set $N = \lambda n$.

In both situations, we set
\[
\Gamma = \Delta \times \{0, 1, \dots, n_2 -1\} \times \cdots \times \{0, 1, \dots, n_w -1\}.
\]
As before,
\[
Z = Z_1 \times \cdots \times Z_w \quad \text{and} \quad E = \{0, 1, \dots, \lambda \mathfrak{n} -1\} \times \{0, 1, \dots, n_2-1\} \times \cdots \times \{0, 1, \dots, n_w -1\}.
\]
Previously, we proved that $\mathcal{C}_\Delta^{Z_1}$ is self-orthogonal with respect to the Hermitian dual and, in the second setting, the Euclidean dual. Consequently, we have
\[
\mathcal{C}^{Z_1}_{\Delta^\perp} = \left( \mathcal{C}_\Delta^{Z_1} \right)^\perp \supseteq \mathcal{C}_\Delta^{Z_1},
\]
where $\perp$ denotes, depending on the case, either the Hermitian dual $\perp_h$ or the Euclidean dual $\perp_e$ and

\begin{align}
  \label{RYU2-1}
  \Delta^\perp &:= \mathcal{A}_\lambda + \left(\{0, 1, \ldots, n-1\} \setminus -B(u)^q\right) \quad \text{in Situation 1}, \\
  \label{RYU2-2}
  \Delta^\perp &:= \mathcal{A}_\lambda + \left(\{0, 1, \ldots, n-1\} \setminus -B'(v)\right) \quad \text{in Situation 2}.
\end{align}

Note that when the second equality corresponds to the Hermitian case, $-B'(v)$ means $-B'(v)^q$.

Definitions (\ref{RYU2-1}) and (\ref{RYU2-2}) of $\Delta^\perp$ are motivated by the fact that the Hermitian or Euclidean inner product of the evaluations of the monomials $X_1^{\kappa n+\iota}$ and $X_1^{\xi n-\iota}$, where $0 \le \kappa, \xi \le \lambda - 1$ and $\iota \in B$, does not vanish. This highlights the exact exponents that must be eliminated when computing the dual. Therefore, to obtain the dual, by dimensionality arguments, we must consider the exponents indicated in \eqref{RYU2-1} and \eqref{RYU2-2}.

At this stage, we set
\[
\Gamma^\perp = \Delta^\perp \times \{0, 1, \dots, n_2 - 1\} \times \cdots \times \{0, 1, \dots, n_w - 1\}.
\]
Consequently, we obtain $\left(\mathcal{C}^{Z}_{\Gamma}\right)^\perp = \mathcal{C}^Z_{\Gamma^\perp}$. This identity holds because $\# \Delta + \#\Delta^\perp = \lambda n$ and, for $\mathbf{e} \in \Gamma$, $e'_1 \in \Delta^\perp$, and $e'_i \in \{0, 1, \ldots, n_i - 1\}$, $2 \leq i \leq w$, the (Hermitian or Euclidean) inner product
\[
\mathrm{ev}^Z (X^\mathbf{e}) \cdot \mathrm{ev}^Z (X_1^{e'_1} X_2^{e'_2} \cdots X_w^{e'_w})
\]
vanishes whenever one of its constituent factors (involving sums of powers of roots) is zero. This is guaranteed since $\mathrm{ev}^{Z_1} \left(X_1^{e_1}\right) \cdot \mathrm{ev}^{Z_1} (X_1^{e'_1}) = 0$ for $e_1 \in \Delta$ and $e'_1 \in \Delta^\perp$.

According to \cite[Proposition 3.4]{GFMC}, the length of $\mathcal{C}_\Gamma^Z$ is $\lambda \mathfrak{n}n_2 \cdots n_w$, and its dimension is $2\lambda un_2 \dots n_w$ in Situation 1, and $\lambda v n_2 \dots n_w$ in Situation 2. Furthermore, the dimension of $\mathcal{C}^Z_{\Gamma^\perp}$ is as expected.

Let $\mathcal{S}_\Gamma^{Z,\mathfrak{q}}$ denote the subfield-subcode $\mathcal{C}_\Gamma^Z \cap \mathbb{F}_{\mathfrak{q}}^{\lambda \mathfrak{n}n_2 \dots n_w}$. The fact that $z_\ell \in \mathbb{F}_{\mathfrak{q}}$ for $2 \leq \ell \leq w$ implies that $\mathcal{S}_\Gamma^{Z,\mathfrak{q}}$ is uniquely determined by $\mathcal{S}_\Delta^{R,\mathfrak{q}}$.

Now, following the approach of \cite[Theorem 4 and Section 4]{QINP} for the Euclidean case and \cite[proof of Theorem 7]{QINP} for the Hermitian case, we obtain:
\[
\left( \mathcal{S}_\Gamma^{Z,\mathfrak{q}} \right)^{\perp} = \mathbf{tr} \left[ \left(\mathcal{C}_\Gamma^Z\right)^{\perp} \right] = \mathbf{tr} \left(\mathcal{C}_{\Gamma^{\perp}}^Z\right) = \mathcal{S}_{\Gamma^{\perp}}^{Z,\mathfrak{q}},
\]
where $\mathbf{tr}$ denotes the trace map applied componentwise, and $\mathcal{S}_{\Gamma^{\perp}}^{Z,\mathfrak{q}}$ represents the subfield-subcode of $\mathcal{C}_{\Gamma^{\perp}}^Z$ over $\mathbb{F}_{\mathfrak{q}}$.

The code $\mathcal{C}_{\Gamma^{\perp_h}}^Z$ has length $\lambda \mathfrak{n}n_2 \cdots n_w$ and dimension $\lambda \mathfrak{n}n_2 \cdots n_w - 2\lambda u n_2 \cdots n_w$ in Situation 1, and $\mathcal{C}_{\Gamma^{\perp_e}}^Z$ and
$\mathcal{C}_{\Gamma^{\perp_h}}^Z$ have  length $\lambda \mathfrak{n}n_2 \cdots n_w$ and dimension $\lambda \mathfrak{n}n_2 \cdots n_w - \lambda v n_2 \cdots n_w$ in Situation 2.

With respect to distances, setting $\perp$ as above, one can apply \cite{LMS2020} (see also \cite{Soprunov}) to deduce that
\[
\dis \left( \mathcal{C}_{\Gamma^{\perp}}^Z \right) \geq \dis \left( \mathcal{C}^{Z_1}_{\Delta^\perp} \right) \dis \left( \mathcal{C}_{\{0, 1, \ldots, n_2 -1\}}^{Z_2} \right) \cdots \dis \left( \mathcal{C}_{\{0, 1, \ldots, n_w -1\}}^{Z_w} \right)
\]
and, as a consequence, $\dis \left( \mathcal{C}_{\Gamma^{\perp}}^Z \right) \geq 2u+1$ in Situation 1, and $\dis \left( \mathcal{C}_{\Gamma^{\perp}}^Z \right) \geq v+1$ in Situation 2. Therefore,
we deduce that the parameters of $\left( \mathcal{S}_\Gamma^{Z,\mathfrak{q}} \right)^{\perp_h}$ and $\left( \mathcal{S}_\Gamma^{Z,\mathfrak{q}} \right)^{\perp_e}$ in the previously studied situations are
$$[\lambda \mathfrak{n}n_2 \cdots n_w, \lambda \mathfrak{n}n_2 \cdots n_w - \lambda 2u n_2 \cdots n_w, 2u+1]_{\mathfrak{q}}$$
and
$$[\lambda \mathfrak{n}n_2 \cdots n_w, \lambda \mathfrak{n}n_2 \cdots n_w - \lambda v n_2 \cdots n_w, v+1]_{\mathfrak{q}}.$$

Finally, the fact that $Z_\ell \subseteq \mathbb{F}_{\mathfrak{q}}$, $2 \leq \ell\leq w$, shows that $\left( \mathcal{S}_\Gamma^{Z,\mathfrak{q}} \right)^{\perp_h}$ and $\left( \mathcal{S}_\Gamma^{Z,\mathfrak{q}} \right)^{\perp_e}$ inherit the dual containment property and $(r,\delta)$-locality of $\left( \mathcal{S}_\Delta^{P,\mathfrak{q}} \right)^{\perp_h}$ and $\left( \mathcal{S}_\Delta^{P,\mathfrak{q}} \right)^{\perp_e}$. Therefore, Theorems \ref{4-5-japo} and \ref{jap-s-3}, and Corollary  \ref{BCHER} prove the next result, where we recall that $n$ divides $N$ and it divides $q^s -1$.

\begin{teo}
\label{teomonomial}
The following statements hold:
\begin{description}
\item[A] Suppose that $s=2\varsigma$ and $n \geq 3$ is odd and divides $q^2 +1$. Take positive integers $n_\ell$, $2 \leq n_\ell < q^2, 2 \leq \ell \leq w$ and  $u \leq \frac{m_1 +m_2}{2}$, for values $m_1$ and $m_2$ as in Proposition \ref{sumsquare}. Then,  there exists an optimal pure quantum $(r=n-2u, \delta=2u+1)$-locally recoverable code with parameters
$$[[\lambda n n_2 \cdots n_w, \lambda n n_2 \cdots n_w - 4 \lambda u n_2 \cdots n_w, 2u+1]]_{q}.$$
\item[B] Set $n=q-1$.  Take positive integers $n_\ell$, $2 \leq n_\ell < q, 2 \leq \ell \leq w$ and $v \leq \frac{q-2}{2}$. Then,  there exists an optimal pure quantum $(r=q-1-v, \delta=v+1)$-locally recoverable code with parameters
$$[[\lambda n n_2 \cdots n_w, \lambda n n_2 \cdots n_w - 2 \lambda v n_2 \cdots n_w, v+1]]_{q}.$$
\item[C] Suppose $s=2\varsigma > 2$ is a positive integer and set $n=q^2-1$.  Take positive integers $n_\ell$, $2 \leq n_\ell < q^2, 2 \leq \ell \leq w$ and $v \leq \frac{q^2-2}{q+1}$. Then,  there exists an optimal pure quantum $(r=q^2-1-v, \delta=v+1)$-locally recoverable code with parameters
$$[[\lambda n n_2 \cdots n_w, \lambda n n_2 \cdots n_w - 2 \lambda v n_2 \cdots n_w, v+1]]_{q}.$$
The above result is also true for $q$ odd, $1 \leq v \leq 2q-3$ and $\lambda =2$.
\end{description}
\end{teo}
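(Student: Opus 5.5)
The plan is to assemble Theorem \ref{teomonomial} from the discussion preceding it together with Theorems \ref{4-5-japo}, \ref{jap-s-3}, Corollary \ref{BCHER} and the Singleton-like bound (\ref{PURE}). For each item I will take $C := \left(\mathcal{S}_\Gamma^{Z,\mathfrak{q}}\right)^{\perp}$, with $\perp=\perp_h$ in \textbf{A} and \textbf{C} and $\perp=\perp_e$ in \textbf{B}. The first projection of $\Gamma$ is $\Delta = \mathcal{A}_\lambda + B$, where $B=B(u)$ in \textbf{A} and $B=B'(v)$ in \textbf{B}, \textbf{C}. I will then check four things in order: dual containment, the parameters of $C$, locality, and purity together with equality in (\ref{PURE}).

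\emph{Dual containment.} By Lemma \ref{lemauno}, the code $\mathcal{C}^{Z_1}_\Delta$ is the homothetic code with exponent set $\mathcal{A}+B$. It is self-orthogonal under the relevant hypothesis on $u$ or $v$: Lemma \ref{lem:selforthogonal} handles $u\le (m_1+m_2)/2$; for \textbf{B} and \textbf{C} the conditions $2v\le n-1$ and $(q+1)v\le n-1$ give $B'(v)\cap(-B'(v))=\emptyset$, respectively $B'(v)\cap(-B'(v)^q)=\emptyset$, and then Proposition \ref{era5} applies. The extra case $\lambda=2$, $v\le 2q-3$ comes from \cite[Theorem A]{HBCH}. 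The duality $(\mathcal{C}_\Gamma^Z)^\perp=\mathcal{C}_{\Gamma^\perp}^Z$ factors along the coordinates. Since $\Delta\subseteq\Delta^\perp$, we get $\Gamma\subseteq\Gamma^\perp$, hence $\mathcal{S}_\Gamma^{Z,\mathfrak{q}}\subseteq \mathcal{S}_{\Gamma^\perp}^{Z,\mathfrak{q}}=C$. Therefore $C$ is dual-containing and Theorem \ref{resto} produces the quantum code.

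\emph{Parameters and locality.} Write $M=n_2\cdots n_w$. The preceding discussion gives $C$ length $\lambda\mathfrak{n}M$ and dimension $\lambda\mathfrak{n}M-\lambda\beta M$, where $\beta=2u$ or $\beta=v$. The quantum dimension is therefore $2\dim C-\lambda\mathfrak{n}M=\lambda\mathfrak{n}M-2\lambda\beta M$. The distance bound $\dis(C)\ge \beta+1$ follows from the product bound of \cite{LMS2020}, since each factor $\mathcal{C}^{Z_\ell}_{\{0,\dots,n_\ell-1\}}$ is the full space and has distance $1$.

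For locality, fix a coordinate $(z_1,z_2,\dots,z_w)$. Its recovery set will be $J\times\{(z_2,\dots,z_w)\}$, where $J$ is the recovery set in $Z_1$ of size $\mathfrak{n}$ from Theorems \ref{eval} and \ref{mats-2}. Restricting codewords of $C$ to the fiber with fixed $z_2,\dots,z_w$ lands in the one-variable code $\left(\mathcal{S}_\Delta^{R,\mathfrak{q}}\right)^\perp$; this uses that the $Z_\ell$ lie in $\mathbb{F}_\mathfrak{q}$ and that $\Gamma^\perp$ is a product with full factors. Consequently the punctured code at $J$ has distance at least $\delta=\beta+1$, giving locality $(\mathfrak{n}-\beta,\beta+1)$. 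Dual containment in Theorem \ref{quantumH-E}, with $\delta\le\dis(C)$, then transfers this to a quantum $(r,\delta)$-LRC.

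\emph{Purity and optimality.} This is the step I expect to need the most care. Purity requires $\dis(\mathcal{S}_\Gamma^{Z,\mathfrak{q}})>\beta+1$. I would bound it from below by $\dis(\mathcal{C}_\Delta^{Z_1})\cdot 1$ via the same product bound. For the first factor, a univariate polynomial of degree at most $\max\Delta<\lambda\mathfrak{n}$ has few roots, as in the proof of Theorem \ref{4-5-japo}. This is delicate because $\mathcal{A}_\lambda$ pushes the degree up to roughly $(\lambda-1)n+\max B$. If the root-counting argument is too weak, I would fall back on the block structure: the restriction to each $P_j$ is an $\mathcal{H}_B^{P_0}$ codeword, whose weight is at least $n-\max B$, and this exceeds $\beta+1$ in all three ranges.

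Finally, with $r=\mathfrak{n}-\beta$, $k=\lambda\mathfrak{n}M-2\lambda\beta M$ and length $m=\lambda\mathfrak{n}M$, we have $(m+k)/2=\lambda(\mathfrak{n}-\beta)M$. Hence $\lceil (m+k)/(2r)\rceil=\lambda M$, and the left side of (\ref{PURE}) equals
\[
k+2(\beta+1)+2(\lambda M-1)\beta=\lambda\mathfrak{n}M+2=m+2,
\]
so equality holds. Equality also forces the distance to be exactly $\beta+1$, since (\ref{PURE}) gives the upper bound $\dis\le\beta+1$, so the code is optimal.
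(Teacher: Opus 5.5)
Your outline follows the paper's route: reduce to the one-variable code on $Z_1=R$, use $\Gamma^\perp$ and the trace identity, apply the product bound, inherit locality blockwise, and check the Singleton computation, which you do correctly. Your fibre argument for the recovery sets $J\times\{(z_2,\dots,z_w)\}$ is a valid way to make the paper's ``inheritance'' explicit. Indeed $\mathcal{C}^Z_\Gamma$ is just $M$ interleaved copies of $\mathcal{C}^{Z_1}_\Delta$, so that argument does not even need $Z_\ell\subseteq\mathbb{F}_{\mathfrak{q}}$.

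\textbf{Purity fails as written.} The bound $n-\max B$ is what both root counting and your block fallback give, and it does not exceed $\beta+1$ in all ranges.
In \textbf{A} with $n=5$, $u=1$ (e.g.\ $q=2$) you have $B(1)=\{2,3\}$ and $n-\max B=2<3$. In \textbf{B} with $q$ even and $v=(q-2)/2$ you only get $n-\max B=v+1$.
The repair: on each block the restricted polynomial is $X^{\min B}g(X)$ with $\deg g\le\max B-\min B$, and $X^{\min B}$ does not vanish on $P_j$. So a nonzero block has weight at least $n-\max B+\min B$, that is, $n-2u+1$ in \textbf{A} and $\mathfrak{n}-v+1$ in \textbf{B}, \textbf{C}.
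These do exceed $\beta+1$. In \textbf{A}, $m_1+m_2$ is odd, so $4u\le 2(m_1+m_2)-2<m_1^2+m_2^2=n$. In \textbf{B} and \textbf{C}, $2v<\mathfrak{n}$.
The paper's own purity argument in Theorem~\ref{4-5-japo} uses degree $\frac{n-1}{2}+u$ and so overlooks the shift by $\mathcal{A}$; you were right to distrust plain root counting.

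\textbf{The extra clause of C ($q$ odd, $\lambda=2$, $v\le 2q-3$) is not covered.} Since $\lfloor\frac{q^2-2}{q+1}\rfloor=q-2$, the new range is $q-1\le v\le 2q-3$.
There $q-1\in B'(v)$ and $(q-1)+q(q-1)=n$, so $B'(v)\cap(-B'(v)^q)\neq\emptyset$ and $\Delta\not\subseteq\Delta^\perp$. Your deduction $\Gamma\subseteq\Gamma^\perp$ therefore collapses. You would have to import self-orthogonality of $\mathcal{S}^{R,q^2}_{\Delta'(v)}$ from Corollary~\ref{BCHER} and pass it to the $M$ copies directly.

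Even this import looks untenable.
\begin{itemize}
\item Since $\dim\mathcal{S}^{R,q^2}_{\Delta'(v)}=\lambda\#B=\dim\mathcal{H}^R_{\Delta'(v)}$, an $\mathbb{F}_{q^2}$-basis of the former spans the latter.
\item The form $\sum_i x_iy_i^q$ is sesquilinear, so Hermitian self-orthogonality would force $\sum_{x\in R}x^{e+qe'}=0$ for all $e,e'\in\Delta'(v)$.
\item For $e=e'=q-1$ this sum is $n(1+a^n)$, which is nonzero unless $N=2n$.
\item If $N=2n$, the pair $e=q-1$, $e'=n+q-1$ gives $2n\neq0$.
\end{itemize}
So, as far as one can check, this clause is established neither by your proposal nor by the paper's appeal to Corollary~\ref{BCHER}.
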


We conclude with six tables, Tables \ref{tab:quantum-lrcs1} to \ref{tab:quantum-lrcs6}. These tables compile key parameters of optimal pure quantum $(r, \delta)$-LRCs, including the qudit dimension, locality, length, number of information qudits, maximum length, and the value of $r+\delta -1$.  Table \ref{tab:quantum-lrcs1} presents the results established in this paper, while the remaining tables summarize findings from the cited literature. The original notation from each respective reference has been preserved throughout. Note that in Table \ref{tab:quantum-lrcs3}, $\square_q$ represents the set of quadratic residues (squares) in the finite field $\mathbb{F}_q$.

\begin{landscape}
\begin{table}[p]
    \centering

    \caption{Parameters of optimal pure quantum $(r,\delta)$-LRCs extracted from this paper}
    \label{tab:quantum-lrcs1}
    \scalebox{0.75}{
    \begin{tabular}{ccccccccc}
\toprule
Ref. & \begin{tabular}[c]{@{}c@{}}Qudit\\ Dim ($q$)\end{tabular} & $r$ & $\delta$ & \begin{tabular}[c]{@{}c@{}}Code\\ Length\end{tabular} & \begin{tabular}[c]{@{}c@{}}Information\\ Qudits ($k$)\end{tabular} & \begin{tabular}[c]{@{}c@{}}Max Code\\ Length\end{tabular} & \begin{tabular}[c]{@{}c@{}}Max\\ $r+\delta-1$\end{tabular} & Parameter Constraints / Symbol Ranges \\
\midrule
Th. 30 & $q$ & $n+1-\delta$ & $3, 5, \dots$ & $\lambda n$ & $\lambda(n+2-2\delta)$ & unbounded & $q^2+1$ & $\lambda \in \mathbb{N}^+ \land n = m_1^2 + m_2^2 \land m_i \in \mathbb{N}^+ \land \delta \leq m_1 + m_2$ \\
Th. 31(1) & $q$ & $q-\delta$ & $2, 3, \dots$ & $\lambda(q-1)$ & $\lambda(q+1-2\delta)$ & unbounded & $q-1$ & $\lambda \in \mathbb{N}^+ \land \delta \le \lfloor q/2 \rfloor$ \\
Th. 31(2) & $q$ & $q^2-\delta$ & $2, 3, \dots$ & $\lambda(q^2-1)$ & $\lambda(q^2+1-2\delta)$ & unbounded & $q^2-1$ & $\lambda \in \mathbb{N}^+ \land \delta \le q-1$ \\
Cor. 32 & $q$ odd & $q^2-\delta$ & $2, 3, \dots$ & $2(q^2-1)$ & $2(q^2+1-2\delta)$ & $2(q^2-1)$ & $q^2-1$ & $\delta \le 2q-2$ \\
Th. 34 A & $q$ & $n+1-\delta$ & $3, 5, \dots$ & $\lambda n_2 \cdots n_w n$ & $\lambda n_2 \cdots n_w (n+2-2\delta)$ & unbounded & $q^2+1$ & $\lambda, n_2, \dots, n_w \in \mathbb{N}^+ \land n = m_1^2 + m_2^2 \land \delta \leq m_1 + m_2$  \\
Th. 34 B & $q$ & $q-\delta$ & $2, 3, \dots$ & $\lambda n_2 \cdots n_w(q-1)$ & $\lambda n_2 \cdots n_w(q+1-2\delta)$ & unbounded & $q-1$ & $\lambda, n_2, \dots, n_w \in \mathbb{N}^+ \land \delta \le \lfloor q/2 \rfloor$ \\
Th. 34 C & $q$ & $q^2-\delta$ & $2, 3, \dots$ & $\lambda n_2 \cdots n_w(q^2-1)$ & $\lambda n_2 \cdots n_w(q^2+1-2\delta)$ & unbounded & $q^2-1$ & $\lambda, n_2, \dots, n_w \in \mathbb{N}^+ \land \delta \le q-1$ \\
Th. 34 C &  $q$ odd & $q^2-\delta$ & $2, 3, \dots$ & $2n_2 \cdots n_w(q^2-1)$ & $2n_2 \cdots n_w(q^2+1-2\delta)$ & unbounded & $q^2-1$ & $n_2, \dots, n_w \in \mathbb{N}^+ \land \delta \le 2q-2$ \\
\bottomrule
\end{tabular}
    }

    \vspace{1cm} 

    \caption{Parameters of optimal pure quantum $(r,\delta)$-LRCs extracted from \cite{QMP-3}}
    \label{tab:quantum-lrcs2}
    \scalebox{0.75}{
    \begin{tabular}{ccccccccc}
\toprule
Ref. & \begin{tabular}[c]{@{}c@{}}Qudit\\ Dim ($q$)\end{tabular} & $r$ & $\delta$ & \begin{tabular}[c]{@{}c@{}}Code\\ Length\end{tabular} & \begin{tabular}[c]{@{}c@{}}Information\\ Qudits ($k$)\end{tabular} & \begin{tabular}[c]{@{}c@{}}Max Code\\ Length\end{tabular} & \begin{tabular}[c]{@{}c@{}}Max\\ $r+\delta-1$\end{tabular} & Parameter Constraints / Symbol Ranges \\
\midrule
Th. 5.1 & $q \ge 5$ & $q-1-u$ & $u+1$ & $t(q-1)$ & $t(q-1)-2tu-2v$ & $(q-1)^2$ & $q-1$ & $u, t \in \mathbb{N}^+ \land v \in \mathbb{N} \land v \le u \land u+v \le \lfloor\frac{q-1}{2}\rfloor - 1$ \\
Th. 5.4 & $q \ge 7$ & $q+v-s-1$ & $s+1$ & $t(2q+v-2)$ & $t(2q+v-2)-4st-2$ & $\lfloor\frac{q+2}{6}\rfloor(2q+4)$ & $2q-2$ & $s, v, t \in \mathbb{N}^+ \land v \mid (q-1) \land v \ge 6 \land s \le \lfloor\frac{v}{2}\rfloor - 2 \land vt \le q+v-s-3$ \\
Th. 5.6 & $q \ge 9$ & $q+v-s-1$ & $s+1$ & $t(2q+v-2)$ & $t(2q+v-2)-4st-4$ & $\lfloor\frac{q+3}{8}\rfloor(2q+6)$ & $2q-2$ & $s, v, t \in \mathbb{N}^+ \land l \in \mathbb{N} \land v \mid (q-1) \land v \ge 8 \land s+l \le \lfloor\frac{v}{2}\rfloor - 1 \land l \le \lfloor\frac{s}{2}\rfloor + 1$ \\
\bottomrule
\end{tabular}
    }

 \vspace{1cm}

    \caption{Parameters of optimal pure quantum $(r,\delta)$-LRCs extracted from \cite{QMP-1}}
    \label{tab:quantum-lrcs3}
    \scalebox{0.75}{
   \begin{tabular}{ccccccccc}
\toprule
Ref. & \begin{tabular}[c]{@{}c@{}}Qudit\\ Dim ($q$)\end{tabular} & $r$ & $\delta$ & \begin{tabular}[c]{@{}c@{}}Code\\ Length\end{tabular} & \begin{tabular}[c]{@{}c@{}}Information\\ Qudits ($k$)\end{tabular} & \begin{tabular}[c]{@{}c@{}}Max Code\\ Length\end{tabular} & \begin{tabular}[c]{@{}c@{}}Max\\ $r+\delta-1$\end{tabular} & Parameter Constraints / Symbol Ranges \\
\midrule
Cor. 37 & odd $q$ & $q-i$ & $i+1$ & $q^2$ & $2((q-i)(q-1)+(q-j))-q^2$ & $q^2$ & $q$ & $i, j \in \mathbb{N}^+ \land \frac{j-1}{2} \le i \le j < \frac{q}{2}$ \\
Cor. 38 & odd $q$ & $h-i$ & $i+1$ & $h^2$ & $2((h-i)(h-1)+(h-j))-h^2$ & $q^2$ & $q$ & $h, i, j \in \mathbb{N}^+ \land (h-1) \mid (q-1) \land (1-h) \in \square_q \land \frac{j-1}{2} \le i \le j < \frac{h}{2}$ \\
Th. 39 & odd $q$ & $q-t$ & $t+1$ & $q^2$ & $2(q(q-t)-(d-t-1))-q^2$ & $q^2$ & $q$ & $t, d \in \mathbb{N}^+ \land t < \frac{q}{2} \land t+1 \le d \le \min\{2(t+1), \frac{q^2-t(2q-2)+2}{2}\}$ \\
Th. 40 & odd $q$ & $h-t$ & $t+1$ & $mh$ & $2(m(h-t)-(d-t-1))-mh$ & $q^2$ & $q$ & $m, h, t, d \in \mathbb{N}^+ \land m, h \le q \land (h-1) \mid (q-1) \land t < \frac{h}{2}$ \\
Cor. 44 & odd $q$ & $q^2-a$ & $a+1$ & $q^4$ & $2((q^2-a)(q^2-1)+(q^2-b))-q^4$ & $q^4$ & $q^2$ & $a, b \in \mathbb{N} \land a, b < q-1 \land a \le b \le 2a$ \\
\bottomrule
\end{tabular}
    }
\end{table}
\end{landscape}

\begin{landscape}
\begin{table}[p]
    \centering

    \caption{Parameters of optimal pure quantum $(r,\delta)$-LRCs extracted from \cite{QMP-4}}
    \label{tab:quantum-lrcs4}
    \scalebox{0.75}{
    \begin{tabular}{ccccccccc}
\toprule
Ref. & \begin{tabular}[c]{@{}c@{}}Qudit\\ Dim ($q$)\end{tabular} & $r$ & $\delta$ & \begin{tabular}[c]{@{}c@{}}Code\\ Length\end{tabular} & \begin{tabular}[c]{@{}c@{}}Information\\ Qudits ($k$)\end{tabular} & \begin{tabular}[c]{@{}c@{}}Max Code\\ Length\end{tabular} & \begin{tabular}[c]{@{}c@{}}Max\\ $r+\delta-1$\end{tabular} & Parameter Constraints / Symbol Ranges \\
\midrule
\midrule
Th. 12 & $q$ & $k_1$ & $q-k_1+1$ & $2q$ & $2(k_1+k_2-q)$ & $2q$ & $q$ & $k_1, k_2 \in \mathbb{N}^+ \land 1 \le k_2 \le k_1 \le q-1 \land k_1+k_2 \ge q \land 2k_1-k_2 \le q+1$ \\
Th. 13 & $q > 2$ & $q^2-l_1-1$ & $l_1+2$ & $2q^2$ & $2(q^2-l_1-l_2-2)$ & $2q^2$ & $q^2$ & $l_1, l_2 \in \mathbb{N} \land 0 \le l_1 \le q-2 \land l_1 \le l_2 \le 2l_1+2$ \\
Th. 14 & $q > 2$ & $k_1$ & $q-k_1+1$ & $3q$ & $4k_1+2k_2-3q$ & $3q$ & $q$ & $k_1, k_2 \in \mathbb{N}^+ \land 1 \le k_2 \le k_1 \le q-1 \land 2k_1-k_2 \le q+1 \land (k_1+k_2 > q \lor k_1=k_2=q/2)$ \\
Th. 15 & $q > 2$ & $q^2-l_1-1$ & $l_1+2$ & $3q^2$ & $3q^2-4l_1-2l_2-6$ & $3q^2$ & $q^2$ & $l_1, l_2 \in \mathbb{N} \land 0 \le l_1 \le q-2 \land l_1 \le l_2 \le 2l_1+2$ \\
\bottomrule
\end{tabular}
    }

    \vspace{1cm} 

    \caption{Parameters of optimal pure quantum $(r,\delta)$-LRCs extracted from \cite{QMP-5}}
    \label{tab:quantum-lrcs5}
    \scalebox{0.65}{
    \begin{tabular}{ccccccccc}
\toprule
Ref. & \begin{tabular}[c]{@{}c@{}}Qudit\\ Dim ($q$)\end{tabular} & $r$ & $\delta$ & \begin{tabular}[c]{@{}c@{}}Code\\ Length\end{tabular} & \begin{tabular}[c]{@{}c@{}}Information\\ Qudits ($k$)\end{tabular} & \begin{tabular}[c]{@{}c@{}}Max Code\\ Length\end{tabular} & \begin{tabular}[c]{@{}c@{}}Max\\ $r+\delta-1$\end{tabular} & Parameter Constraints / Symbol Ranges \\
\midrule
Th. 5 & odd $q \ge 5$ & $q-1-u$ & $u+1$ & $(2m+1)(q-1)t$ & $(2m+1)(q-1)t-2(2m+1)tu-2v$ & unbounded & $q-1$ & $m, u, t \in \mathbb{N}^+ \land v \in \mathbb{N} \land 2m \mid (q^2-1) \land v \le u \land u+v \le \frac{q-3}{2}$ \\
Th. 6 & odd $q \ge 9$ & $q+v-s-1$ & $s+1$ & $(2m+1)(2q+v-2)t$ & $(2m+1)(2q+v-2)t-4(2m+1)st-2l$ & $3\lfloor\frac{q+1}{12}\rfloor(2q+2)$ & $\frac{4q-4}{3}$ & $m, s, v, t \in \mathbb{N}^+ \land l \in \mathbb{N} \land 2m \mid (q^2-1) \land v \mid (q-1) \land s+l \le \lfloor\frac{v}{2}\rfloor - 1 \land l \le \lfloor\frac{s}{2}\rfloor + 1$ \\
\bottomrule
\end{tabular}
    }

 \vspace{1cm}

    \caption{Parameters of optimal pure quantum $(r,\delta)$-LRCs extracted from \cite{QMP-2}}
    \label{tab:quantum-lrcs6}
    \scalebox{0.75}{
   \begin{tabular}{ccccccccc}
\toprule
Ref. & \begin{tabular}[c]{@{}c@{}}Qudit\\ Dim ($q$)\end{tabular} & $r$ & $\delta$ & \begin{tabular}[c]{@{}c@{}}Code\\ Length\end{tabular} & \begin{tabular}[c]{@{}c@{}}Information\\ Qudits ($k$)\end{tabular} & \begin{tabular}[c]{@{}c@{}}Max Code\\ Length\end{tabular} & \begin{tabular}[c]{@{}c@{}}Max\\ $r+\delta-1$\end{tabular} & Parameter Constraints / Symbol Ranges \\
\midrule
Th. 8 & $q \ge 5$ & $q-1-u$ & $u+1$ & $Nt(q-1)$ & $Nt(q-1)-2Ntu-2v$ & unbounded & $q-1$ & $N, u, t \in \mathbb{N}^+ \land v \in \mathbb{N} \land N \mid (q^2-1) \land N \nmid (q+1) \land v \le u \land 2u+v \le q-2$ \\
Th. 9 & $q \ge 8$ & $q+v-s-1$ & $s+1$ & $Nt(2q+v-2)$ & $Nt(2q+v-2)-4Nst-2l$ & $3\lfloor\frac{q+1}{12}\rfloor(2q+2)$ & $\frac{4q-4}{3}$ & $N, s, v, t \in \mathbb{N}^+ \land l \in \mathbb{N} \land N \mid (q^2-1) \land N \nmid (q+1) \land v \mid (q-1) \land 2s+l \le v-1$ \\
Th. 10 & odd $q \ge 5$ & $q-1-u$ & $u+1$ & $2t(q-1)$ & $2t(q-1)-4tu-2v$ & unbounded & $q-1$ & $u, t \in \mathbb{N}^+ \land v \in \mathbb{N} \land v \le u \land u+v \le \frac{q-3}{2}$ \\
Th. 12 & odd $q \ge 5$ & $m-u$ & $u+1$ & $Ntm$ & $Ntm-2Ntu-2v$ & unbounded & $q-1$ & $N, s, u, t, m \in \mathbb{N}^+ \land v \in \mathbb{N} \land q-1 = (N-1)s \land N \ge 3 \land s \ge 2 \land p \mid N \land v \le u$ \\
Th. 13 & odd $q \ge 5$ & $m-u$ & $u+1$ & $Ntm$ & $Ntm-2Ntu-2v$ & unbounded & $q-1$ & $N, s, u, t, m \in \mathbb{N}^+ \land v \in \mathbb{N} \land q-1 = (N-1)s \land N \ge 3 \land s \ge 2 \land p \nmid N \land v \le u$ \\
\bottomrule
\end{tabular}
    }
\end{table}
\end{landscape}

\section*{Data availability}
No databases were generated or analyzed during this study.

\section*{Conflict of interest}
The authors declare they have no conflict of interest.

\section*{Use of AI}
Initial proofs of Proposition \ref{sumsquare}, and Lemmas \ref{lem:selforthogonal} and \ref{lemauno}, as well as, Tables \ref{tab:quantum-lrcs2} to \ref{tab:quantum-lrcs6} were obtained with the help of AI (Google Gemini). Proofs were completely rewritten by the authors, and no AI-generated texts or equations remain in this paper.


\end{document}